\definecolor{mygray}{gray}{0.6}
\DeclareFixedFootnote{\rep}{This refers to the point with minimum value according to the function $f$.}
\newtheorem{theorem}{Theorem}
\newtheorem{observation}{Observation}
\newtheorem{lemma}[theorem]{Lemma}
\newtheorem{corollary}[theorem]{Corollary}
\newtheorem{definition}[theorem]{Definition}
\newtheorem{remark}{Remark}
\newtheorem{assumption}{Assumption}
\renewcommand{\Vec}[1]{\mathbf{#1}}
\newcommand\flsdescent{\texttt{FLSD}}
\newcommand\dacs{\texttt{DACS}}
\newcommand{\rank}{\mathrm{rank}}
\newcommand{\tablecell}[2]{\begin{tabular}{@{}#1@{}}#2\end{tabular}}
\definecolor{arsenic}{rgb}{0.23, 0.27, 0.29}
\definecolor{cadet}{rgb}{0.33, 0.41, 0.47}
\definecolor{airforceblue}{rgb}{0.36, 0.54, 0.66}
\definecolor{midnightblue}{rgb}{0.1, 0.1, 0.44}
\begin{document}

\title{The Query Complexity of Local Search and Brouwer in Rounds}

\titlemark{Local Search and Brouwer in Rounds}
\emsauthor{1}{Simina Br\^anzei}{S.~Br\^anzei}
\emsauthor{2}{Jiawei Li}{J.~Li}

\emsaffil{1}{Postal address: Purdue University, Department of Computer Science \email{simina.branzei@gmail.com}}

\emsaffil{2}{Postal address: University of Texas  at Austin, Department of Computer Science \email{davidlee@cs.utexas.edu}}

\classification[60Gxx]{91Axx}

\keywords{local minima, local search, grid, rounds of adaptivity, query complexity, Brouwer fixed point, parallel computation}

\begin{abstract}%
We consider the query complexity of finding a local minimum of a function defined on a graph. This abstract problem is fundamental to many optimization tasks, such as finding a local minimum of the loss function when training deep neural networks. In such applications, each query is an expensive loss evaluation, making it crucial to parallelize computations. This motivates our study of local search where at most $k$ rounds of interaction (aka adaptivity) with the oracle are allowed.

We focus on the $d$-dimensional grid $\{1, 2, \ldots, n \}^d$, where the dimension $d \geq 2$ is a constant. Our main contribution is to give algorithms and lower bounds that characterize the query complexity of finding a local minimum  in $k$ rounds, when $k$ is constant and polynomial in $n$, respectively.
 Our proof technique for lower bounding the query complexity in rounds may be of independent interest as an alternative to the classical relational adversary method of Aaronson from the fully adaptive setting.
The local search analysis also enables us to characterize the query complexity of  computing a Brouwer fixed point in rounds. 
\end{abstract}

\maketitle

\section{Introduction}

Local search is a powerful heuristic embedded in many natural processes, which is often used to solve hard optimization problems.
Algorithms based on local search include gradient methods, the Lin-Kernighan algorithm for the traveling salesman problem, and the Metropolis-Hastings algorithm for sampling. The complexity of local search has been extensively studied in both  the white box  model (see, e.g., \cite{JohnsonPY88}) and the black box (aka query) model (see, e.g., \cite{aldous1983minimization}).

In the black box model, there is a graph $G = (V,E)$ and an unknown function $f : V \to \mathbb{R}$. An algorithm must query a vertex $v$ to learn $f(v)$. The goal is to find a vertex $v$ that is a  local minimum: $f(v) \leq f(u)$ for all $(u,v) \in E$. The  query complexity is the number of oracle queries needed to find a local minimum in the worst case. 

The query complexity of local search was first considered by Aldous~\cite{aldous1983minimization}, who showed that steepest descent with a warm start is a good randomized algorithm: first query $t$ vertices $x_1, \ldots, x_t$ selected uniformly at random and pick the vertex $x^*$ that minimizes the function among these \footnote{That is, the vertex $x^*$ is defined as: $x^* = x_j$, where $j = \arg \min_{i=1}^t f(x_i)$.}. Then run steepest descent from $x^*$ and stop when no further improvement can be made, returning the final vertex reached. When $t = \sqrt{ \Delta N}$, where $N$ is the number of vertices and $\Delta$ the maximum degree of the graph $G$, the algorithm issues $O(\sqrt{\Delta N})$ queries in expectation and has roughly as many rounds of interaction with the oracle. 

\paragraph{Local search in rounds.} Aldous' algorithm described above is highly sequential, i.e. requires many rounds of interaction with the oracle, even though its total query complexity is essentially optimal for graphs such as the Boolean hypercube and the $d$-dimensional grid~\cite{aldous1983minimization, sun2009quantum, zhang2009tight}. 
Multiple rounds of interaction can be expensive in applications. For instance, when algorithms such as gradient descent are run on data stored in the cloud, there can be delays due to back and forth messaging across the network. A remedy for such delays is designing protocols with fewer  rounds of communication.

In this paper, we analyze the query complexity of local search  in rounds, focusing on the case where the graph $G$ is the $d$-dimensional grid of side length $n$.  When an algorithm has $k$ rounds of interaction with the oracle, it asks a batch of queries  in each round $i$,  receives the answers, and then issues the batch of queries for   round $i+1$ (so, within a round, queries are not adaptive). The algorithm stops and outputs an answer by the end of round $k$. 

\paragraph{Examples.} To illustrate our model, we describe an analogy to the classical optimization problem of linear regression with mean squared error (MSE) loss. Given a data set of $m$ labeled examples $\{(\vec{a}_i,b_i)\}_{i=1}^{m}\subseteq\mathbb{R}^{d}\times\mathbb{R}$, linear regression aims to find a vector $\vec{x}\in\mathbb{R}^{d}$ of regression coefficients that minimize the loss function:
\[ 
L(\vec{x})=\frac{1}{m}\sum_{i=1}^{m}(\langle\vec{a}_i,\vec{x}\rangle - b_i)^2.
\]
While the ideal coefficients exist in a continuous space ($\mathbb{R}^d$), any practical optimization algorithm operates by taking discrete steps within a bounded region, iteratively updating an estimate $\vec{x}_{\text{old}}$ to a new one, $\vec{x}_{\text{new}}$.
To formally study the complexity of such a search, we can abstract this process by modeling the search space as a discrete grid. We consider a bounded region of potential coefficients, such as $[-B,B]^d$ for some $B > 0$, and discretize it into the grid $[n]^d$.
Each {vertex} $\vec{x}\in[n]^d$ corresponds to a candidate vector of regression coefficients.
    {Querying} a vertex $\vec{x}$ returns its MSE loss, $f(\vec{x}) = L(\vec{x})$.
    The {neighborhood} of a vertex $\vec{x}$ consists of all adjacent grid points, representing the smallest possible adjustments to a single regression coefficient.

A local minimum in this model is a set of coefficients whose loss cannot be improved by changing one coefficient by a discrete step.
Performing local search in \( k \) rounds corresponds to evaluating multiple candidate vectors simultaneously in $k$ batches. In each round, an algorithm queries a batch of points simultaneously, uses the results to select the next batch, and repeats this process $k$ times. 

While linear regression with MSE loss is convex, many practical variations are not. For instance, adding a non-convex regularizer like the Minimax Concave Penalty (MCP)~\cite{zhang2010mcp} creates a loss function with multiple local minima:
\[
L_{\text{MCP}}(\vec{x}) = \frac{1}{m}\sum_{i=1}^{m}(\langle\vec{a}_i,\vec{x}\rangle - b_i)^2+\sum_{j=1}^{d}\text{MCP}_{\lambda,\gamma}(x_j),
\]
where 
\(MCP_{\lambda ,\gamma }(x_{j})\) represents the MCP penalty function for the \(j\)-th regression coefficient \(x_{j}\); \(\lambda \) is a regularization parameter that controls the overall strength of the penalty; and \(\gamma \) is a tuning parameter that controls the concavity of the MCP function. 

Since finding the global minimum of  non-convex functions is computationally infeasible in general, the practical goal shifts to efficiently finding a local minimum.  


\paragraph{Roadmap to the paper.}  

The remainder of the paper is organized as follows. We formalize the model in Section~\ref{sec:model_results} and present our main theorems in Section~\ref{sec:our_results}.  Related work is discussed in Section~\ref{sec:related_work}. The core technical sections follow, with Section~\ref{sec:local_search_main} providing proof overviews for our local search results, while Section~\ref{sec:brouwer_main} details the application to the Brouwer fixed-point problem, which has a connection to local search via game theory. Full technical proofs are deferred to the appendix.

\section{Model} \label{sec:model_results}

We introduce the model for local search and Brouwer, then define the query complexity.

\paragraph{Local Search} Let $G = (V,E)$ be an undirected graph and $f : V \rightarrow \mathbb{R}$ a function that assigns a value to every vertex. The function $f$ is unknown but can be queried; a query is a vertex $\Vec{x} \in V$ and the answer to the query  is the value $f(\Vec{x})$ of the function at $\Vec{x}$. A local minimum is a vertex $\Vec{x}$ with the property that $f(\Vec{x}) \leq f(\Vec{y})$ for all neighbours $\Vec{y}$ of $\Vec{x}$. The goal is to find a local minimum using as few queries as possible.

We focus on the setting where the graph is a $d$-dimensional grid of side length $n$. Thus $V = [n]^d$, where $[n] = \{1, 2, \ldots, n \}$ and $(\Vec{x},\Vec{y}) \in E$ if $\|\Vec{x}-\Vec{y}\|_{1}=1$.  The dimension $d $ for both local search and Brouwer fixed-point is a constant. Unless otherwise specified, we have $d \geq 2$.

The local search results imply bounds for the problem of finding a Brouwer fixed point, which is defined next.


\paragraph{Brouwer}

In the Brouwer setting, we are given an $L$-Lipschitz function \[ f: [0,1]^d \rightarrow [0,1]^d,\] where $L > 1$  is a constant\footnote{When $L < 1$ we obtain the Banach fixed-point theorem, where the unique fixed point can be approximated quickly.} such that
$\|f(\Vec{x})-f(\Vec{y})\|_{\infty} \leq L\|\Vec{x}-\Vec{y}\|_{\infty}, \forall \Vec{x},\Vec{y} \in [0,1]^d\,.$ 

The computational problem is: given a tuple $(\epsilon, L, d, f)$, find a point $\Vec{x}^* \in [0,1]^d$ such that $\|f(\Vec{x}^*)-\Vec{x}^*\|_{\infty} \leq \epsilon$. An exact fixed point exists by Brouwer's fixed point theorem.

\paragraph{Query complexity and rounds}
We have query (aka oracle) access to the function $f$ and   at most $k$ rounds of interaction with the oracle. An algorithm running in $k$ rounds will submit in each round $j$ a batch of queries, then wait for the answers, and then submit the batch of queries for round $j+1$. The choice of queries submitted in round $j$ can only depend on the results of queries from earlier rounds.
At the end of the $k$-th round, the algorithm must stop and output a solution.

The deterministic query complexity is the total number of queries necessary and sufficient to find a solution. The randomized query complexity is the expected number of queries required to find a solution with probability at least $2/3$ for any input, where the expectation is taken over the coin tosses of the protocol.
\footnote{Any other constant greater than $1/2$ will suffice.}

\section{Our Results} \label{sec:our_results}

\section*{Local Search}

To set the stage, the query complexity of local search on the $d$-dimensional grid $[n]^d$ is well understood in the fully adaptive setting. The classical result by Llewellyn, Tovey, and Trick~\cite{llewellyn1989local} showed that the query complexity for deterministic fully adaptive algorithms is $\Theta(n^{d-1})$, and this upper bound is achieved by a divide-and-conquer algorithm using only $O(\log n)$ rounds; in the randomized setting, Aldous' algorithm~\cite{aldous1983minimization} takes $O(n^{d/2})$ queries and rounds, 
and it was proven to be optimal later (\cite{zhang2009tight, sun2009quantum}). More related works are discussed in Section~\ref{sec:related_work}.

We show the following bounds for  local search on the $d$-dimensional grid $[n]^d$ in $k$ rounds, which quantify the trade-offs between the number of rounds of adaptivity and the total number of queries.

\begin{theorem}(Local search, constant rounds)\label{thm:1}
	Let $d, k\in \mathbb{N}$ be constant, where $d \geq 2$ and $k \geq 1$. The query complexity of local search in $k$ rounds on the $d$-dimensional grid $[n]^d$ is $\Theta\bigl(n^{\frac{d^{k+1} - d^k}{d^k - 1}}\bigl)$, for both deterministic and randomized algorithms.
\end{theorem}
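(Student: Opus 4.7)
The plan is to prove matching $O(n^{e(k)})$ upper and $\Omega(n^{e(k)})$ lower bounds for $e(k) := \frac{d^k(d-1)}{d^k-1}$; since a deterministic upper bound is also a randomized upper bound and a randomized lower bound is also a deterministic one, both parts of the theorem follow. The exponent satisfies $e(1)=d$ and the two-step recursion $e(k) = d \cdot e(k-1)/(1+e(k-1))$, which will govern both directions.

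For the upper bound, I would generalize the Llewellyn-Tovey-Trick divide-and-conquer to $k$ rounds. In round $j$, given the current viable sub-cube of side $n_j$, query a regular lattice of $(d-1)$-dimensional hyperplanes with spacing $n_{j+1}$ in each of the $d$ axis directions. This uses $\Theta(d \cdot (n_j/n_{j+1}) \cdot n_j^{d-1}) = \Theta(n_j^d/n_{j+1})$ queries and forms a separator that partitions the viable sub-cube into cells of side $n_{j+1}$. Let $v$ be the queried vertex minimizing $f$; either $v$ is already a local minimum of the whole grid (done), or it has a neighbor $u$ with $f(u)<f(v)$. Such $u$ cannot lie on any queried hyperplane, so $u$ is in the interior of a unique cell $C$, which becomes the viable sub-cube for round $j+1$. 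Correctness: steepest descent from $u$ cannot exit $C$, because exiting would cross a hyperplane whose minimum queried value is $f(v)>f(u)$, contradicting monotone descent. Choosing the $n_j$ so as to equalize $n_j^d/n_{j+1}$ across all $k$ rounds (with $n_1=n$ and $n_{k+1}=1$) yields a linear recurrence in $\log n_j$ whose characteristic roots are $d$ and $1$; solving gives $n_j = n^{(d^k-d^{j-1})/(d^k-1)}$ and $n_j^d/n_{j+1} = n^{e(k)}$ per round, summing to the claimed upper bound.

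For the lower bound, I would invoke Yao's minimax principle and construct a hard distribution inductively, following the hitting-time / random-walk template of Aldous-Zhang but adapted to rounds. Draw a random self-avoiding "snake" of length $L=\Theta(n^{e(k)})$ and define $f(v)$ as the time at which $v$ is first visited by the snake (a large value for non-snake vertices), so the unique local minimum is at the snake's tail. I would then prove a round-wise shrinkage lemma: after $N_j$ queries in a viable sub-cube of side $n_j$, with constant probability the still-hidden suffix of the snake is confined to a sub-cube of side at least $n_{j+1}\gtrsim n_j^d/N_j$, reducing to an instance on $[n_{j+1}]^d$ with $k{-}1$ rounds remaining. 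Iterating the same balancing recurrence as in the upper bound then yields $\sum_j N_j = \Omega(n^{e(k)})$.

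The main obstacle is the shrinkage lemma $n_{j+1}\gtrsim n_j^d/N_j$. Naive pigeonhole packing yields only a query-free sub-cube of side $n_j/N_j^{1/d}$, which iterates to the far-too-weak bound $n^{d/k}$. The correct bound must exploit that the snake is essentially one-dimensional while it lives inside a $d$-dimensional cube: a parallel batch of $N_j$ queries can slice the snake into at most $N_j+1$ uncut arcs, and forcing the remaining suffix into a small sub-cube requires the fine-grained adaptivity that batched rounds forbid. Formalizing this round-vs-adaptivity gap is likely where the new adversary method advertised in the abstract (an alternative to Aaronson's relational adversary) enters the picture, and it ultimately reduces to coupling the snake distribution to a nearly-independent product structure across rounds while preserving the hitting-time interpretation of $f$.
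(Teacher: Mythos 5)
Your upper bound is essentially the paper's algorithm: in round $j$ you query the separating structure of a regular sub-cube partition of the current viable cube, recurse into the cell forced to contain a local minimum, and choose side lengths $n_j = n^{(d^k-d^{j-1})/(d^k-1)}$ so that each round costs $\Theta\bigl(n^{(d^{k+1}-d^k)/(d^k-1)}\bigr)$ queries. The only cosmetic difference is that the paper partitions into \emph{disjoint} sub-cubes, so the minimum queried point $\Vec{x}^*_i$ lies in exactly one sub-cube and any improving neighbor of $\Vec{x}^*_i$ is forced (by the boundary-minimality of $\Vec{x}^*_i$) to lie in the interior of that same sub-cube; your shared-hyperplane lattice would need a sentence about which of the up to $2^d$ cells incident to the minimizer $v$ to descend into. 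This is a fixable detail, not a gap.

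The genuine gap is in the lower bound, and you flag it yourself. Two things diverge from the paper. First, the construction: the paper does not use an Aldous-style hitting-time snake of length $\Theta(n^{e(k)})$. It uses a \emph{staircase} with exactly $k+1$ connecting points $\Vec{x}_0,\ldots,\Vec{x}_k$, where $\Vec{x}_0$ is the corner and each $\Vec{x}_{i+1}$ is drawn uniformly from a sub-cube of side $\ell_i = n^{(d^k-d^i)/(d^k-1)}$ cornered at $\Vec{x}_i$, with consecutive connecting points joined by deterministic coordinate-by-coordinate \emph{folded segments}. The geometric shrinkage you want to extract from a generic random walk is instead built directly into the hard distribution. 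Second, and more importantly, the paper does not prove any posterior-confinement ``shrinkage lemma''. It defines \emph{good staircases} (those on which the algorithm's first $i$ rounds of queries never touch the suffix beyond $\Vec{x}_i$, for every $i$), shows that if $9/10$ of staircases are good then the algorithm errs with constant probability because the last-round query budget covers only a $1/10$ fraction of the $\ell_{k-1}^d$ candidate endpoints (Lemma~\ref{lem:good1}), and then proves the $9/10$ count by a charging argument: Lemma~\ref{lem:cost1} bounds the total ``probability-score'' damage a single query can do to the set of possible next folded segments by $d\cdot\ell_i$, exploiting precisely the ``1D path inside a $d$-dimensional cube'' structure you gesture at, and Lemma~\ref{lem:good2} sums these costs along a recursion to get $R^{(k)}\ge 9/10$ when $\mathcal{Q}_k \lesssim \ell_i^d/\ell_{i+1}$. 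The quantitative balance $\mathcal{Q}_k \cdot \ell_{i+1} \approx \ell_i^d$ is exactly your $N_j \approx n_j^d/n_{j+1}$, so you have the right recurrence; but the proof route you propose (coupling a long hitting-time walk to a near-product measure across rounds) is not carried out, and you state the central lemma without proof while acknowledging you do not know how to prove it. That is a missing idea, not a missing detail.
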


For example, when  $k=2$, the query complexity is $\Theta\bigl(n^{\frac{d^{2}}{d+1}}\bigl)$ for both deterministic and randomized algorithms.
When $k \rightarrow \infty$, the bound of Theorem~\ref{thm:1} is close to $\Theta(n^{d-1})$, with gap smaller than any polynomial.

Recall that the $O(\log n)$ rounds divide-and-conquer algorithm take $O(n^{d-1})$ queries (\cite{llewellyn1989local}). 
Thus, our result fills the gap between one round algorithms and logarithmic rounds algorithms except for a small margin. 
For constant number of rounds, Theorem~\ref{thm:1} shows that deterministic algorithms are optimal, so there is no difference between the deterministic and randomized query complexity.

\begin{theorem}(Local search, polynomial rounds)\label{thm:2}
	Consider the $d$-dimensional grid with side length $n\in \mathbb{N}$, where $d \geq 2$.
Let   $k = n^{\alpha} \in \mathbb{N}$, where $\alpha \in \left(0,  {d}/{2} \right)$ is a constant. The randomized query complexity of local search in $k$ rounds on the $d$-dimensional grid $[n]^d$ is:
	\begin{itemize}
		\item $\;\;\;$ $\Theta\left(n^{(d-1) - \frac{d-2}{d}\alpha}\right)$ when $d \geq 5$;
		\item $\;\;\;$ $O\left(n^{3 - \frac{\alpha}{2}}\right)$ and $\widetilde{\Omega}\left(n^{3 - \frac{\alpha}{2}}\right)$ when $d = 4$;
		\item $\;\;\;$ $O\left(n^{2 - \frac{\alpha}{3}}\right)$ and $\Omega\left(\max(n^{2-\frac{2\alpha}{3}}, n^{\frac{3}{2}})\right)$ when $d = 3$. 
	\end{itemize}
\end{theorem}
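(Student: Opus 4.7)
The plan is to separately prove the upper and lower bounds; the lower bound is the harder direction.

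For the upper bound, the algorithm is a randomized hybrid that combines the warm-start sampling of Aldous~\cite{aldous1983minimization} with a parallel cube-decomposition in the style of Algorithm~1 from the proof of Theorem~\ref{thm:1}. In round~1, sample $m = n^{\beta}$ uniformly random vertices and record the one with smallest function value, call it $\Vec{x}^*$. In rounds~$2, \ldots, k$, run a batched local refinement seeded at $\Vec{x}^*$: each round queries the boundary of a carefully sized family of sub-cubes covering the remaining search region, then descends into the sub-cube whose boundary minimum is the global minimum seen so far. The parameter $\beta$ is tuned so that the initial sampling cost $n^\beta$ matches the total refinement cost; solving the balance equation gives $\beta = d - 1 - \frac{d-2}{d}\alpha$. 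Because $k = n^\alpha \gg \log n$, the refinement phase fits comfortably within the remaining round budget, and the same expression furnishes the upper bound for all $d \geq 3$.

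For the lower bound, apply Yao's minimax principle with a staircase distribution adapted to the polynomial-round regime. Modify the staircase of Theorem~\ref{thm:1} to have $\Theta(k) = \Theta(n^\alpha)$ folded segments, each chosen inside a sub-cube whose side length shrinks with the round index. Generalize the notion of a \emph{good} staircase from Section~\ref{sec:local_search_main} and rerun the counting argument in the spirit of Lemmas~\ref{lem:good0}--\ref{lem:good1}: show that as long as the algorithm makes fewer than $n^{d-1 - \frac{d-2}{d}\alpha}$ queries per round on average, a constant fraction of staircases remain good, forcing the algorithm to miss the endpoint (the unique local minimum) with constant probability. In high dimensions ($d \geq 5$) this argument yields a matching $\Theta$ bound; for $d = 4$ the per-round accounting incurs a polylogarithmic slack; and for $d = 3$ the staircase-based bound is supplemented by the fully-adaptive randomized lower bound $\Omega(n^{3/2})$ to obtain the claimed $\max$.

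The main obstacle is the lower bound in low dimensions. For $d \in \{3, 4\}$, the folded segments have limited room to stay well-separated inside the nested sub-cubes, so the per-round counting of information revealed by queries weakens and the argument loses a logarithmic or polynomial factor. Tightening it would likely require a genuinely different staircase geometry rather than a refinement of the existing counting. On the algorithmic side, a secondary and mostly routine concern is handling grid boundaries and rounding of sub-cube side lengths so that the per-round query budgets remain balanced across all $k$ rounds and all three regimes $d = 3$, $d = 4$, and $d \geq 5$.
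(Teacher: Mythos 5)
Your high-level plan — warm start plus a parallel refinement for the upper bound, and a Yao-style staircase distribution for the lower bound — points in the right direction, and your balance parameter $\beta = (d-1) - \frac{d-2}{d}\alpha$ matches the paper. But both halves have genuine gaps.

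\textbf{Upper bound.} After the warm start you have a point $\Vec{x}^*$ with $\mathrm{rank}(\Vec{x}^*) = O\bigl(n^{1+\frac{d-2}{d}\alpha}\bigr)$, which can be polynomially larger than the round budget $k = n^\alpha$, so plain steepest descent is too slow and plain divide-and-conquer over the whole grid is too expensive. Your ``batched local refinement'' that queries sub-cube boundaries and descends into the sub-cube containing the minimum does not resolve this tension: if the sub-cubes are small enough to be cheap, there is no guarantee a local minimum lies inside the one you descend into (the warm start could simply push you out through the boundary with an arbitrarily small rank decrease), and if they are large enough to contain a local minimum you are back to $n^{d-1}$ cost. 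The paper's fix is a recursive ``fractal steepest descent'': make optimistic giant steps of size $\mathrm{step}$ by taking boundary minima (cheap in rounds), and in parallel launch a recursive instance of the same procedure inside each giant-step cube to \emph{verify} that the rank dropped by $\mathrm{step}$; only when verification fails does it invoke a $\log n$-round divide-and-conquer inside that cube, which is then guaranteed to contain a solution. This overlapping verification is what lets the rounds telescope to $k = n^\alpha$ while keeping the total query count at $n^\beta$, and it is missing from your sketch.

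\textbf{Lower bound.} You propose to ``generalize the notion of a good staircase \dots and rerun the counting argument'' from the constant-round case. The paper explicitly explains why this breaks with polynomially many rounds: requiring that the algorithm learn at most one new connecting point per round is too strong, and the good-staircase count no longer stays close to $1$ after $n^\alpha$ rounds. The actual proof replaces this with an amortized ``savings'' argument: under the assumption that the algorithm is handed each new connecting point for free once it queries the corresponding folded segment, it defines the number of rounds \emph{saved} on a staircase $s$ and bounds $\sum_s SV(s)$ by a quantity $\Gamma$ that depends only on the random walk. Crucially, the random walk is not the shrinking-cube walk you describe: the step size is a \emph{fixed} $\ell = n^{1 - 2\alpha/d}$ with a periodic uniform restart every $m = n/\ell$ steps, and $\Gamma$ is controlled by a local central limit theorem for the resulting random walk. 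The dichotomy at $d=4$ and $d=3$ then comes from the divergence of $\Phi_d(m) = \sum_{i \le m} i^{1-d/2}$ (a $\log$ factor for $d=4$, a $\sqrt{m}$ factor for $d=3$), not from a geometric ``room to stay separated'' heuristic; the $\Omega(n^{3/2})$ term for $d=3$ is then taken as a max with the fully adaptive bound, as you correctly note.

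In short, your proposal captures the shape of the result but not the two main technical devices the proof actually needs: the recursive parallelization-with-verification for the upper bound, and the amortized-savings analysis with a fixed-step random walk plus local CLT for the lower bound.
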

When $\alpha \rightarrow 0$, the bound approaches $\Theta(n^{d-1})$, i.e., the bound of constant and logarithmic rounds algorithm. When $\alpha \rightarrow ({d}/{2})$, the upper bound is close to $\Theta(n^{\frac{d}{2}})$, i.e., the randomized fully adaptive algorithm (\cite{aldous1983minimization,zhang2009tight}). Thus, our result fills the gaps between constant (or logarithmic) rounds algorithms and fully adaptive algorithms, except for a small gap when $d \in \{3, 4\}$.

The bound for $d=2$ in polynomial number of rounds was known, since the divide-and-conquer approach by Llewellyn, Tovey, and Trick~\cite{llewellyn1989local} gives an upper bound of $O(n)$ using only $O(\log{n})$ rounds, while Sun and Yao~\cite{sun2009quantum} (Theorem 1) give a lower bound of $\Omega\left(n^{1-\delta}\right)$ for all fixed $\delta > 0$ for fully adaptive randomized algorithms.

When $d=1$, the query complexity of computing a local minimum on the $1$-dimensional grid $[n]$ in $k$ rounds is $\Theta\left(n^{1/k}\right)$, for both deterministic and randomized algorithms (Theorem~\ref{thm:1D}). 

\medskip 

A summary of our results for local search on the $d$-dimensional grid can be found in Table~\ref{tab:LS}, together with the bounds known in the existing literature.

 \medskip 



\begin{table}[h!]
	\centering
	\begin{tabular}{| l | |l |l|}
		\hline
		\tablecell{c}{Local search on\\ the  grid $[n]^d$} & Deterministic & Randomized \\ [0.5ex] 
		\hline\hline 
		\tablecell{c}{{Constant rounds:}\\{$k = O(1)$}} &  \tablecell{c}{$\Theta\bigl(n^{\frac{d^{k+1} - d^k}{d^k - 1}}\bigl)$ \textcolor{purple}{(*)}} & {$\Theta\bigl(n^{\frac{d^{k+1} - d^k}{d^k - 1}}\bigl)$ \textcolor{purple}{(*)}} \\
		[1em] \hline 
		\tablecell{c}{Polynomial rounds:\\$k = n^{\alpha},$ \\$\alpha \in (0, d/2)$} & \tablecell{c}{$\Theta\left(n^{d-1}\right)$ \\ \cite{llewellyn1989local,DBLP:journals/dam/LlewellynT93,AK93}}  
		 & 
		\tablecell{l}{ $d \geq 5$: $\Theta\left(n^{(d-1) - \frac{d-2}{d}\alpha}\right)$ \textcolor{purple}{(*)}\\
			[1em] \hline 
			$d=4$: $O\left(n^{3 - \frac{\alpha}{2}}\right)$ and $\widetilde{\Omega}\left(n^{3 - \frac{\alpha}{2}}\right)$  \textcolor{purple}{(*)} \\
			[1em] \hline  
			$d=3$: $O\left(n^{2 - \frac{\alpha}{3}}\right)$ and \\  $\Omega\left(\max(n^{2-\frac{2\alpha}{3}}, n^{\frac{3}{2}})\right)$  \textcolor{purple}{(*)} \\
			[1em] \hline  
			$d=2$: $\widetilde{\Theta}(n)$~\cite{sun2009quantum, llewellyn1989local}
			%
		}
		\\ 
		[1em] \hline 
		\tablecell{c}{Fully adaptive:\\$k = \infty$} & \tablecell{c}{$\Theta\left(n^{d-1}\right)$ \\ 
		\cite{llewellyn1989local,DBLP:journals/dam/LlewellynT93,AK93}
		} 
		& 
		\tablecell{l}{$d \geq 3$: $\widetilde{\Theta} \bigl(n^{\frac{d}{2}} \bigr)$~\cite{aldous1983minimization,zhang2009tight}\\
			\vspace{2mm} 
			$d=2$: $\widetilde{\Theta}(n)$~\cite{sun2009quantum, llewellyn1989local} }
		\\ 
		\hline
	\end{tabular}
	\caption{Query complexity of local search in $k$ rounds on $d$-dimensional grid of side length $n$, for $d \geq 2$. Our results are marked with \textcolor{purple}{(*)}. The deterministic divide-and-conquer algorithm~(\cite{llewellyn1989local}) takes $O(\log n)$ rounds, while the randomized warm-start algorithm~(\cite{aldous1983minimization}) needs $O\bigl(n^{\frac{d}{2}}\bigr)$ rounds. 
	}
	\label{tab:LS}
\end{table}


At a high level, when the number of rounds $k$ is constant, the optimal algorithm is closer to the deterministic divide-and-conquer algorithm (\cite{llewellyn1989local}). When the number of rounds is polynomial (i.e. $k = n^{\alpha}$, for $0 < \alpha < d/2$), the algorithm is closer to the randomized algorithm from the fully adaptive setting, which does steepest descent with a warm start (\cite{aldous1983minimization}).


\section*{Brouwer} 
Our local search results above   also imply a characterization for the query complexity of finding an approximate Brouwer fixed point in constant number of rounds on the $d$-dimensional cube.  
The connection between local search and Brouwer  stems from game theory. Finding a Nash equilibrium in a general game is equivalent in complexity to solving the Brouwer problem. However, for specific classes of games, such as potential games, pure Nash equilibria are guaranteed to exist and can be found via a sequence of best-response moves—a process equivalent to local search \cite{Rosenthal1973,MS96,BSN19}.

For Brouwer we consider only constant rounds, since Brouwer can be solved optimally in $O(\log (1/\epsilon))$ rounds (\cite{chen2005algorithms}).



\begin{theorem} \label{thm:brouwer_main}[Brouwer, constant rounds]
	Let $d,k \in \mathbb{N}$ be constant. For any $\epsilon > 0$, the query complexity of the $\epsilon$-approximate Brouwer fixed-point problem in the $d$-dimensional unit cube $[0,1]^d$ with $k$ rounds is $\Theta\bigl((1/\epsilon) ^{\frac{d^{k+1} - d^k}{d^k - 1}}\bigr)$, for both deterministic and randomized algorithms.
\end{theorem}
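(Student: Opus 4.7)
The plan is to prove Theorem~\ref{thm:brouwer_main} by reducing the Brouwer problem to local search on the discretized $d$-dimensional grid and leveraging Theorem~\ref{thm:1}. Set $n = \Theta(1/\epsilon)$ so that the natural discretization $\{i/n : i \in [n]\}^d$ of $[0,1]^d$ has grid spacing $\epsilon$. Since $L$ is a constant, a point that is $\epsilon$-approximate on this grid will still be an $O(\epsilon)$-approximate fixed point of the continuous function $F$ (up to adjusting constants in $\epsilon$), so proving tight bounds of the form $\Theta(n^{\frac{d^{k+1}-d^k}{d^k-1}})$ on the grid directly translates to $\Theta((1/\epsilon)^{\frac{d^{k+1}-d^k}{d^k-1}})$ for the continuous problem.

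For the upper bound, the plan is to adapt Algorithm 1 from the proof of Theorem~\ref{thm:1}. Instead of tracking the coordinate-wise minimum of a function, use a Sperner/directional labeling of each grid point according to $F(\Vec{x}) - \Vec{x}$: in each round, divide the current cube into sub-cubes of side length $\ell_i = n^{(d^k - d^i)/(d^k - 1)}$, query the entire boundary of every sub-cube, and identify a sub-cube whose boundary labels witness (via a standard Sperner-like argument) that an $\epsilon$-approximate fixed point must lie inside. In the final round, query the entire remaining sub-cube and return the approximate fixed point. Because the query counts per round are identical to those in the proof of the upper bound of Theorem~\ref{thm:1}, the total query complexity is $O(n^{\frac{d^{k+1}-d^k}{d^k-1}}) = O((1/\epsilon)^{\frac{d^{k+1}-d^k}{d^k-1}})$.

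For the lower bound, the plan is to re-use the staircase construction from the proof of Theorem~\ref{thm:1} (described in Section~\ref{sec:local_search_main} and detailed in Appendix~\ref{sec:LS.ConstLB}), but embedded into a Brouwer-type instance rather than a local-search instance. Given a random staircase $(\Vec{x}_0, \ldots, \Vec{x}_k)$, define $F : [0,1]^d \to [0,1]^d$ so that at every grid point on the staircase (except the endpoint) the displacement $F(\Vec{x}) - \Vec{x}$ points by one unit toward the next vertex along the folded segment, while at every point off the staircase the displacement points toward the entrance of the staircase. At the endpoint $\Vec{x}_k$ the displacement is $0$, making it the unique approximate fixed point. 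As in Hirsch--Papadimitriou--Vavasis and Chen--Teng, this construction can be made $L$-Lipschitz for a constant $L$ by smoothly interpolating the displacement field in a neighborhood of the staircase. The notions of ``good staircase'' and the counting/recursive argument from the local search lower bound apply verbatim: a query to $F$ at a point far from the staircase reveals only the location of the entrance, so the algorithm cannot learn more than roughly one new connecting point per round, and the same distribution of hard inputs yields a $\Omega(n^{\frac{d^{k+1}-d^k}{d^k-1}})$ bound.

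The main obstacle is verifying that this embedding preserves the information-theoretic hardness used in the local-search lower bound. Specifically, one needs to show that for any query point $\Vec{x}$ off the staircase, the value $F(\Vec{x})$ is a deterministic function of the same limited data (the entrance and the ``prefix'' of the staircase already learned) that controls the local-search analogue; only then do the notions of good staircase and property \texttt{P2} carry over. The Lipschitz interpolation must therefore be defined locally, using only the nearest portion of the staircase, so that reconstructing $F$ in a region the algorithm has not yet probed requires knowing the continuation of the staircase there. Once this is verified, the combinatorial counting from the proof of Theorem~\ref{thm:1} produces the desired $\Omega((1/\epsilon)^{\frac{d^{k+1}-d^k}{d^k-1}})$ lower bound, matching the upper bound up to constants.
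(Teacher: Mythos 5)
Your upper-bound plan matches the paper's: divide into sub-cubes of side $\ell_i = n^{(d^k - d^i)/(d^k - 1)}$, query boundaries, use a Sperner/parity argument to select the next sub-cube, and finish by querying the last sub-cube. The paper makes this precise via the ``bad $(d-1)$-cube'' notion and the parity theorem of Chen and Deng, and it works with the discrete Brouwer problem (which it relates to the continuous $\epsilon$-approximate problem by a round-preserving reduction) so that ``$n$'' can be directly replaced by ``$1/\epsilon$''. Your sketch is less precise about the discrete/continuous passage but is in essence the same algorithm.

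Your lower-bound plan, however, takes a genuinely different route and leaves the hardest step unresolved. The paper never directly constructs a hard Lipschitz displacement field. Instead it reduces black-box: the staircase-generated local search instances are reduced, in a round-preserving way (Lemma~\ref{lem:LS2GP}, costing only $2d+1$ queries of $f$ per query of $N_G$), to the grid-PPAD problem $GP$, and then the known round-preserving reduction of Chen and Teng (Lemma~\ref{lem:GP2BR}) carries $GP$ to the discrete fixed-point problem. Because all reductions are round-preserving, the $\Omega(n^{(d^{k+1}-d^k)/(d^k-1)})$ bound from Theorem~\ref{thm:1} transfers immediately, with no need to redo any counting or to argue about Lipschitz constants. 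By contrast, you propose to embed the staircase directly into a continuous $L$-Lipschitz function $F$ and then claim the good-staircase analysis ``applies verbatim.'' This is where the gap lies: the displacement field you describe, which points ``toward the entrance'' off the staircase and along the folded segment on it, is not Lipschitz near the path without a nontrivial tube construction (what Hirsch--Papadimitriou--Vavasis and Chen--Teng actually do is maintain a fixed default direction away from the path and embed the path inside a tube where directions rotate, not point toward the entrance). More importantly, even granting such a construction, the claim that the notions of good staircase and property \texttt{P2} carry over ``verbatim'' is not an argument; it is precisely what would have to be re-verified, since a query to $F$ returns a $d$-vector rather than a scalar distance, and the interpolation in the tube region could leak extra information. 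You correctly identify this as ``the main obstacle,'' but you do not resolve it. The cleanest fix is exactly what the paper does: package the staircase instance as a black-box reduction so that the information-theoretic content of Theorem~\ref{thm:1} transfers automatically, rather than re-proving a Brouwer-specific version of Lemmas~\ref{lem:good0}--\ref{lem:good2}.
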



We also show that when $d = 1$, the query complexity of finding an $\epsilon$-approximate Brouwer fixed point in $k$ rounds is $\Theta\left((1/\epsilon)^{1/k}\right)$, for both deterministic and randomized algorithms. 

\section{Related Work} \label{sec:related_work}

\paragraph{Query complexity of local search.} The query complexity of local search was studied first experimentally by Tovey~\cite{tovey81}, while Aldous~\cite{aldous1983minimization} provided the first theoretical analysis.
 For the Boolean hypercube $\{0,1\}^n$, Aldous~\cite{aldous1983minimization} gave an upper bound of  $O(n \cdot 2^{n/2})$ using steepest descent with a warm start 
 and a lower bound of $\Omega(2^{n/2 - o(n)})$ for randomized algorithms. 
 The lower bound construction is as follows. Consider an initial vertex $v_0$ uniformly at random. The function value at $v_0$ is $f(v_0) = 0$. From this vertex, start an unbiased random walk $v_0, v_1, \ldots$ For each vertex $v$ in the graph, set $f(v)$ equal to the first hitting time of the walk at $v$; that is, $f(v) = \min\{t \mid v_t=v\}$. The function $f$ defined this way has a unique local minimum at $v_0$. By analyzing this distribution, Aldous~\cite{aldous1983minimization} showed a lower bound of $2^{n/2-o(n)}$ on the Boolean hypercube. 

The steepest descent with a warm start algorithm \cite{aldous1983minimization} is  essentially optimal for graphs such as the hypercube and the $d$-dimensional grid (\cite{aldous1983minimization, sun2009quantum, zhang2009tight}). At the same time, the algorithm is highly sequential.
Llewellyn, Tovey, and Trick~\cite{llewellyn1989local} analyzed the deterministic query complexity of local search and devised a divide-and-conquer approach, which has higher total query complexity but uses fewer rounds. Their algorithm is deterministic and identifies in the first step a vertex separator $S$ of the input graph $G$ \footnote{A vertex separator is a set of vertices $S \subseteq V$ with the property: there exist vertices $u,v \in V$, where $V$ is the set of vertices of $G$, such that any path between $u$ and $v$ passes through $S$.}. Afterwards, it queries all the vertices in $S$ to find the minimum $v$ among these. If $v$ is a local minimum of $G$, then return it. Otherwise, there is a neighbour $w$ of $v$ with $f(w) < f(v)$. Repeat the whole procedure on the new graph $G'$, defined as the connected component of $G \setminus S$ containing $w$. Correctness holds since the steepest descent from $w$ cannot escape $G'$. On the $d$-dimensional grid, the vertex separator $S$ can be defined as the $(d-1)$-dimensional wall that divides the current connected component evenly; thus a local optimum can be found with $O(n^{d-1})$ queries in $O(\log{n})$ rounds.

We observe the contrast between the steepest descent with a warm start algorithm given by Aldous \cite{aldous1983minimization}, which is randomized and almost sequential---running in $O(n^{d/2})$ rounds---and the deterministic divide-and-conquer algorithm from \cite{llewellyn1989local}, which can be implemented in $O(\log{n})$ rounds. Even though the randomized algorithm in \cite{aldous1983minimization}  is (essentially) optimal in terms of number of queries, it takes many rounds. 
Thus it is natural to ask whether steepest descent with a warm start  can be parallelized and what is the tradeoff between the total query complexity and the number of rounds.

Alth{\"{o}}fer and Koschnick~\cite{AK93}, and Llewellyn and Tovey~\cite{DBLP:journals/dam/LlewellynT93} applied the adversarial argument 
to show that $\Omega(n^{d-1})$ queries are necessary for any deterministic algorithm on the $d$-dimensional grid of side length $n$. Llewellyn, Tovey, and Trick~\cite{llewellyn1989local} also studied arbitrary graphs, showing that $O(\sqrt{N} + \Delta \log{N})$ queries are sufficient on graphs with $N$ vertices when the maximum degree of the graph is $\Delta$ and the graph has constant genus.

Aaronson~\cite{aaronson2006lower} improved the  lower bounds for randomized algorithms by designing a novel technique called the \textit{relational adversarial method} inspired by the adversarial method in quantum computing. This method avoids analyzing the posterior distribution during the execution directly and gave improved lower bounds for both the hypercube and the grid. 
Follow-up work by Zhang~\cite{zhang2009tight} and Sun and Yao~\cite{sun2009quantum} obtained even tighter lower bounds for the grid using this method with better choices on the random process; their lower bound is $\widetilde{\Omega} \bigl(n^{\frac{d}{2}} \bigr)$, which is nearly optimal. 

Santha and Szegedy~\cite{santha2004quantum} gave a quantum lower bound of $\Omega\left( \sqrt[8]{\frac{s}{\Delta}} / \log(N) \right)$, where $s$ is the separation number of the graph and $\Delta$ the maximum degree. This  implies the same lower bound in a randomized context, using the spectral method.
Meanwhile, the best known upper bound is $O((s + \Delta) \cdot \log N)$ due to \cite{santha2004quantum}, which was obtained via a refinement of the divide-and-conquer procedure (\cite{llewellyn1989local}). 

Dinh and Russell~\cite{dinh2010quantum} studied Cayley and vertex transitive graphs and gave lower bounds for local search as a function of the number of vertices and the diameter of the graph. 
Verhoeven~\cite{Verhoeven06} obtained upper bounds as a function of the genus of the graph. Br\^anzei, Choo, and Recker~\cite{BCR24} gave lower bounds for arbitrary graphs, as a function of the vertex congestion and separation number of the graph. These imply a lower bound of $\Omega(\sqrt{N}/\log{N})$ for bounded-degree expanders, nearly matching the upper bound of $O(\sqrt{N})$ given by the randomized algorithm from Aldous~\cite{aldous1983minimization} for such graphs.


\paragraph{Communication complexity of local search.} Babichenko,  Dobzinski, and Nisan~\cite{BSN19} analyzed the communication complexity of local search, which  captures the hardness of finding a local optimum in distributed environments, where data may be stored on different computers, from the point of view of the communication cost. 

\paragraph{White box complexity of local search and related problems.} The computational complexity of local search in the white-box setting is captured by the class PLS, which was defined by Johnson, Papadimitriou, and Yannakakis~\cite{JohnsonPY88} to model the difficulty of finding locally optimal solutions to optimization problems. 

A class related to PLS is PPAD, introduced by Papadimitriou~\cite{Papadimitriou_1994} to study the computational complexity of finding a Brouwer fixed-point. PPAD contains many natural problems that are computationally equivalent to finding a Brouwer fixed point (\cite{CD09}), such as finding an approximate Nash equilibrium in a multi-player or two-player game (\cite{DGP09,CDT09}), an Arrow-Debreu equilibrium in a market (\cite{VY11,CPY17}), and a local min-max point (\cite{DSZ20}). 
The query complexity of computing an $\epsilon$-approximate Brouwer fixed point was studied in a series of papers for fully adaptive algorithms (\cite{hirsch1989exponential, chen2005algorithms, chen2007paths}).

The classes PLS and PPAD are both  a subset of TFNP, which contains total function problems that can be solved in nondeterministic polynomial time.  Fearnley et al.~\cite{FGHS20} showed that the class CLS, introduced by Daskalakis and Papadimitriou~\cite{DP11} to capture continuous local search, is equal to PPAD $\cap$ PLS. The query complexity of continuous local search has also been studied (see, e.g., \cite{hubavcek2017hardness}).

The $d$-dimensional grid with constant dimension  is important in the study of TFNP in complexity theory. For example, the complexity class CLS was first defined as any problem that could be reduced to the  \textsc{3D-Continuous-Localopt} problem (\cite{DP11}). Göös et al.~\cite{goos2022further} showed that
EOPL = PPAD $\cap$ PLS using the 2D grid as a unified model to define problems from several
complexity classes (PPAD, PPADS, PLS, EOPL, and SOPL). 

Our results also imply bounds for the computational problem \textsc{Brouwer}, where the goal is to compute an approximate fixed point of a Lipschitz function defined on the $d$-dimensional cube, the existence of which is guaranteed by Brouwer's theorem. \textsc{Brouwer} is computationally equivalent to problems such as finding a Nash equilibrium in an $n$-player game (\cite{Nash48,Papadimitriou_1994, DGP09, CDT09}) and a local min-max equilibrium in a two player game with nonconvex-nonconcave utilities (\cite{DSZ20}), the latter of which has applications to training generative adversarial networks (\cite{NIPS2014_GANs,ACB17,DSZ20}). 

\paragraph{Parallel complexity.} Parallel complexity is a fundamental concept, which was studied extensively for problems such as sorting, selection, finding the maximum element of a vector, and the sorted top-$k$ elements (\cite{Val75,pippenger1987sorting,bollobas1988sorting,alon1986tight,wigderson1999expanders,GasarchGK03,BravermanMW16,BravermanMP19,Cohen-AddadMM20}). An overview on parallel sorting algorithms is given in the book of Akl~\cite{Akl_book}. 

The parallel complexity of optimization was first considered by Nemirovski~\cite{NEMIROVSKI}. This was analyzed in a series of follow-up works for submodular functions (see, e.g., \cite{BalkanskiS18,BalkanskiRS19,EN19,BalkanskiS20}). 
Algorithms with few rounds of interaction with the function oracle (aka low depth) for the problem of computing stationary points were considered by Bubeck and Mikulincer~\cite{DBLP:conf/colt/BubeckM20}. Bubeck et al.~\cite{BubeckJLLS19} studied parallel convex optimization where the oracle can answer $poly(d)$ queries in each round, where $d$ is the dimension. 

Another setting of interest where rounds are important is active learning, where there is an ``active'' learner
who can submit queries —-- in the form of unlabeled instances —-- to be annotated by an oracle
(e.g., a human) (\cite{Set12}). However each round of interaction with the human annotator has a
cost, which can be captured through a budget on the number of rounds. 

\section{Local Search}  \label{sec:local_search_main}
In this section we state our results for local search and give an overview of the proofs.

\subsection{Local Search in Constant Rounds}    

When the number of rounds $k$ is a constant, we obtain the following bounds.


\noindent \textbf{Theorem \ref{thm:1}} (Local search, constant rounds, restated): \emph{Let $d, k\in \mathbb{N}$ be constant, where $d \geq 2$ and $k \geq 1$. The query complexity of local search in $k$ rounds on the $d$-dimensional grid $[n]^d$ is $\Theta\bigl(n^{\frac{d^{k+1} - d^k}{d^k - 1}}\bigl)$, for both deterministic and randomized algorithms.}

For example, when  $k=2$, the query complexity is $\Theta\bigl(n^{\frac{d^{2}}{d+1}}\bigl)$.





\subsubsection{Upper bound overview for local search in constant rounds.}

The algorithm for constant rounds works as follows. 
Initialize a cube $C_0$ to the whole grid $[n]^d$. 

In each round $i=1,\ldots,k-1$, the algorithm divides the current cube $C_{i-1}$ into a set of mutually exclusive sub-cubes $C_i^1, \ldots, C_{i}^{n_i}$ of side length $\ell_i$ (for a carefully set value of $\ell_i$) that cover $C_{i-1}$. 
	Then it queries all the points on the boundary of the sub-cubes $C_i^1, \ldots, C_{i}^{n_i}$ and select the point $\Vec{x}^*_i$ with minimal value among them.
		 Set $C_i = C_i^j$, where $C_i^j$ is the sub-cube that $\Vec{x}^*_i$ belongs to and repeat. 
		 Finally, in round $k$,  query all the points in the current cube $C_{k-1}$ and find the solution point.
		 
		 Figure~\ref{fig:HDD:main} shows an illustration for the case $k=2$ and $d=2$.

		 	\begin{figure}[h!]
		\centering
		\includegraphics[scale=0.5]{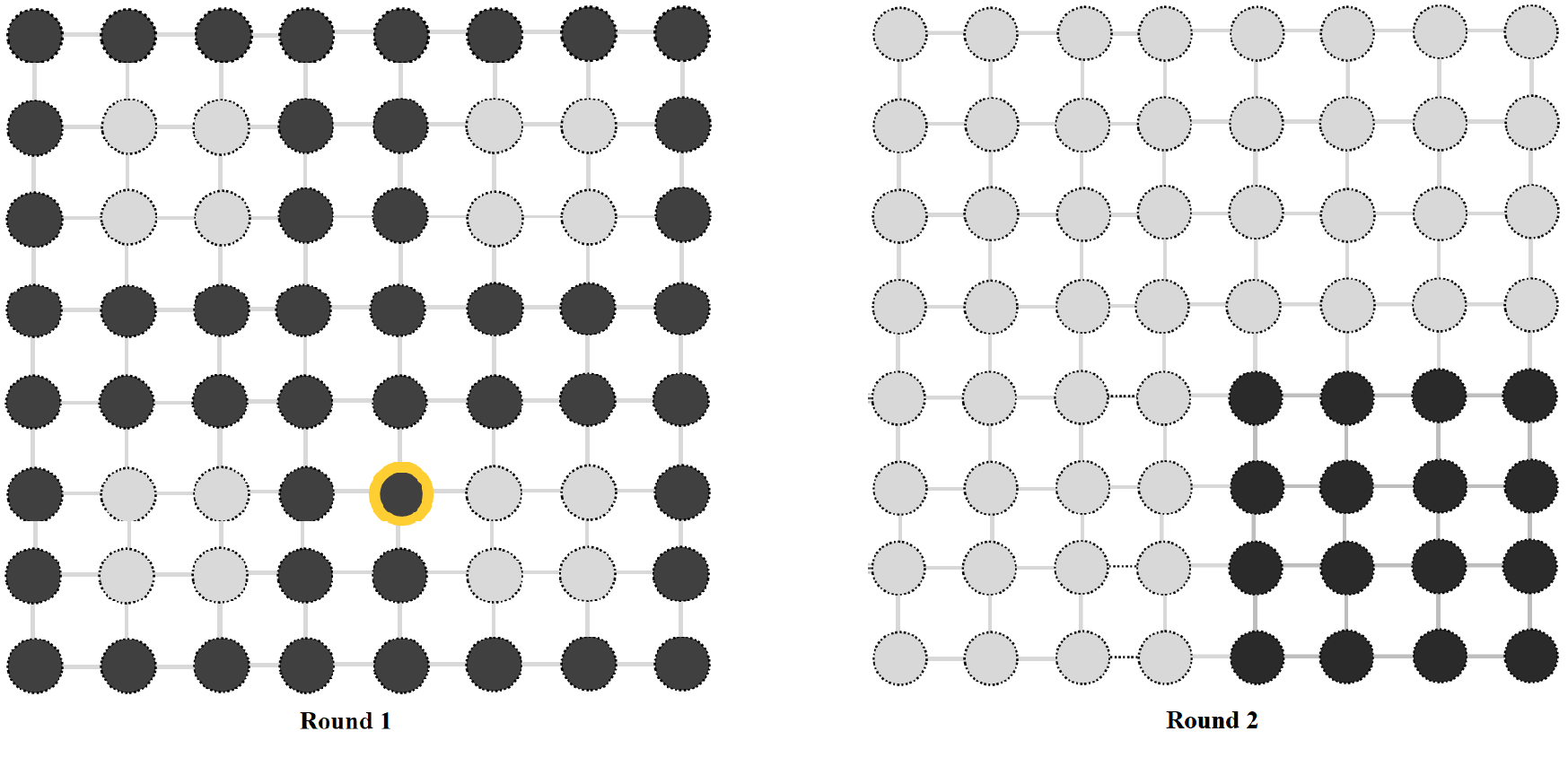}
		\caption{2D grid of size $8\times 8$. Suppose there are two rounds. In round $1$, illustrated on the left, the algorithm queries all the black points and selects the minimum among all these points (shown in  yellow). In round 2, illustrated on the right, it queries the entire sub-square in which the minimum from the first round was found. When $d=2$ and $k=2$, the query complexity is $\Theta\bigl(n^{\frac{4}{3}}\bigl)$ by setting $\ell_1 \coloneqq n^{\frac{2}{3}}$.}
		\label{fig:HDD:main}
	\end{figure}

The main observation that  allows the proof to work is that when the space is divided into sub-cubes and $\Vec{x}^*_i$ is the point with smallest value on the boundaries of all the sub-cubes, steepest descent started at $\Vec{x}^*_i$ cannot escape the sub-cube it is contained in. 
By choosing the side lengths of the sub-cubes appropriately, we get the required upper bound. 
The detailed proof is included in Appendix~\ref{sec:LS.ConstAlg}, together with the other omitted proofs.

\subsubsection{Lower bound overview for local search in constant rounds.} To show randomized lower bounds, we apply Yao's minimax theorem~(\cite{Yao77}): first we provide a hard distribution of inputs, then show that no \emph{deterministic} algorithm could achieve accuracy larger than some constant on this distribution.
The hard distribution will be given by a \textit{staircase} construction (\cite{vavasis1993black,hirsch1989exponential}). A {staircase} will be a random path with the property that the unique local optimum is {hidden} at the end of the path. We present a sketch here, while the complete calculations can be found in Appendix~\ref{sec:LS.ConstLB}.

Figure~\ref{fig:staircases_local} shows an example of a staircase, which consists of the black and red vertices. The bottom left black vertex is the starting point of the staircase and the value of the function there is set to zero. Then the value decreases by one with each step along the staircase, like going down the stairs. The value of the function at any point outside the {staircase} is equal to the distance to the entrance of the staircase. 

\vspace{2mm}
\begin{figure}[h!]
	\centering 
	\subfigure[Staircase in 2D for 3 rounds.]
	{
		\includegraphics[scale=0.38]{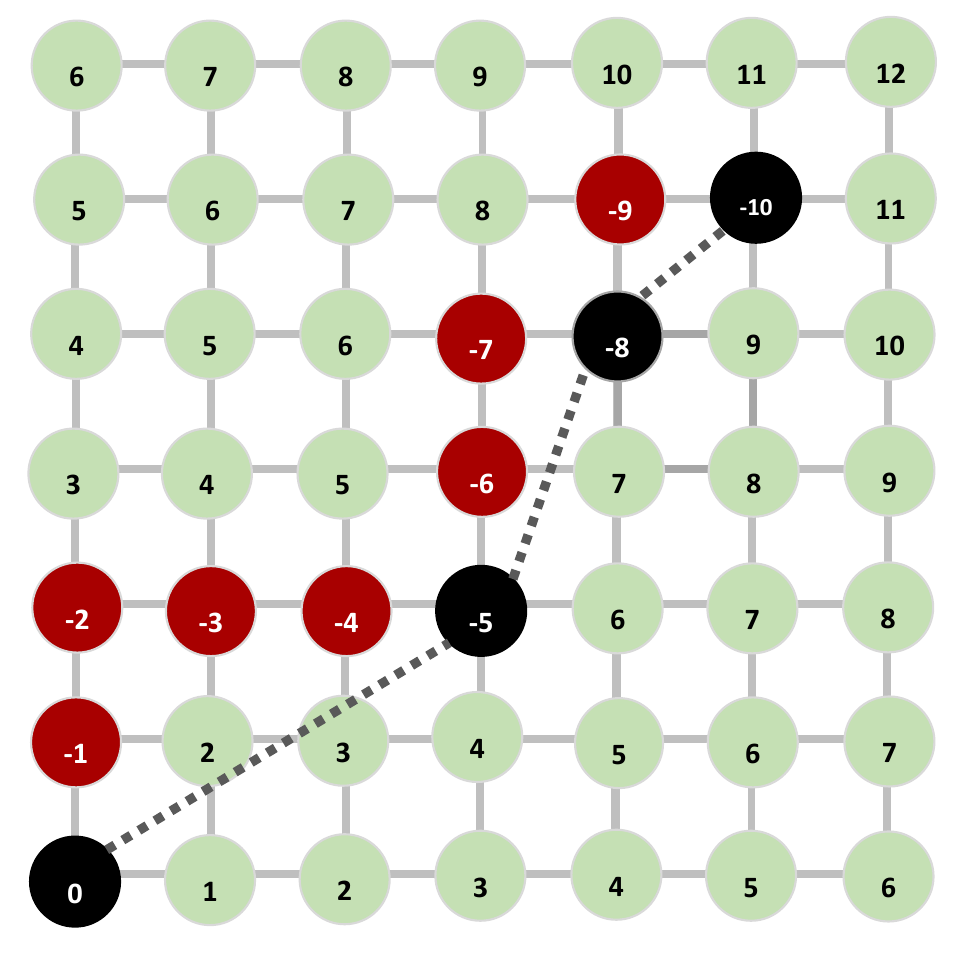}
		\label{fig:2D_staircase}
	}
	\subfigure[Stylized staircase in 2D.]
	{
		\includegraphics[scale=0.62]{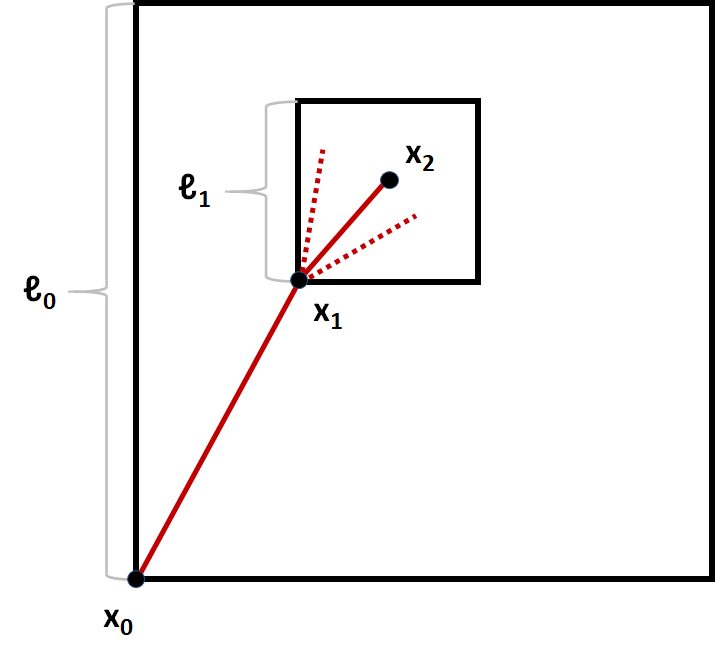}
		\label{fig:2D_staircase_stylized}
	}
	\caption{\small{Figure (a) shows a two dimensional staircase for $k=3$ rounds. The number of black points is equal to $k+1$. The black points are connected by ``folded-segments'' shown in red. The idea is that in each round, the algorithm will learn at most one more folded-segment.
			Figure (b) shows in a stylized way the process of selecting the next part of the staircase given that the first folded-segment was determined. The folded-segments are displayed here as straight lines for simplicity. }}
	\label{fig:staircases_local}
\end{figure}

Intuitively, the algorithm cannot find much useful structural information of the input and thus has no advantage over a path following algorithm. The staircase construction can be embedded in two  different tasks: finding a local minimum of a function (\cite{aldous1983minimization,aaronson2006lower,sun2009quantum,zhang2009tight,hubavcek2017hardness}) and computing a Brouwer fixed-point (\cite{chen2007paths,hirsch1989exponential}). 

The most challenging part is rigorously proving that such intuition is correct. The lower bound proof from \cite{aldous1983minimization} for the hypercube $\{0,1 \}^n$ has complex probability analysis on the random walk and relies on the structural property of hypercube $\{0,1 \}^n$.

All other works on the randomized lower bound of local search are based on Aaronson's relational adversarial method (\cite{aaronson2006lower}). However, we found it's inherently difficult to consider multiple queries in one round in the framework of relational adversarial method. 
Therefore, our main technical innovation is a new technique to incorporate the round limit into the randomized lower bounds.
This could also serve as a simpler alternative method of the classical relational adversarial method (\cite{aaronson2006lower}) in the fully adaptive setting.

\paragraph{Staircase definition.} We define a staircase as an array of connecting grid points $(\Vec{x}_0, \Vec{x}_1, \ldots, \Vec{x}_t)$, for $0 \leq t \leq k$. A uniquely determined path called folded-segment is used to link every two consecutive points $(\Vec{x}_i, \Vec{x}_{i+1})$.
The start point $\Vec{x}_0$ is fixed at corner $\mathbf{1}$, and the remaining connecting points are chosen randomly in a smaller and smaller cube region with previous connecting points as corner. 
For $k$ rounds algorithms, we choose a distribution of staircases of ``length'' $k$, where the length is defined as the number of connecting points in the staircase minus $1$. 


\paragraph{Good staircases.} We say that a length $t$ staircase is ``{good}'' with respect to a deterministic algorithm $\mathcal{A}$ if for each $1 \leq i < t$, any point in the suffix of the staircase (i.e. after the connecting point $\Vec{x}_i$ \footnote{That is, the point $\Vec{x}_i$ is not included.}) is not queried in rounds $1, \ldots, i$, when $\mathcal{A}$ runs on the input generated by this staircase. 

The input functions generated by {good staircases} are like adversarial inputs: $\mathcal{A}$ could only (roughly) learn the location of the next connecting point $\Vec{x}_i$ in each round $i$, and still know little about the staircase from $\Vec{x}_i$ onwards.

We show that if ${9}/{10}$ of all possible length $k$ staircases are {good}, then the algorithm  will make a mistake with probability at least ${7}/{40}$. 
We ensure that each possible staircase is chosen with the same probability; their total number is easy to estimate.

Thus the main technical part of our proof is counting the number of {good staircases}.

\paragraph{Counting good staircases.}  We show the next properties about the \emph{prefix} of {good staircases}: 
\begin{description}
	\item[\texttt{P1}:] If $s$ is a {good staircase}, then any ``{prefix}'' $s'$ of $s$ is also a {good staircase}. 
	\item[\texttt{P2}:] Let $s_1,s_2$ be two {good staircases} with respect to algorithm $\mathcal{A}$. If the first $i$ connecting points of the staircases are same, then $\mathcal{A}$ will submit the same queries in rounds $1, \ldots, i$ when given as input the functions generated by $s_1$ and $ s_2$, respectively.
\end{description}


We first fix a good staircase $s^{(i-1)}$ of length $i-1$ and consider two good staircases $s_1, s_2$ of length $i$ that have $s^{(i-1)}$ as prefix. By property \texttt{P2}, the algorithm $\mathcal{A}$ will make the \emph{same} queries in rounds $1, \ldots, i$ when running on the inputs generated by $s_1$ and $s_2$, respectively. This enables estimating the total number of good staircases of length $i+1$  with $s^{(i-1)}$ as prefix. 
Then by summing over all good staircases of length $i-1$, we get a recursive equation between the number of good staircases of length $i-1$, $i$, and $i+1$. This will be used to show that most staircases of length $k$ are good.





\subsection{Local Search in Polynomial Rounds} \label{sec:poly_rounds_main}

When the number of rounds $k$ is polynomial in $n$, that is $k = n^{\alpha}$ for some constant $\alpha > 0$, the algorithm that yields the upper bound in Theorem~\ref{thm:1} is no longer efficient.

We design a different algorithm for this regime and also show an almost matching lower bound.
With polynomial rounds we can focus on $d \geq 3$. Recall that Sun and Yao~\cite{sun2009quantum} proved a lower bound of $\widetilde{\Omega}(n)$ for fully adaptive randomized algorithms in 2D and the divide-and-conquer algorithm by Llewellyn, Tovey, and Trick~\cite{llewellyn1989local} achieves this bound with only $O(\log(n))$ rounds.

\bigskip 

\noindent \textbf{Theorem  \ref{thm:2}} (Local search, polynomial rounds, restated): \emph{Consider the $d$-dimensional grid with side length $n\in \mathbb{N}$, where $d \geq 2$.
Let   $k = n^{\alpha} \in \mathbb{N}$, where $\alpha \in \left(0,  {d}/{2} \right)$ is a constant. The randomized query complexity of local search in $k$ rounds on the $d$-dimensional grid $[n]^d$ is:
	\begin{itemize}
		\item $\; \; \;$ $\Theta\left(n^{(d-1) - \frac{d-2}{d}\alpha}\right)$ when $d \geq 5$;
		\item $\; \; \;$ $O\left(n^{3 - \frac{\alpha}{2}}\right)$ and $\widetilde{\Omega}\left(n^{3 - \frac{\alpha}{2}}\right)$ when $d = 4$;
		\item $\; \; \;$ $O\left(n^{2 - \frac{\alpha}{3}}\right)$ and $\Omega\left(\max(n^{2-\frac{2\alpha}{3}}, n^{\frac{3}{2}})\right)$ when $d = 3$. 
	\end{itemize}
}



\subsubsection{Upper bound overview for local search in polynomial rounds.}    
Since the constant rounds algorithm is not optimal in this regime, we design an algorithm that randomly samples many points in round $1$ and then starts searching for the solution from the smallest valued point from round $1$. This is similar to the algorithm in~\cite{aldous1983minimization}, except the steepest descent part of Aldous' algorithm is highly sequential. 

To get better parallelism, we design a recursive procedure (``fractal-like steepest descent'') which parallelizes the steepest descent steps at the cost of more queries. We present the main ideas next: 

%

\paragraph{Sequential procedure.} Let $C(\Vec{x},s) \coloneqq \{\Vec{y} \in [n]^d: \|\Vec{y}-\Vec{x}\|_{\infty} \leq s \}$ be the set of grid points in the $d$-dimensional cube of side length $2 \cdot s$, centered at point $\Vec{x}$. Let $\rank(\Vec{x})$ be the number of points with smaller function value than point $\Vec{x}$. 

Assume that we already have a procedure $P$ and a number $s < n$ such that $P(\Vec{x})$ will either return a point $\Vec{y} \in C(\Vec{x},s)$ with $\rank(\Vec{y}) \leq \rank(\Vec{x}) - s$, or output a correct solution and halt. Suppose in both cases $P(\Vec{x})$ takes at most $r$ rounds and $Q$ queries in total for any $\Vec{x}$. 
For a concrete example, the base case of the procedure $P$ is $s$-steps of steepest descent. In this case, $P(\Vec{x})$ takes $s$ rounds and $O(s)$ queries.

If we want to find a point $\Vec{x}^*$ with $\rank(\Vec{x}^*) \leq \rank(\Vec{x}_0) - t \cdot s$ for any given $\Vec{x}_0$ or output a correct solution, the naive approach is to run $P$ \emph{sequentially} $t$ times, taking $$\Vec{y}_1 = P(\Vec{x}_0), \Vec{y}_2 = P(\Vec{y}_1), \ldots, \Vec{x}^* = \Vec{y}_t = P(\Vec{y}_{t-1}).$$
Since each call of $P$ must wait for the result from the previous call, the naive approach will take $t \cdot r$ rounds and $t \cdot Q$ queries.



\paragraph{Parallel procedure.} We can parallelize the previous procedure $P$ using auxiliary variables that are more expensive in queries, but  cheaper in rounds. For $i \in [t]$, let $\Vec{x}_i$ be the point with minimum function value on the boundary of cube $C(\Vec{x}_{i-1},s)$, which can be found in only \emph{one} round with $O(s^{d-1})$ queries after getting $\Vec{x}_{i-1}$, i.e., we get the location of $\Vec{x}_{i-1}$ at the start of round $i$. Next, we take $\Vec{y}_i$ to be $P(\Vec{x}_{i-1})$ instead of $P(\Vec{y}_{i-1})$; thus the location of $\Vec{y}_i$ will be available at round $i+r$.
To ensure correctness, we will compare the value of point $\Vec{y}_i$ with the value of point $\Vec{x}_i$. If $f(\Vec{x}_i) \leq f(\Vec{y}_i)$ then 
\begin{equation}\label{eq:rank_intro}
	\rank(\Vec{x}_i) \leq \rank(\Vec{y}_i) \leq \rank(\Vec{x}_{i-1}) - s \,.
\end{equation}
Otherwise, since $\Vec{y}_i \in C(\Vec{x}_{i-1},s)$ has smaller value than any point on the boundary of $C(\Vec{x}_{i-1},s)$, we could use a slightly modified version of the divide-and-conquer algorithm of \cite{llewellyn1989local} to find the solution within the sub-cube $C(\Vec{x},s)$ in $\log n$ rounds and $O(s^{d-1})$ queries, and then halt all running procedures. If $f(\Vec{x}_i) \leq f(\Vec{y}_i)$ holds for any $i \in [t]$, applying inequality~\ref{eq:rank_intro} for $t$ times we will get $\rank(\Vec{x}) \leq \rank(\Vec{x}_t) - t \cdot s$, so we could return $\Vec{x}^* = \Vec{x}_t$ in this case. This parallel approach will take only $t + r$ rounds and $O(t \cdot (Q + s^{d-1}))$ queries.

The base case of  procedure $P$  is  the steepest descent algorithm. Then, multiple layers of the recursive process as described above are implemented to ensure the round limit is met. 
The parameters of the algorithm, such as $s$ and $t$ above, are described in Appendix~\ref{sec:LS.algo_poly_rounds}.

\subsubsection{Lower bound overview for local search in polynomial rounds.}


For polynomial rounds, we still use a staircase construction and hide the solution at the end of the staircase. Recall the bottom left vertex will be the starting point of the staircase and the value of the function there is set to zero. Then the value decreases by one with each step along the path. The value of the function at any point outside the {staircase} is equal to the distance to the entrance of the staircase. 

However, the case of polynomial number of rounds is both conceptually and technically more challenging.  We explain the main ideas next; the full proof is in  Appendix~\ref{sec:LS.PolyLB}. 

\paragraph{Choice of random walk} Let $\mathcal{Q}_k$ denote the total number of queries allowed for an algorithm that runs in $k$ rounds. Let $\mathcal{Q}= \mathcal{Q}_k /k$ be the average number of queries in each round. 
The minimum point among $\mathcal{Q}_k / 2$ uniformly random queries will be at most $100 \cdot n^d / \mathcal{Q}_k$ steps away from the solution with high probability. 

We set the number of points in the staircase to $\Theta(n^d / \mathcal{Q}_k)$. This strikes a balance between two extremes.
If the staircase is too long, then an algorithm like steepest descent with warm-start~(\cite{aldous1983minimization}), which starts by querying many random points in round $1$, is likely to hit the staircase in a region that is O$(n^d/\mathcal{Q}_k)$ close to the endpoint.
If the staircase is too short, then an algorithm such as steepest descent will find the end of the staircase in a few rounds.


Since we choose the staircase via a random walk, there are two key factors affect the difficulty of finding the solution: the {mixing time} and what we call the ``local predictability'' of the walk. We now explain how the algorithm could exploit a slow mixing time or a high local predictability with two random walks in Figure~\ref{fig:walk_comparison}:

\begin{figure}[h!]
	\centering
	\includegraphics[scale=0.63]{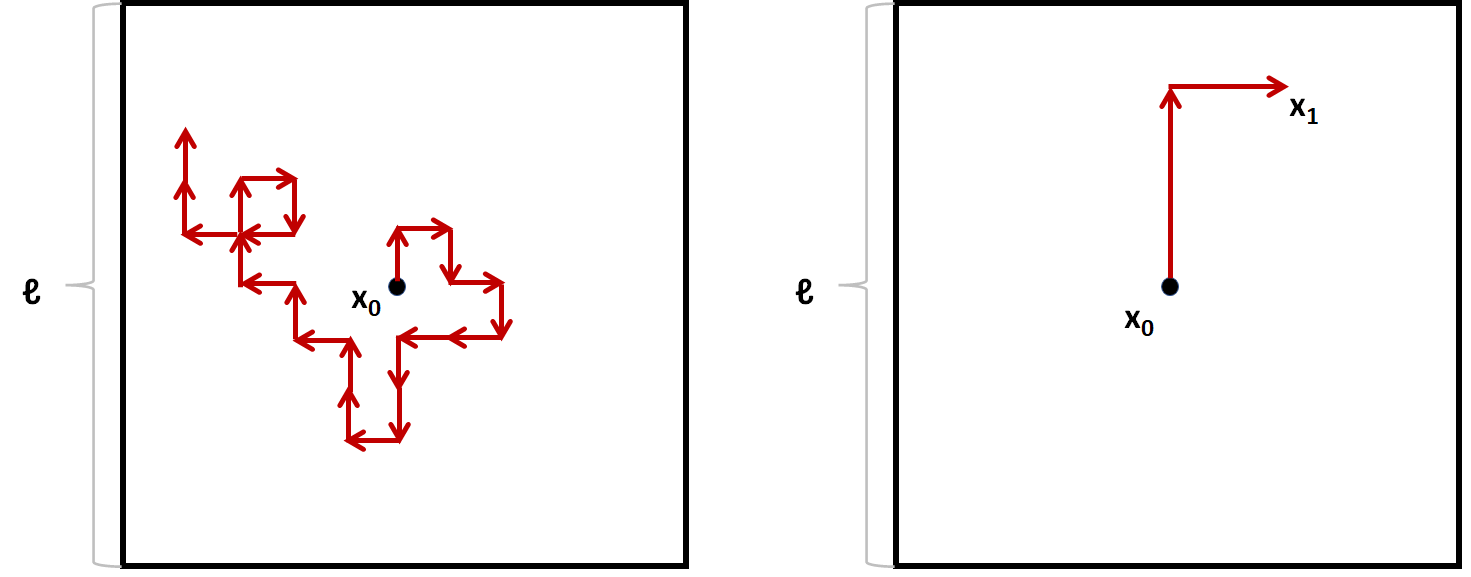}
	\caption{A 2D illustration for two types of random walk for the lower bound in polynomial rounds. The random walk on the left is more convoluted locally, but also mixes slower; the walk on the right has simpler structure, but mixes faster.}
	\label{fig:walk_comparison}
\end{figure}


The first random walk (Figure~\ref{fig:walk_comparison}, left) randomly moves to one of its neighbor in each step. This random walk has very low local predictability and may be difficult for fully adaptive algorithms to learn. However, it mixes rather slowly, which could be exploited by the following simple algorithm: In each round, it queries all the points on the boundary of the cube of side length $s$, with the current best point as the center. This random walk will keep dwelling within this cube before reaching the boundary of it for $\Theta(s^2)$ steps in expectation. Therefore, the best point on the boundary gets $\Theta(s^2)$ steps closer to the end point than the previous best point in expectation. However, we only allow this algorithm to get $O(s)$ steps closer to the end if we want a tight bound.

The second random walk (Figure~\ref{fig:walk_comparison}, right) moves from $x_0$ to a uniform random point $x_1$ selected from a cube of side length $\ell$ centered at $x_0$ and uses $d$ straight segments to connect these two points.
This random walk is more locally predictable, since each straight segment of length $O(\ell)$ could be learned with $O(\log \ell)$ queries via binary search. Thus, this walk could be tracked more efficiently than the naive steepest descent in the fully adaptive setting. On the other hand, this random walk mixes faster than the one in the left figure: it takes only $O(\ell)$ steps to mix in the cube region of side length $\ell$. Thus, the second random walk might be better when there aren't enough rounds to find every straight segment via binary search.

By controlling the parameter $\ell$, we get a trade-off between the {mixing time} and the {local predictability}. 
We choose $\ell = \Theta(\mathcal{Q}^{1/(d-1)}) = n^{1-2\alpha/d}$ for $k = n^{\alpha}$ in our proof, which corresponds to the max possible side length of the cube if using $\mathcal{Q}$ queries to cover its boundary.
\paragraph{Measuring the Progress}
{Good staircases} are a central concept in the proof for constant rounds. Roughly, the algorithm can learn the location of exactly one more connecting point in each round. However, such a requirement is too strong with polynomial rounds. 

Instead, we allow the algorithm to learn more than one connecting points in some rounds, while showing that it learns no more than two connecting points in each round in expectation. 

Using amortized analysis, we quantify the maximum possible \emph{progress} of an algorithm in each round by a constant $\Gamma$, which only depends on the random walk, not the algorithm. Constant $\Gamma$ could be viewed as the difficulty of the random walk, which takes both the mixing time and the local predictability into account.

\section{Brouwer} \label{sec:brouwer_main}

The problem of finding an $\epsilon$-approximate fixed point of a continuous function was defined in Section~\ref{sec:model_results}. To quantify the query complexity of this problem, it is useful to consider a discrete version, obtained by discretizing the unit cube $[0,1]^d$. The discrete version of Brouwer is equivalent to the approximate fixed point problem in the continuous setting (\cite{chen2005algorithms}). 


The  algorithm for  Brouwer is reminiscent of the constant rounds algorithm for local search. 
We divide the space in sub-cubes  then find the one guaranteed to have a solution by checking a boundary condition given in~\cite{chen2005algorithms}. Then a parity argument will show there is always a sub-cube satisfying the boundary condition. In the last round, the algorithm queries all the points in the remaining sub-cube and returns the solution. 

The lower bound for Brouwer is obtained by reducing local search instances {generated by staircases} to discrete fixed-point instances. We can naturally let the staircase within the local search instance to be a long path in discrete fixed-point problem. See Appendix~\ref{sec:BR} for details.

\section{Discussion}

A natural direction for future research is analyzing general graphs.  In the fully adaptive setting, prior work has successfully established  bounds for local search on general graphs by relating query complexity to structural properties such as expansion, separation number (which is equivalent to the treewidth), and vertex congestion.
Determining the relationship between these graph parameters and query complexity in rounds remains an interesting open question.

\section*{Acknowledgements} We thank the anonymous COLT and MSL reviewers for very helpful feedback and suggestions. In particular, the question regarding providing bounds for local search in rounds on  general graphs and the associated technical challenges was posed by one of the MSL reviewers. 

Simina Br\^anzei was supported in part by US National Science Foundation CAREER grant CCF-2238372.

\bibliographystyle{emss}
\bibliography{ref}

\newpage

\appendix

\section{Algorithm for Local Search in Constant Rounds}\label{sec:LS.ConstAlg}

In this section we give an algorithm for local search on the $d$-dimensional grid in constant rounds, which will imply the upper bound of Theorem~\ref{thm:1}.

\paragraph{Notation}
A $d$-dimensional \emph{cube} is a Cartesian product of $d$ connected (integer) intervals. We use \emph{cube} to indicate $d$-dimensional \emph{cube} for brevity, unless otherwise specified. The boundary of cube $C$ is defined as all the points $\Vec{x} \in C$ with fewer than $2d$ neighbors in $C$.

\begin{proof}[Proof of Upper Bound of Theorem~\ref{thm:1}]
	Given the $d$-dimensional grid $[n]^d$, consider a sequence of cubes contained in each other: $C_0 \supset C_1 \supset C_2 \supset \ldots C_{k-1}$, where $C_0 = [n]^d$ is the whole grid. 
	
	For each $0 \leq i < k$, set $\ell_i = n^{\frac{d^k-d^i}{d^k-1}}$ as the side length of cube $C_i$. The values of $\ell_i$ are chosen for balancing the number of queries in each round, which will be proved later.
	Note $\ell_i$ is an integer divisor of $n$. Consider Algorithm 1, which is illustrated on the following example. 
	
	\begin{figure}[h!]
		\centering
		\includegraphics[scale=0.45]{figures/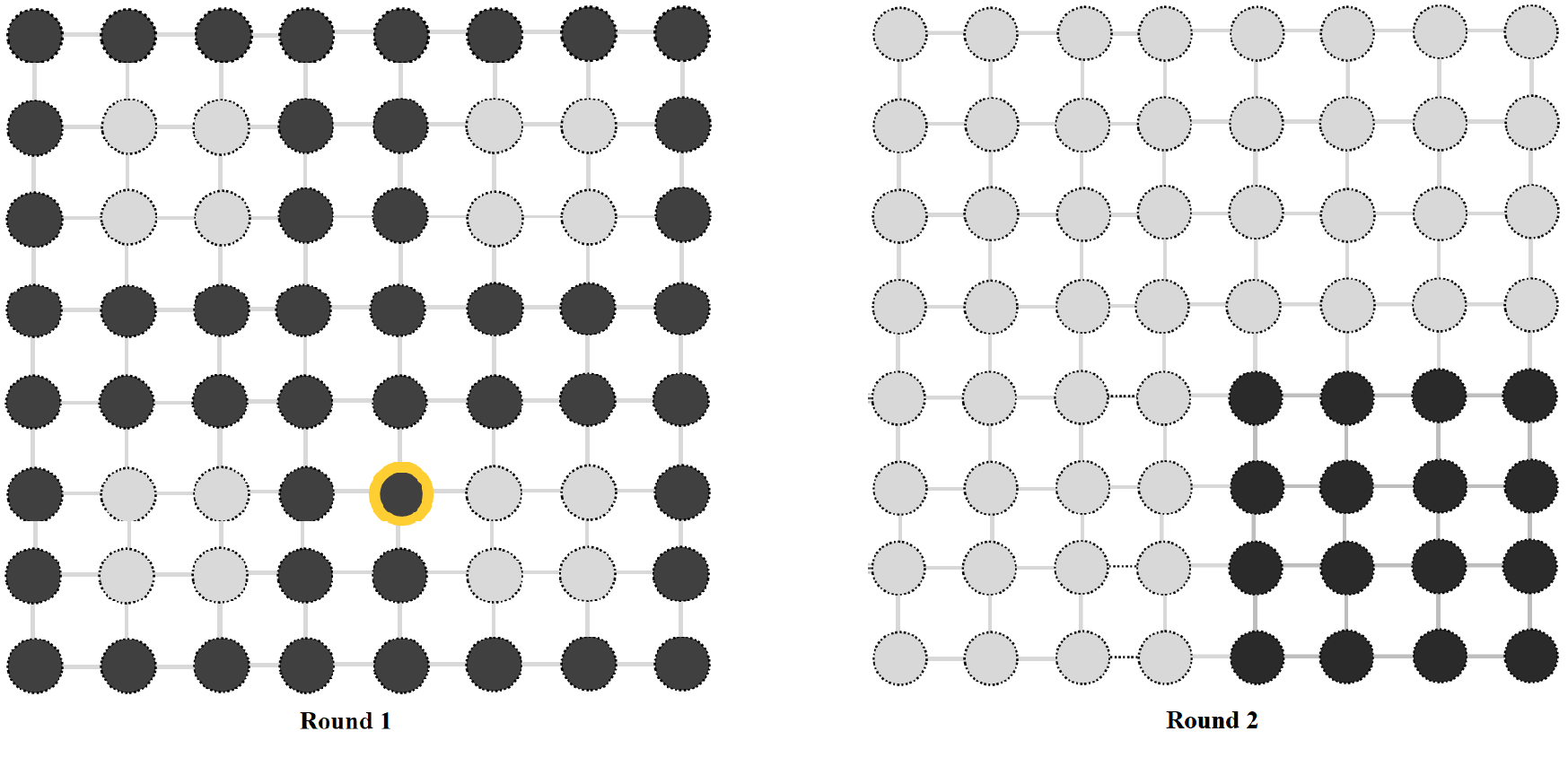}
		\caption{Two dimensional grid of size $8\times 8$. Suppose there are two rounds. In round $1$ the algorithm queries all the black points and selects the minimum among all these points (illustrated with a yellow boundary). In round 2, it queries the entire sub-square in which the minimum from the first round was found.}
		\label{fig:HDD}
	\end{figure}
	
	\paragraph{Algorithm 1: Local search in constant rounds}
	\begin{enumerate} 
		\item Initialize the current cube to $C_0$. 
		\item In each round $i \in \{1, \ldots, k-1\}$:
		\begin{itemize}
			\item $\; \; \;$ Divide the current cube $C_{i-1}$ into a set of mutually exclusive sub-cubes $C_i^1, \ldots, C_{i}^{n_i}$ of side length $\ell_i$ that cover $C_{i-1}$. Note that $n_i = (\ell_{i-1}/\ell_i)^d$ by definition.
			\item  $\; \; \;$ Query all the points on the boundary of sub-cubes $C_i^1, \ldots, C_{i}^{n_i}$. Let $\Vec{x}^*_i$ be the point with minimal value among them.
			\item  $\; \; \;$ Set $C_i = C_i^j$, where $C_i^j$ is the sub-cube that $\Vec{x}^*_i$ belongs to. 
		\end{itemize}
		\item In round $k$, query all the points in the current cube $C_{k-1}$ and find the solution point.
	\end{enumerate} 
	To argue that the algorithm finds a local minimum, note that in each round $i \leq k-1$, the steepest descent starting from $\Vec{x}^*_i$ will never leave the sub-cube $C_i$, since if it did it would have to exit $C_i$ through a point of even smaller value than $\Vec{x}_i^*$, which contradicts the definition of $\Vec{x}_i^*$. Thus there must exist a local optimum within $C_i$.
	
	Now we calculate the number of queries in each round. In round $i \in \{1, \ldots, k-1\}$, the number of points on the boundary of all sub-cubes $S_1, \ldots, S_{n_i}$ is
    \begin{align*}
        n_i \cdot (2d \cdot {\ell_i}^{d-1}) &= \left({\ell_{i-1}}/{\ell_i}\right)^d\cdot 2d \cdot {\ell_i}^{d-1} \\
        &= 2d \cdot \ell_{i-1}^{d}\cdot (1 / \ell_{i}) \\
        &= 2d\cdot n^{\frac{d\cdot(d^{k}-d^{i-1}) - (d^k - d^i)}{d^k-1}} \\
        &= 2d \cdot n^{\frac{d^{k+1} - d^k}{d^k - 1}}.
    \end{align*}
	The number of queries in round $k$ is 
 \[(\ell_{k-1})^{d} = n^{\frac{d^{k+1} - d^k}{d^k - 1}}\,.\] 
	Since $k$ and $d$ are constants, the algorithm makes $O\left( n^{\frac{d^{k+1} - d^k}{d^k - 1}} \right)$ queries in total as required.
\end{proof}

\bigskip 
\section{Algorithm for Local Search in Polynomial Rounds} \label{sec:LS.algo_poly_rounds}

In this section we prove the upper bound of Theorem~\ref{thm:2}. Thus we show the following upper bounds on the query complexity of local search in polynomial rounds: $O\bigl(n^{(d-1) - \frac{d-2}{d}\alpha}\bigr)$ when $d \geq 5$,  $O\bigl(n^{3 - \frac{\alpha}{2}}\bigr)$ when $d = 4$, and 
		 $O\left(n^{2 - \frac{\alpha}{3}}\right)$  when $d = 3$. 
The algorithm for polynomial number of rounds was described in Section~\ref{sec:poly_rounds_main}. Here we give the pseudocode with precise definitions and the proof of correctness.

\paragraph{Notation.} 
Let the number of rounds be $k = n^{\alpha}$, where $\alpha \in (0, {d}/{2})$ is a constant. 
Given a point $\Vec{x}$ on the grid, let $\rank(\Vec{x})$ be the number of grid points with smaller value than point $\Vec{x}$. Let $C(\Vec{x},s) \coloneqq \{\Vec{y} \in [n]^d: \|\Vec{y}-\Vec{x}\|_{\infty} \leq s \}$ be the set of grid points in the $d$-dimensional cube of side length $2 \cdot s$, centered at point $\Vec{x}$. We say a point $\Vec{x}$ is the minimum in a set $S$ if $\Vec{x} \in S$ and $f(\Vec{x}) \leq f(\Vec{y})$ for all $\Vec{y} \in S$.

\medskip

There are multiple query requests from different procedures in one round. These queries will first be collected together and then submitted at once at the end of each round.

\paragraph{\textcolor{midnightblue}{Algorithm 2: Local search in polynomial number of rounds}.} \label{algo:HDR}
\textcolor{midnightblue}{\textbf{\emph{Input:}}} Size of the instance $n$, dimension $d$, round limit $k = n^{\alpha}$, value function $f$. These are global parameters accessible from any subroutine.
\textcolor{midnightblue}{\emph{\textbf{Output:}}} Local minimum $\Vec{x}^*$ in $[n]^d$.
\smallskip 
\begin{enumerate}[label=\textbf{\textcolor{midnightblue}{\arabic*.}}]
	\item Set $\beta \leftarrow (d-1) - \alpha \bigl(\frac{d-2}{d}\bigr)$ ; $h \leftarrow \bigl \lfloor \frac{1}{\alpha}+\frac{d-2}{d} \bigr \rfloor + 1$ ; $\widetilde{k} \leftarrow {k}/{h}$ 
	\item Query ${Q}_1 = 100 \cdot h \cdot n^{\beta}$ points chosen u.a.r. in round $1$ and set $\Vec{x}_1$ to the minimum of these 
	\item Set $s \leftarrow ({1}/{h}) \cdot {n^{1+\alpha\left({d-2}\right)/{d}}}$
	\item {Return} \flsdescent{}$(s, h, \Vec{x}_1, 2)$ 
\end{enumerate}

\paragraph{\textcolor{midnightblue}{Procedure Fractal-like Steepest Descent (FLSD)}}
\textcolor{midnightblue}{\emph{\textbf{Input:}}} size $s$, depth $D$, grid point $\Vec{x}$, round $r$. 
\textcolor{midnightblue}{\emph{\textbf{Output:}}} point $\Vec{x}^* \in C(\Vec{x}, s)$ with $\rank(\Vec{x}^*) \leq \rank(\Vec{x}) - s$; if in the process of searching for such a point it finds a local minimum, then it outputs it and halts everything.
\smallskip 
\begin{enumerate}[label=\textbf{\textcolor{midnightblue}{\arabic*.}}]
	\item Set $\Vec{x}_0 \leftarrow \Vec{x} \; ; \; step \leftarrow {s}/{\widetilde{k}} \; \; $  \emph{\textcolor{mygray}{// executed in round $r$}}
	\item If $D = 1$ then: \emph{\textcolor{mygray}{// make $s$ steps of steepest descent, since $s$ is small enough when $D=1$}} 
	\begin{enumerate}
		\item For $i=0$ to $s-1$: \emph{\textcolor{mygray}{// {executed in rounds $r+i$ to $r+i+1$}}}
		\begin{enumerate}
			\item Query all the neighbors of $\Vec{x}_i$; let $\Vec{x}_{i+1}$ be the minimum among them  
			\item If $\Vec{x}_i = \Vec{x}_{i+1}$ then: \emph{\textcolor{mygray}{// {thus $\Vec{x}_i$ is a local min}}} 
			\begin{description} 
				\item Output $\Vec{x}^* \leftarrow \Vec{x}_i$ and halt all running \flsdescent{} and \dacs{} calls $\;$  \label{line:halt2} 
			\end{description}
		\end{enumerate}
		\item Return $\Vec{x}^* \leftarrow \Vec{x}_{s} \; \; $  \emph{\textcolor{mygray}{// executed in round $r+s$}} \label{line:ret1}
	\end{enumerate}
	
	\item For $i=0$ to $\widetilde{k}-1$: \emph{\textcolor{mygray}{// divide the whole task into $\widetilde{k}$ pieces; executed in rounds $r+i$ to $r+i+1$}}
	\begin{enumerate}
		\item $\Vec{y}_{i+1} \leftarrow$\flsdescent{}$(step, D-1, \Vec{x}_i, r+i)$ \emph{\textcolor{mygray}{// execute call in parallel with current procedure}}\label{line:callf}
		\item Query the \emph{boundary} of $C(\Vec{x}_i, step)$ to find the point $\Vec{x}_{i+1}$ with minimum value on it \label{line:query} \emph{\textcolor{mygray}{// making a ``giant step'' of size $\mbox{step}$ by cheap substitute $\Vec{x}_i$}} 
	\end{enumerate}
	
	\item For $i=0$ to $\widetilde{k}-1$: \emph{\textcolor{mygray}{// check if each giant step does make giant progress by using the feedback from sub-procedures; executed in round $r+i+(D-1) \cdot \widetilde{k}$, after $y_{i+1}$ was received in Step~\ref{line:callf}}}  
	\begin{enumerate}
		\item  If $f(\Vec{y}_{i+1}) < f(\Vec{x}_{i+1})$ then: \emph{\textcolor{mygray}{// a solution exists in $C(\Vec{x}_i, step)$, call \dacs{} to find it}} \label{line:cmp}
		\begin{enumerate}
			\item Set $\Vec{x}^* \leftarrow $ \dacs{}$(C(\Vec{x}_i, step))$ \emph{\textcolor{mygray}{// stop and wait for the result of \dacs{}}} \label{line:call}
			\item Output $\Vec{x}^*$ and halt all running \flsdescent{} and \dacs{} calls \label{line:halt}
		\end{enumerate}
	\end{enumerate}
	\item Return $\Vec{x}^* \leftarrow \Vec{x}_{\widetilde{k}}$ \emph{\textcolor{mygray}{// executed in round $r+D \cdot \widetilde{k}$}} \label{line:ret2} 
\end{enumerate}

\paragraph{\textcolor{midnightblue}{Procedure Divide-and-Conquer Search (DACS)}}  
\textcolor{midnightblue}{\emph{\textbf{Input:}}} cube $C_0$. 
\textcolor{midnightblue}{\emph{\textbf{Output:}}} Local minimum $\Vec{x}^*$ in $C_0$.
\begin{enumerate}[label=\textbf{\textcolor{midnightblue}{\arabic*.}}]
	\item Set $\Vec{x}^* \leftarrow $ {\emph{Null}}
	\item For $i = 1$ to $ \lceil \log_2 n\rceil$:
	\begin{enumerate}
		\item If $C_{i-1}$ contains only one point $\widetilde{\Vec{x}}$ then:
		\begin{enumerate}
			\item Set $\Vec{x}^* = \widetilde{\Vec{x}}$ ; {break}
		\end{enumerate}
		\item Partition $C_{i-1}$ into $2^d$ disjoint sub-cubes $C_i^1, \ldots, C_i^{2^d}$, each with side length half that of $C_{i-1}$
		\item Query all the points on the boundary of each sub-cube $C_i^1, \ldots, C_i^{2^d}$.
		\item  Let $\Vec{x}^*_i$ be the point with minimum value among \emph{all points queried} in $C_{i-1}$, including queries made by Algorithm 2 and all \flsdescent{} calls \emph{\textcolor{mygray}{// break ties lexicographically}} 
		\item  Let $C_i \in \{ C_i^1, \ldots, C_i^{2^d}\}$ be the unique sub-cube with $\Vec{x}^*_i \in C_i$.
	\end{enumerate}
	\item  Return $\Vec{x}^*$
\end{enumerate}

\paragraph{Analysis}

We first establish that Algorithm 2 is correct.

\begin{lemma}\label{prop:a1}
	If the procedure \flsdescent{}$(s, D, \Vec{x}, r)$
	\footnote{All the procedures \flsdescent{} we considered in the following analysis are initiated during the execution of Algorithm 2. Thus Lemma~\ref{prop:a1}, Lemma~\ref{prop:a2} and Lemma~\ref{prop:a4} may not work for \flsdescent{} with arbitrary parameters.}
	does return at Step~\ref{line:ret1} or Step~\ref{line:ret2}, it will return within $\widetilde{k} \cdot D$ rounds after the start round $r$;
	otherwise procedure \flsdescent{}$(s, D, \Vec{x}, r)$ will halt within at most $\widetilde{k}\cdot D+ \lceil \log n \rceil$ rounds after the start round $r$.
\end{lemma}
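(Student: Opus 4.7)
The plan is to prove Lemma~\ref{prop:a1} by induction on the depth parameter $D$. The statement carries two alternative conclusions which must both be maintained in the inductive hypothesis: either the call returns cleanly at Step~\ref{line:ret1} or Step~\ref{line:ret2}, and does so within $\widetilde{k} \cdot D$ rounds of the start round $r$; or it halts (possibly because a \dacs{} invocation in some descendant call triggered a global halt), and does so within $\widetilde{k} \cdot D + \lceil \log n \rceil$ rounds of $r$. Both alternatives will be established simultaneously in the same induction.

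For the base case $D = 1$, the procedure enters the $D=1$ branch and runs the steepest-descent loop for $s$ iterations, one round per iteration. No sub-call to \flsdescent{} or \dacs{} is ever issued on this branch, so either the loop exits at Step~\ref{line:halt2} at round $r + i$ for some $i \leq s$, or it returns at Step~\ref{line:ret1} at round $r + s$. In either case the bound reduces to verifying that the deepest-layer size $s$ satisfies $s \leq \widetilde{k}$, which follows from plugging the initial size $(1/h)\,n^{1+\alpha(d-2)/d}$ into the recursion $s_{D-1} = s_D/\widetilde{k}$ and using the definition $h = \lfloor 1/\alpha + (d-2)/d \rfloor + 1$; the strict inequality $\alpha h > 1 + \alpha(d-2)/d$ gives the bound for all sufficiently large $n$.

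For the inductive step $D > 1$, the outer procedure runs the Step~\textbf{\textcolor{midnightblue}{3}} loop over rounds $r, r+1, \ldots, r + \widetilde{k} - 1$, issuing in iteration $i$ a parallel recursive call \flsdescent{}$(step, D-1, \Vec{x}_i, r+i)$. Applying the inductive hypothesis to this call gives two sub-cases: either it returns cleanly, and $\Vec{y}_{i+1}$ therefore becomes available by round $r + i + \widetilde{k}(D-1)$; or it halts at some round at most $(r+i) + \widetilde{k}(D-1) + \lceil \log n \rceil$, in which case the outer procedure is halted along with it at a round bounded by $r + \widetilde{k} D + \lceil \log n \rceil$, matching the second alternative of the lemma. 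In the no-halt sub-case, the Step~\textbf{\textcolor{midnightblue}{4}} loop begins iteration $i$ in round $r + i + \widetilde{k}(D-1)$ (the first round in which $\Vec{y}_{i+1}$ is available and the test in Step~\ref{line:cmp} can be evaluated). If the test fails for all $i$, the clean return at Step~\ref{line:ret2} occurs in round $r + (\widetilde{k}-1) + \widetilde{k}(D-1) + 1 = r + \widetilde{k} D$; if it succeeds at some $i$, the \dacs{} call in Step~\ref{line:call} takes $\lceil \log n \rceil$ additional rounds and the procedure halts by round $r + i + \widetilde{k}(D-1) + \lceil \log n \rceil \leq r + \widetilde{k} D + \lceil \log n \rceil$.

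The main obstacle is the careful bookkeeping of rounds across the asynchronous interaction between the Step~\textbf{\textcolor{midnightblue}{3}} and Step~\textbf{\textcolor{midnightblue}{4}} loops: Step~\textbf{\textcolor{midnightblue}{3}} issues $\widetilde{k}$ recursive calls whose return values $\Vec{y}_{i+1}$ are consumed only later by Step~\textbf{\textcolor{midnightblue}{4}}, and a halt anywhere in the call tree must propagate up to halt the current procedure without violating the $+ \lceil \log n \rceil$ slack. The induction must be set up so that the round-accounting is tight enough for the iteration index $i \leq \widetilde{k} - 1$ to be absorbed into the $\widetilde{k}(D-1) \to \widetilde{k} D$ gap, which is exactly what the argument above does.
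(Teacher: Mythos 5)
Your proof is correct and follows essentially the same route as the paper: induction on the depth $D$, using the observation that $s$ is divided by $\widetilde{k}$ at each level so that $s \leq \widetilde{k}$ when $D = 1$, and then noting that the $\widetilde{k}$ parallel recursive calls in the Step~3 loop each take at most $\widetilde{k}(D-1)$ rounds, yielding a total of $\widetilde{k} D$ rounds (plus $\lceil \log n \rceil$ if \dacs{} is invoked). Your account is more explicit about the asynchronous round-by-round bookkeeping between the Step~3 and Step~4 loops and about verifying both clean-return and halt branches at every level, but the argument structure and the key inequality $\alpha h > 1 + \alpha(d-2)/d$ are the same as the paper's.
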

\begin{proof}
	We proceed by induction on the depth $D$.
	The base case is when $D=1$. By the definition of $\widetilde{k}$ and $h$, we have 
	$$\left(1/{\;\widetilde{k}\;}^h\right) \cdot n^{1+\frac{d-2}{d}\alpha} < 1\,.$$ 
	Also notice that the parameter $s$ will be divided by $\widetilde{k}$ when $D$ decreases by one, so when $D=1$, the current size $s$ will be at most $\widetilde{k}$, i.e., the steepest descent will return in $\widetilde{k}$ rounds. Assume it holds for $1, \ldots, D-1$. 
	
	For any $D > 1$, all the queries made by the procedure itself need $\widetilde{k}$ rounds and each sub-procedure will take at most $\widetilde{k}\cdot (D-1)$ rounds by the induction hypothesis. Since all the procedures are independent of each other and could be executed in parallel, the total number of rounds needed for this procedure is $\widetilde{k} \cdot D$. The first part of the lemma thus follows by induction.
	
	Finally, recall that the divide-and-conquer procedure \dacs{} takes $\lceil \log n \rceil$ rounds, so the procedure will halt within $\widetilde{k} \cdot D+ \lceil \log n \rceil$ rounds.
\end{proof}

\begin{lemma}\label{prop:a2}
	If the procedure \flsdescent{}$(s, D, \Vec{x}, r)$ does return a point $\Vec{x}^*$ at Step~\ref{line:ret1} or Step~\ref{line:ret2}, then $\Vec{x}^*$ is in the cube $C(\Vec{x}, s)$ and satisfy the inequality $\rank(\Vec{x}^*) \leq \rank(\Vec{x}) - s$.
\end{lemma}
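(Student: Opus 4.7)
The plan is to proceed by induction on the depth parameter $D$, matching the recursive structure of \flsdescent{}. For each returned point I must verify two things: (i) the containment $\Vec{x}^* \in C(\Vec{x}, s)$ and (ii) the rank inequality $rank(\Vec{x}^*) \leq rank(\Vec{x}) - s$. Throughout, I will use that $rank$ counts strictly smaller values and that $\|\cdot\|_\infty \leq \|\cdot\|_1$ on the grid. I also assume a consistent global tie-breaking rule so that ``minimum'' and $rank$ are well-defined.

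For the base case $D=1$, the procedure returns $\Vec{x}^* = \Vec{x}_s$ at Step~\ref{line:ret1} only after completing $s$ steepest descent steps without hitting the halting condition $\Vec{x}_i = \Vec{x}_{i+1}$. Thus in each step $\Vec{x}_{i+1}$ is a neighbor of $\Vec{x}_i$ with strictly smaller $f$-value. Two consequences follow immediately: each step moves at $L_1$-distance exactly $1$, so $\|\Vec{x}_s - \Vec{x}\|_{\infty} \leq \|\Vec{x}_s - \Vec{x}\|_1 \leq s$, giving $\Vec{x}_s \in C(\Vec{x}, s)$; and since $\Vec{x}_{i+1}$ itself has value strictly less than $f(\Vec{x}_i)$, every grid point counted in $rank(\Vec{x}_{i+1})$ is counted in $rank(\Vec{x}_i)$, together with $\Vec{x}_{i+1}$, so $rank(\Vec{x}_{i+1}) \leq rank(\Vec{x}_i) - 1$. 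Chaining these inequalities $s$ times yields $rank(\Vec{x}_s) \leq rank(\Vec{x}) - s$.

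For the inductive step $D > 1$, suppose the claim holds for all smaller depths. If the procedure returns at Step~\ref{line:ret2}, then every comparison at Step~\ref{line:cmp} failed, i.e., $f(\Vec{x}_{i+1}) \leq f(\Vec{y}_{i+1})$ for each $i \in \{0,\ldots,\widetilde{k}-1\}$. Two subcases arise for $\Vec{y}_{i+1}$: either the recursive call \flsdescent{}$(step, D-1, \Vec{x}_i, r+i)$ itself halted the whole algorithm (in which case we are not in the current scenario), or it returned a value, and by the induction hypothesis $\Vec{y}_{i+1} \in C(\Vec{x}_i, step)$ with $rank(\Vec{y}_{i+1}) \leq rank(\Vec{x}_i) - step$. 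Combining $f(\Vec{x}_{i+1}) \leq f(\Vec{y}_{i+1})$ with the definition of $rank$ gives $rank(\Vec{x}_{i+1}) \leq rank(\Vec{y}_{i+1}) \leq rank(\Vec{x}_i) - step$.

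Summing across the $\widetilde{k}$ iterations and using $\widetilde{k}\cdot step = s$, I will conclude
\[ rank(\Vec{x}_{\widetilde{k}}) \leq rank(\Vec{x}_0) - \widetilde{k}\cdot step = rank(\Vec{x}) - s. \]
For containment, Step~\ref{line:query} ensures $\Vec{x}_{i+1}$ lies on the boundary of $C(\Vec{x}_i, step)$, hence $\|\Vec{x}_{i+1} - \Vec{x}_i\|_\infty \leq step$; by the triangle inequality $\|\Vec{x}_{\widetilde{k}} - \Vec{x}\|_\infty \leq \widetilde{k}\cdot step = s$, so $\Vec{x}^* \in C(\Vec{x}, s)$. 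I expect the main subtlety to be the bookkeeping around the failure condition at Step~\ref{line:cmp}: one must argue precisely that reaching Step~\ref{line:ret2} implies $f(\Vec{x}_{i+1}) \leq f(\Vec{y}_{i+1})$ for every $i$, and that the recursive returns required by the induction hypothesis are indeed available (which is exactly what Lemma~\ref{prop:a1} guarantees). Everything else is routine once the induction on $D$ is set up.
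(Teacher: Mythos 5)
Your proof is correct and takes essentially the same approach as the paper: induction on the depth $D$, handling the base case $D=1$ via $s$ steps of steepest descent (each step decreases rank by at least one and moves $L_1$-distance one), and the inductive step by chaining $rank(\Vec{x}_{i+1}) \leq rank(\Vec{y}_{i+1}) \leq rank(\Vec{x}_i) - step$ over the $\widetilde{k}$ iterations, with containment from $\widetilde{k}\cdot step = s$. Your version is a bit more explicit than the paper's (e.g.\ spelling out that $f(\Vec{x}_{i+1}) \leq f(\Vec{y}_{i+1})$ yields the rank comparison, and the $L_1$ vs.\ $L_\infty$ reasoning in the base case), but it is the same argument.
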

\begin{proof}
	We proceed by induction on the depth $D$. The base case is when $D=1$. Then we know that $s \leq \widetilde{k}$ by the same argument in Lemma~\ref{prop:a1}, thus $s$ steps of steepest descent will ensure that $\Vec{x}^* \in C(\Vec{x}, s)$ and $\rank(\Vec{x}^*) \leq \rank(\Vec{x}) - s$. Assume it holds for $1, \ldots, D-1$ and show for $D$.
	
	For any $D > 1$, by Step~\ref{line:cmp} we have $\rank(\Vec{x}_{i}) \leq \rank(\Vec{y}_{i})$ for any $1\leq i \leq \widetilde{k}$; 
	by the induction hypothesis, we have $\rank(\Vec{y}_{i}) \leq \rank(\Vec{x}_{i-1}) - step$ for any $1\leq i \leq \widetilde{k}$. Combining them we get $\rank(\Vec{x}_{i}) \leq \rank(\Vec{x}_{i-1}) - step$ for any $1\leq i \leq \widetilde{k}$. Thus 
	\begin{equation*}
		\rank(\Vec{x}^*) = \rank(\Vec{x}_{\widetilde{k}}) \leq \rank(\Vec{x}_0) - \widetilde{k} \cdot step = \rank(\Vec{x})-s.
	\end{equation*}
	Also notice that the $L_{\infty}$ distance from $\Vec{x}^*$ to $\Vec{x}$ is at most $\widetilde{k} \cdot step = s$, i.e., $\Vec{x}^* \in C(\Vec{x}, s)$.
	This concludes the proof of the lemma for any depth $D$.
\end{proof}

\begin{lemma}\label{prop:a5}
	The point $\Vec{x}^*$ returned by FLSD at Step~\ref{line:call} is a local minimum.
\end{lemma}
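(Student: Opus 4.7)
My plan is to establish two facts: (i) the cube $C(\Vec{x}_i, step)$ passed to \dacs{} at Step~\ref{line:call} contains at least one local minimum of $f$; and (ii) the divide-and-conquer subroutine \dacs{} correctly locates such a local minimum.

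For (i), I would first argue that by Lemma~\ref{prop:a2} applied to the recursive call of \flsdescent{} at Step~\ref{line:callf}, the point $\Vec{y}_{i+1}$ belongs to $C(\Vec{x}_i, step)$. By construction at Step~\ref{line:query}, $\Vec{x}_{i+1}$ minimizes $f$ over the boundary of $C(\Vec{x}_i, step)$. The branch at Step~\ref{line:cmp} is entered precisely when $f(\Vec{y}_{i+1}) < f(\Vec{x}_{i+1})$, so $\Vec{y}_{i+1}$ cannot lie on the boundary of $C(\Vec{x}_i, step)$, and moreover the global minimum $\Vec{p}$ of $f$ restricted to $C(\Vec{x}_i, step)$ is attained at an interior point. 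Since every grid neighbour of an interior point of $C(\Vec{x}_i, step)$ also lies in $C(\Vec{x}_i, step)$, $\Vec{p}$ satisfies $f(\Vec{p}) \leq f(\Vec{q})$ for all its neighbours $\Vec{q}$ in $[n]^d$, hence is a local minimum of $f$.

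For (ii), I would prove by induction on the iteration index $j$ the invariant: the current sub-cube $C_j$ maintained by \dacs{} contains a local minimum of $f$. The base case $j=0$ is exactly (i). For the inductive step, I plan to use a Llewellyn--Tovey--Trick style separator argument: the boundary of $C_j$ consists of points queried either in iteration $j$ of \dacs{} (which queries the boundaries of all $2^d$ sub-cubes of $C_{j-1}$) or, in the case of the outermost boundary, by the parent \flsdescent{} call at Step~\ref{line:query}. Since $\Vec{x}^*_j \in C_j$ is defined as the minimum over \emph{all} queried points in $C_{j-1}$ (including those produced by parallel \flsdescent{} calls), every boundary point of $C_j$ has value at least $f(\Vec{x}^*_j)$. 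Consequently, any steepest-descent path starting from $\Vec{x}^*_j$ cannot cross the boundary of $C_j$, since doing so would require descending into a point of value at least $f(\Vec{x}^*_j)$. Hence steepest descent terminates at a local minimum located in $C_j$, maintaining the invariant.

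The proof then closes by noting that \dacs{} halves the side length of $C_j$ each iteration, so after at most $\lceil \log_2 n \rceil$ iterations $C_j$ contains a single point $\widetilde{\Vec{x}}$; by the invariant this point must be the local minimum, and it is returned as $\Vec{x}^*$. I expect the main subtlety to lie in the inductive step: one must carefully verify that every vertex on the boundary of $C_j$ has actually been queried, so that no unqueried ``escape route'' can allow a steepest-descent trajectory to leave $C_j$ undetected. This relies crucially on the bookkeeping choice that $\Vec{x}^*_j$ is the minimum over all previously queried points within $C_{j-1}$, not merely over the queries made in the current iteration of \dacs{}.
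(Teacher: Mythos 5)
Your proof is correct and follows essentially the same approach as the paper: establish that $\Vec{y}_{i+1}$ witnesses a value strictly below the boundary minimum of $C(\Vec{x}_i, step)$, then show inductively that because $\mathcal{D}.\Vec{x}^*_j$ is the minimum over \emph{all} previously queried points in $C_{j-1}$ (including the queried boundaries of $C_j$ and of $C_0$), steepest descent from $\mathcal{D}.\Vec{x}^*_j$ cannot leave $C_j$, so the terminal single-point cube is a local minimum. Your formulation is somewhat more explicit (stating the invariant ``$C_j$ contains a local minimum'' and the interior-minimum observation for $C_0$), but the key idea — the bookkeeping that $\Vec{x}^*_j$ minimizes over all queried points, not merely the current \dacs{} iteration — is exactly what the paper's proof hinges on.
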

\begin{proof}
	We use notation $\mathcal{D}.\Vec{x}$ to denote the variable $\Vec{x}$ in the procedure \dacs{} and $\mathcal{F}.\Vec{x}$ to denote the variable $\Vec{x}$ in the procedure \flsdescent{} which calls the procedure \dacs{}.
	
	Recall $\mathcal{D}.\Vec{x}^*_i$ is the point with minimum value among \emph{all points queried} in the sub-cube $C_{i-1}$, including queries made by the entire algorithm (Algorithm 2) and all \flsdescent{} calls.
	By Step~\ref{line:cmp}, we have $f(\mathcal{F}.\Vec{y}_{i+1}) < f(\mathcal{F}.\Vec{x}_{i+1})$. Then for each $\mathcal{D}.\Vec{x}^*_i$ and index $i$, we have the following inequality by definition of $\mathcal{D}.x_{i}^*$: $$f(\mathcal{D}.\Vec{x}^*_i) \leq f(\mathcal{F}.\Vec{y}_{i+1}) < f(\mathcal{F}.\Vec{x}_{i+1})\,.$$ Therefore the steepest descent from $\mathcal{D}.\Vec{x}^*_i$ will never leave the cube $\mathcal{D}.C_i$, especially the cube $\mathcal{D}.C_0$. 
	
	As noted above, $\mathcal{D}.\Vec{x}^*_i$ is taken as the minimum of \emph{all} queried points in $\mathcal{D}.C_{i-1}$, not just for points on the boundary. This makes sure that the point $\mathcal{F}.\Vec{y}_{i+1}$ is taken into account, thus any point with smaller or equal value than $\mathcal{F}.\Vec{y}_{i+1}$ could not escape from $\mathcal{D}.C_0$ by steepest descent; otherwise, the point $\mathcal{F}.\Vec{x}_{i+1}$ itself may be the minimum during all queries on the boundaries, but it could possibly escape the cube $\mathcal{D}.C_0$  by steepest descent, and cause the procedure \dacs{} return a non-local optimum. 
	
	Finally, let $\widetilde{C}$ be the cube that consists only of the point $\mathcal{D}.\widetilde{\Vec{x}}$ in the \dacs{} procedure. 
    By previous argument, the steepest descent from $\mathcal{D}.\Vec{x}^* = \mathcal{D}.\widetilde{\Vec{x}}$ doesn't leave the cube $\widetilde{C}$, which means that $\mathcal{D}.\Vec{x}^*$ is a local optimum.
\end{proof}

\begin{lemma}\label{prop:a3}
	Algorithm 2 outputs the correct answer with probability at least ${9}/{10}$.
\end{lemma}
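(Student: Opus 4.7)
The plan is to observe that the algorithm can output an incorrect point only if the top-level call \flsdescent{}$(s, h, \Vec{x}_1, 2)$ terminates via Step~\ref{line:ret1} or Step~\ref{line:ret2}. Indeed, Lemma~\ref{prop:a5} guarantees that any point output at Step~\ref{line:call} is a local minimum, and the base-case termination at Step~\ref{line:halt2} is reached only when $\Vec{x}_i$ equals the minimum among its own neighbours, hence is a local minimum by definition. So it suffices to bound the probability that the top-level \flsdescent{} returns rather than halts.

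By Lemma~\ref{prop:a2}, if the top-level call does return a point $\Vec{x}^*$ at Step~\ref{line:ret1} or Step~\ref{line:ret2}, then $rank(\Vec{x}^*) \leq rank(\Vec{x}_1) - s$. Since ranks are non-negative, this is only possible when $rank(\Vec{x}_1) \geq s$. Consequently, if the sampled warm-start point $\Vec{x}_1$ satisfies $rank(\Vec{x}_1) < s$, the top-level \flsdescent{} cannot return and must halt via Step~\ref{line:halt} or Step~\ref{line:halt2}, in which case (by the first paragraph) the algorithm outputs a local minimum. Thus it remains to show
\[
\Pr[\,rank(\Vec{x}_1) < s\,] \;\geq\; \tfrac{9}{10}.
\]

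To bound this probability, first I would plug in the parameter choices. Substituting $s = (1/h)\cdot n^{1+\alpha(d-2)/d}$ and $\beta = (d-1) - \alpha(d-2)/d$ gives
\[
\frac{s}{n^d} \;=\; \frac{1}{h}\cdot n^{-\beta}.
\]
The set of points with rank strictly less than $s$ has size $s$, so a single uniformly random query in Step~2 of Algorithm 2 lands in this ``good'' set with probability $s/n^d = (1/h) n^{-\beta}$. Since the $100\cdot h\cdot n^{\beta}$ warm-start queries are independent,
\[
\Pr[\,rank(\Vec{x}_1) \geq s\,] \;\leq\; \left(1 - \frac{1}{h}\,n^{-\beta}\right)^{100\,h\,n^{\beta}} \;\leq\; e^{-100}\;\leq\;\tfrac{1}{10}.
\]
Combining with the reduction above yields the claimed $9/10$ success probability.

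The main obstacle is conceptual rather than computational: one has to realise that correctness of the output is decoupled from correctness of the recursive calls — deeper \flsdescent{} invocations are \emph{allowed} to return non-optimal points, since their outputs are only used internally to drive the rank comparison at Step~\ref{line:cmp}, and a correctness failure can only propagate out through the top-level return. Once this reduction is made, the remaining work is a one-line concentration bound on the warm-start step, using the parameters $\beta$ and $h$ that were chosen precisely so that $s\cdot (\text{sample size})/n^d$ is a large constant.
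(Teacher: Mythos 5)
Your high-level structure matches the paper's (reduce correctness to the halting of the top-level \flsdescent{} call, then bound the probability of a bad warm start), and your warm-start concentration bound is fine. But there is a genuine gap in the middle step: you show that if $rank(\Vec{x}_1) < s$ then the top-level \flsdescent{} cannot \emph{return}, hence ``must halt''. That is true, but it does not by itself establish correctness, because Algorithm~2 has a hard budget of $k$ rounds, and you never bound \emph{when} the halt occurs. By Lemma~\ref{prop:a1} the top-level call (started in round $2$ with depth $h$) is only guaranteed to halt within $\widetilde{k}h + \lceil\log n\rceil = k + \lceil\log n\rceil$ rounds after round $2$, i.e.\ possibly well past round $k$. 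Concretely, with the bound $rank(\Vec{x}_1) < s$ the telescoping chain $rank(\Vec{x}_j) \leq rank(\Vec{x}_1) - j\cdot step$ only becomes negative at $j = \widetilde{k}$, so the first comparison forced to trigger could be at $i = \widetilde{k}-1$, which is executed at round $2 + (\widetilde{k}-1) + (h-1)\widetilde{k} = k+1$, and the subsequent \dacs{} call then runs $\lceil\log n\rceil$ further rounds — all beyond the budget.

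The paper closes exactly this gap: it uses the slightly stronger (and just as easily obtained) bound $rank(\Vec{x}_1) \leq s/2$ with probability $\geq 9/10$, so that the rank chain already produces a contradiction at $t = \lfloor 2\widetilde{k}/3 \rfloor$. This forces a halt to be triggered at some iteration $i < t$, and combined with Lemma~\ref{prop:a1} the \dacs{} call then finishes by round $T = \widetilde{k}\bigl((h-1) + \tfrac{2}{3}\bigr) + \lceil\log n\rceil < k$. You would need to add this timing argument (choose a rank threshold strictly smaller than $s$, and a cutoff index $t$ strictly smaller than $\widetilde{k}$ such that the forced halt plus the $\lceil\log n\rceil$-round \dacs{} still fit inside $k$ rounds); the pure ``cannot return, so must halt'' reduction is not sufficient on its own.
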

\begin{proof}
	The point $\Vec{x}^*$ output at Step~\ref{line:halt2} is always a local optimum. By Lemma~\ref{prop:a5}, the point $\Vec{x}^*$ output at Step~\ref{line:halt} is also a local optimum. Thus, we only need to argue that Algorithm 2 will output the solution and halt with probability at least ${9}/{10}$. 

    Recall that Algorithm 2 queries ${Q}_1 = 100h \cdot n^{\beta}$ random points in the first round. 
	We have \[\frac{n^d}{Q_1} = \frac{1}{100h} n^{1+\frac{d-2}{d}\alpha}.\]
    
    Then, by definition,
    \[ \Pr\left[ \rank(\Vec{x}_1) > \frac{1}{2h} \cdot n^{1+\frac{d-2}{d}\alpha}\right] = (1 - \frac{\frac{1}{2h} \cdot n^{1+\frac{d-2}{d}\alpha}}{n^d})^{{Q}_1} = (1 - \frac{50}{{Q}_1})^{{Q}_1} \leq  {1}/{10}.\]
    
	In other words, with probability at least ${9}/{10}$, we have 
	\begin{align} \label{eq:rank_bound_propa3} 
		\rank(\Vec{x}_1) \leq \frac{1}{2h} \cdot n^{1+\frac{d-2}{d}\alpha}\,.
	\end{align} 
	If inequality (\ref{eq:rank_bound_propa3}) holds, then the procedure \flsdescent{}$({1}/{h} \cdot {n^{1+\frac{d-2}{d}\alpha}}, h, \Vec{x}_1, 2)$ should halt within a number of rounds of at most 
	$$T = \widetilde{k} \cdot \left((h-1) + \frac{2}{3}\right) + \lceil \log n \rceil < \widetilde{k} \cdot h = k .$$
	Otherwise, let $t=\lfloor {2\widetilde{k}}/{3} \rfloor$. By Lemma~\ref{prop:a1} we know that $y_{t}$ is already available from Step~\ref{line:callf} by round $T$. Then by Lemma~\ref{prop:a2}, we have 
	$$\rank(\Vec{y}_{t}) < \left(\frac{1}{2}-\frac{2}{3}\right) \cdot \left(\frac{1}{h} n^{1+\frac{d-2}{d}\alpha}\right) < 0,$$ which is impossible. Thus, the call \flsdescent{}$(1/h \cdot {n^{1+\frac{d-2}{d}\alpha}}, h, \Vec{x}_1, 2)$ must halt within $T$ rounds in this case, which completes the argument.
\end{proof}

Now consider the total number of queries made by Algorithm 2. 

\begin{lemma}\label{prop:a4}
	A call of procedure \flsdescent{}$(s, D, \Vec{x}, r)$ will make $O(\lceil s/\widetilde{k} \rceil^{d-1}\cdot \widetilde{k})$ number of queries, including the queries made by its sub-procedure.
\end{lemma}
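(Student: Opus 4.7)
The plan is to analyse the query complexity by unfolding the call tree of \flsdescent{} rather than attempting a single-variable induction on $D$, because the target bound $O(\lceil s/\widetilde{k} \rceil^{d-1} \widetilde{k})$ behaves qualitatively differently when $d = 2$ versus $d \geq 3$, and a level-by-level sum makes the difference transparent. First I would describe the structure of the call tree rooted at \flsdescent{}$(s, D, \Vec{x}, r)$: at level $\ell \in \{0, 1, \ldots, D-1\}$ the tree contains $\widetilde{k}^\ell$ nodes, each corresponding to a recursive call with size parameter $s_\ell \coloneqq s/\widetilde{k}^\ell$ and depth parameter $D - \ell$. Each node at a non-base level $\ell < D-1$ submits $\widetilde{k}$ cube-boundary queries of side length $2 s_{\ell+1}$ in Step 3b, contributing $O(\widetilde{k} \cdot s_{\ell+1}^{d-1})$ queries per node. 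Each node at the base level $\ell = D-1$ executes steepest descent, costing $O(s_{D-1})$ queries per node.

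Next, I would sum the direct boundary queries level by level. The non-base levels contribute
\[
\sum_{\ell=0}^{D-2} \widetilde{k}^{\ell} \cdot \widetilde{k} \cdot (s/\widetilde{k}^{\ell+1})^{d-1} \;=\; s^{d-1} \sum_{\ell=0}^{D-2} \widetilde{k}^{-(\ell+1)(d-2)}.
\]
For $d \geq 3$ this is a geometric series in $\ell$ with ratio $\widetilde{k}^{-(d-2)} < 1$, dominated by its $\ell = 0$ term, yielding $O(s^{d-1}/\widetilde{k}^{d-2}) = O((s/\widetilde{k})^{d-1} \widetilde{k})$. For $d = 2$ each summand equals $s$ and there are $D - 1 = O(1)$ of them since $D \leq h$ is constant, so the sum is $O(s)$. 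The base-level steepest descent contributes in total $\widetilde{k}^{D-1} \cdot O(s/\widetilde{k}^{D-1}) = O(s)$ queries, which is absorbed into the bound via the elementary inequality $s \leq \lceil s/\widetilde{k} \rceil \cdot \widetilde{k} \leq \lceil s/\widetilde{k} \rceil^{d-1} \cdot \widetilde{k}$.

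Finally, I would account for the contribution of \dacs{}. Any invocation in Step 4a runs on a cube of side length $2 s_{\ell+1}$ for $\lceil \log n \rceil$ rounds, costing $O(s_{\ell+1}^{d-1} \log n)$ queries; since $\widetilde{k} = n^\alpha/h$ is polynomial in $n$ while $\log n$ is merely logarithmic, this is dominated by the $O(\widetilde{k} \cdot s_{\ell+1}^{d-1})$ boundary queries of its enclosing \flsdescent{} node, which were already counted. Moreover \dacs{} halts all running procedures, so at most one such call is actually charged along any root-to-leaf branch. Combining all three contributions yields the claimed bound $O(\lceil s/\widetilde{k} \rceil^{d-1} \cdot \widetilde{k})$.

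The main obstacle I anticipate is the degenerate case $d = 2$, where a direct induction on $D$ via $T(s, D) \leq \widetilde{k} \cdot (s/\widetilde{k}) + \widetilde{k} \cdot T(s/\widetilde{k}, D-1)$ fails to close because the recurrence contributes an additive $O(s)$ per level and there is no geometric damping. The tree-based bookkeeping sidesteps this by exposing the flat per-level contribution and absorbing the factor $D$ into the $O(\cdot)$ using $D \leq h = O(1)$, while simultaneously exploiting the geometric decay when $d \geq 3$ without a separate argument.
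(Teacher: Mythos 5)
Your proof is correct and yields the stated bound, but it takes a genuinely different route: the paper proves Lemma~\ref{prop:a4} by a direct induction on the depth parameter $D$, bounding the three contributions (boundary queries at Step~\ref{line:query}, recursive sub-procedures, and the single \dacs{} call) at each depth and invoking the inductive hypothesis for the sub-procedures, whereas you unroll the recursion into its call tree and sum the per-level cost explicitly. The two are mathematically equivalent (one is the unfolded form of the other), but your version makes the geometric decay for $d \geq 3$ transparent and exposes the flat per-level contribution when $d = 2$. Two remarks to calibrate your concerns. First, your worry that the induction ``fails to close'' for $d=2$ is technically accurate if one insists on a depth-independent hidden constant, but it is addressable within the inductive framework: since $D \leq h = O(1)$ by construction, one can carry a depth-dependent constant $C(D)$ satisfying $C(D) \geq c_0 + C(D-1)$, which yields $C(D) = O(D) = O(1)$, so the induction does close. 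Moreover, the section on polynomial rounds explicitly restricts to $d \geq 3$ (Theorem~\ref{thm:2} has no $d=2$ case), so the paper's argument never needs the $d=2$ branch. Second, your bound $O(s_{\ell+1}^{d-1} \log n)$ for a \dacs{} call is looser than necessary: the sub-cubes queried by \dacs{} halve in side length each round, so the boundary sizes form a geometric series dominated by the first term, giving $O(s_{\ell+1}^{d-1})$ with no $\log n$ factor, which is the bound the paper uses. Your domination argument still goes through, so the final conclusion is unaffected.
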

\begin{proof}
	We proceed by induction on the depth $D$.
	The base case is when $D=1$. In this case, the procedure \flsdescent{} performs $s$ steps of steepest descent, where $s \leq \widetilde{k}$. Thus it will make $s\cdot2d = O(\widetilde{k})$ queries.
	
	\medskip 
	For any depth $D > 1$,
	\begin{description}
		\item[$\bullet\;\;$] the number of queries made by Step~\ref{line:query} is at most $$2d\cdot (2\lceil s/\widetilde{k} \rceil + 1)^{d-1}\cdot\widetilde{k} = O(\lceil s/\widetilde{k} \rceil^{d-1}\cdot\widetilde{k});$$
		\item[$\bullet\;\;$] the number of queries made by all sub-procedures is bounded as follows by the induction hypothesis $$\widetilde{k} \cdot O\left(\lceil {s}/{{\widetilde{k}}^2} \rceil^{d-1}\cdot\widetilde{k}\right) = O(\lceil s/\widetilde{k} \rceil^{d-1}\cdot\widetilde{k});$$
		\item[$\bullet\;\;$] the number of queries made by \dacs{} is at most $$ (2\lceil s/\widetilde{k} \rceil + 1)^{d-1} \cdot 2d \cdot 2 = O(\lceil s/\widetilde{k} \rceil^{d-1}).$$
	\end{description}
	Thus the total number of queries is $O(\lceil s/\widetilde{k} \rceil^{d-1}\cdot\widetilde{k})$, which concludes the proof for all $D$.
\end{proof}

\begin{proof}[Proof of Upper Bound of Theorem~\ref{thm:2}]
	The correctness of Algorithm 2 is established by Lemma~\ref{prop:a3}.
	By Lemma~\ref{prop:a4}, the total number of queries issued by Algorithm 2 is 
	$$100h \cdot n^{(d-1) - \frac{d-2}{d}\alpha} + O((n^{1-\frac{2\alpha}{d}})^{d-1} \cdot ({n^{\alpha}}/{h})) = O(n^{(d-1) - \frac{d-2}{d}\alpha})\,.$$ This concludes the proof of the upper bound part in Theorem~\ref{thm:2}.
\end{proof}

\section{Randomized Lower Bound for Local Search in Constant Rounds}\label{sec:LS.ConstLB}
In this section we show the lower bound for local search in constant number of rounds. We start with a few definitions.

\paragraph{Notation and Definitions}

Recall that $\ell_i = n^{\frac{d^k-d^i}{d^k-1}}, 0 \leq i < k$. Let $m \coloneqq \sum_{0 \leq i < k} \ell_i \leq 2n$. We now consider the grid of side length $m$ in this subsection for technical convenience. 

For a point $\Vec{x}=(x_1, \ldots, x_d)$, let $W_i(\Vec{x})$ be the grid points that are in the cube region of size $(\ell_{i})^d$ with corner point $\Vec{x}$: $$W_i(\Vec{x}) = \{\Vec{y} = (y_1, \ldots, y_d) \in [m]^d : x_j \leq y_j < x_j + \ell_i, \forall j \in [d]\}  \,.$$

Next we define a basic structure called \emph{folded-segment}. Intuitively, the folded-segment connecting two points $\Vec{x},\Vec{y}$ is the following path of points: Starting at point $\Vec{x}$, we initially change the first coordinate of $\Vec{x}$ towards the first coordinate of $\Vec{y}$. Then we change the second coordinate, the third coordinate, and so on, until finally reaching the point $\Vec{y}$. The formal definition is as follows. 

\begin{definition}[folded-segment]\label{def:quasi}
	For any two points $\Vec{x},\Vec{y} \in [m]^d$ , the \emph{folded-segment} $FS(\Vec{x},\Vec{y})$ is a set of points connecting $\Vec{x}$ and $\Vec{y}$, defined as follows.
	
	Let $\Vec{x}=(x_1, \ldots, x_d), \Vec{y} = (y_1, \ldots, y_d)$. For any $1 \leq i \leq d$, define point set $$E_i(\Vec{x},\Vec{y}) \coloneqq \left\{\Vec{z}=(y_1, \ldots,y_{i-1}, z_i, x_{i+1}, \ldots, x_d)\in [m]^d: \min(x_i, y_i) \leq z_i \leq \max(x_i, y_i)\right\}.$$ Then define 
	$$FS(\Vec{x},\Vec{y}) = (\bigcup_{i\in [d]} E_i(\Vec{x},\Vec{y})) \backslash \{ \Vec{x}\}\,.$$
\end{definition}

\paragraph{Staircase Structure}
A staircase $s^{(i)}(\Vec{x}_0, \Vec{x}_1, \ldots, \Vec{x}_i)$ of length $i$, where $0 \leq i \leq k$, consists of a path of points defined by an array of $i+1$ ``connecting'' points $\Vec{x}_0, \Vec{x}_1, \ldots, \Vec{x}_i \in [m]^d$. We call $\Vec{x}_0$ as the start point and $\Vec{x}_i$ as the end point.
For each $0 \leq j < i$, the pair of consecutive connecting points $(\Vec{x}_j, \Vec{x}_{j+1})$ is connected by a folded-segment $FS(\Vec{x}_{j},\Vec{x}_{j+1})$. We denote the $j$-th ($0 \leq j \leq i$) connecting point of $s^{(i)}$ by $\Vec{x}_j$, and the $j$-th ($1 \leq j \leq i$) folded-segment of $s^{(i)}$ as $FS(\Vec{x}_{j-1},\Vec{x}_{j})$.

The probability distribution for staircases of length $i$, where $0 \leq i \leq k$, is as follows. The start point $\Vec{x}_0$ is always set to be the corner point $\mathbf{1}$; points $\Vec{x}_1, \Vec{x}_2, \ldots, \Vec{x}_i$ are picked in turn: the point $\Vec{x}_j$ is chosen uniformly at random from the sub-cube $W_{j-1}(\Vec{x}_{j-1})$.\footnote{Recall that we take the side length of the grid as $m$ rather than $n$. Thus the staircase will not be clipped out by the boundary of grid.}
Two staircases are different if they have different set of connecting points, even if the paths defined by the two sets of connecting points are the same. Thus, $L^{(i)} \coloneqq \Pi_{j=0}^{i-1}\, {\ell_j}^d$ is the number of all possible length $i$ staircases, and each staircase will be selected with the same probability ${1}/{L^{(i)}}$.

A staircase $s$ of length $i$ \textit{grows} from a staircase $s'$ of length $(i-1)$ if the first $i-1$ folded-segment of $s$ and $s'$ are same. A prefix of staircase $s$ is any staircase formed by a prefix of the connecting points sequence of $s$.
To simplify the analysis, we assume the algorithm is given the location of $\Vec{x}_i$ after round $i$, except for the round $k$; this only strengthens the lower bound.

\paragraph{Value Function}

After fixing a staircase $s$, we can define the value function $f^{s}$ corresponding to $s$. 
Within the staircase, the value of the point is \emph{minus} the distance to $\Vec{x}_0$ by following the staircase backward, except the end point $\Vec{x}_i$ of the staircase. The value of any point outside is the $L_1$ distance to $\Vec{x}_0$. \footnote{Though the distance to $\Vec{x}_0$ by following the staircase backward is same as the $L_1$ distance, since all staircases constructed above only \textit{grow} non-decreasingly at each coordinate. But the staircases we constructed in the next subsection for the lower bound of polynomial rounds algorithm doesn't have such properties, and thus the the distance by following the staircase backward is not same as the $L_1$ distance in that case.}
The value of the end point $\Vec{x}_i$ is set as minus the distance to the point $\Vec{x}_0$ by following the staircase backward with probability $1/2$, and the $L_1$ distance to $\Vec{x}_0$ with probability $1/2$. Thus the end point of the staircase must be queried, otherwise the algorithm will incorrectly guess the \emph{location} of the unique solution point with probability at least $1/2$.

This value function makes sure that for any two different staircases $s_1$ and $s_2$, the functions $f^{s_1}$ and $f^{s_2}$ have the same value on the common prefix \emph{and} on any point outside of both $s_1$ and $s_2$. Also notice that $f^{s}$ is deterministic on every point except the end point $\Vec{x}_i$ of staircase $s$.

\paragraph{Good Staircases}
Next we introduce the concept of \emph{good staircases}, on which the algorithm could only \emph{learn} one more connecting point in each round.

\begin{definition}[good staircase]\label{def:goods}
	A staircase $s^{(i)}(\Vec{x}_0, \ldots, \Vec{x}_i)$ of length $i$ is \emph{good} with respect to a deterministic algorithm $\mathcal{A}$ that runs in $k$-rounds, if the following condition is met:
	\begin{itemize}
		\item $\; \; \;$ when $\mathcal{A}$ is running on the value function $f^{s^{(i)}}$, for any $0<j<i$, any point \\ $\Vec{x} \in FS(\Vec{x}_j, \Vec{x}_{j+1})$ is \emph{not} queried by algorithm $\mathcal{A}$ in rounds $1, \ldots, j$.
	\end{itemize}
\end{definition}

We use good staircase to indicate the good staircase \emph{with respect to a fixed deterministic algorithm $\mathcal{A}$} for brevity. {Good staircases} have the following properties.

\begin{lemma}\label{lem:good0}
	\begin{enumerate}
		\item If $s$ is a {good staircase}, then any prefix $s'$ of $s$ is also a {good staircase}.
		
		\item Let $s_1, s_2$ be any two good staircases. If the first $i+1$ connecting points of the staircases are the same, then $\mathcal{A}$ will receive the same replies in rounds $1, \ldots, i$ and issue the same queries in rounds $1, \ldots, i+1$ while running on both $f^{s_1}$ and $f^{s_2}$.
	\end{enumerate}
\end{lemma}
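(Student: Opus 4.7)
The plan is to prove both parts by a straightforward induction on the round index, exploiting the fact that the value functions induced by the relevant good staircases coincide outside a small, well-identified set of grid points, and that determinism of $\mathcal{A}$ lets us lift equality of transcripts from one round to the next.

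For part 1, fix a good staircase $s = s^{(i)}(\Vec{x}_0, \ldots, \Vec{x}_i)$ and one of its prefixes $s' = s^{(t)}(\Vec{x}_0, \ldots, \Vec{x}_t)$ with $t < i$. The functions $f^{s}$ and $f^{s'}$ agree on every grid point \emph{except} possibly on the tail folded-segments $FS(\Vec{x}_j, \Vec{x}_{j+1})$ of $s$ for $t \leq j \leq i-1$ (and on the two endpoints, where the coin flips may differ). I would prove by induction on $r \in \{1, \ldots, t-1\}$ that $\mathcal{A}$ issues the same queries and receives the same replies in round $r$ on $f^{s}$ and $f^{s'}$. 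Given matching transcripts through round $r-1$, determinism of $\mathcal{A}$ gives equal queries in round $r$; the goodness of $s$ applied at each index $j$ with $r \leq j \leq i-1$ ensures that no such query lands on any tail segment, so the replies agree as well. Once identical behavior through round $t-1$ is established, the goodness of $s'$—which only constrains $0 < j < t$ and the corresponding rounds—follows directly from the analogous constraint satisfied by $s$.

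For part 2, fix two good staircases $s_1, s_2$ sharing the first $i+1$ connecting points $\Vec{x}_0, \ldots, \Vec{x}_i$. Now $f^{s_1}$ and $f^{s_2}$ can differ only on the diverging tails $FS(\Vec{x}_j, \Vec{x}_{j+1})$ (for $j \geq i$) inside $s_1$ and inside $s_2$. An identical induction, this time on $r \in \{1, \ldots, i\}$, shows that the transcripts on $f^{s_1}$ and $f^{s_2}$ coincide through round $i$: goodness of $s_1$ at every index $j \geq i$ forbids any round-$r$ query with $r \leq i$ from touching $s_1$'s tail, and symmetrically for $s_2$, so no query ever lands at a point where the two functions disagree. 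Matching queries in round $i+1$ then follows by one further appeal to determinism applied to the common history through round $i$.

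The main subtlety is carefully identifying the full set of points on which two such value functions may disagree—in particular the shared connecting point $\Vec{x}_i$, which may be an interior connecting point of one staircase (with deterministic value) while being the random endpoint of the other. The key observation is that $\Vec{x}_i$ lies in the last shared folded segment $FS(\Vec{x}_{i-1}, \Vec{x}_i)$, so goodness at $j = i-1$ already forbids any query to $\Vec{x}_i$ in rounds $\leq i-1$, which is precisely the window on which we need agreement of replies (for part 1, through round $t-1$; for part 2, when combined with the good condition on the tails, through round $i$). Once this bookkeeping is set up, both statements reduce to routine inductions.
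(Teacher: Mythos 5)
Your proof is correct and follows essentially the same approach as the paper: an induction on rounds that uses determinism of $\mathcal{A}$ together with the good-staircase condition to show the transcripts coincide wherever the two value functions might disagree. If anything, your write-up is slightly more careful than the paper's, which disposes of part~1 with ``similar by taking $s_1 = s$ and $s_2 = s'$'' (an argument that cannot be applied literally, since part~2 presupposes both staircases are good, which is exactly what part~1 is trying to establish for $s'$); you correctly instead establish matching transcripts through round $t-1$ via the goodness of $s$ alone and then transfer the constraint to $s'$.
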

\begin{proof}
	We first prove part $2$.
	By the definition of {good staircase}, in round $1, \ldots, i$, algorithm $\mathcal{A}$ will not query any point on $s_1$ or $s_2$ that is after the first $i+1$ connecting points,
	where $f^{s_1}$ and $f^{s_2}$ may have different value. Since $\mathcal{A}$ is deterministic, by induction from round $1$, we get that $\mathcal{A}$ will issue the same queries and receive the same replies in rounds $1, \ldots, i$. The queries in round $i+1$ are also same because they only depend on the replies in round $1, \ldots, i$. The proof of part $1$ is similar by taking $s_1 = s$ and $s_2 = s'$.
\end{proof}

If most of the possible staircases are {good} staircases, then $\mathcal{A}$ will fail on a constant fraction of the inputs generated by length $k$ staircases.

\begin{lemma}\label{lem:good1}
	If the algorithm $\mathcal{A}$ issues at most $\mathcal{Q}_k = 1/10 \cdot n^{\frac{d^{k+1} - d^k}{d^k - 1}}$ queries in each round, and $9/10$ of all possible length $k$ staircases are {good} , then $\mathcal{A}$ will fail to get the correct solution with probability at least $7/40$.
\end{lemma}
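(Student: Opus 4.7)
The plan is to apply Yao's minimax principle with the input distribution that samples a staircase $s$ uniformly at random from the $L^{(k)}$ length-$k$ staircases, together with an independent fair coin for the value $f^s(\Vec{x}_k)$, and then to show that every deterministic $k$-round algorithm $\mathcal{A}$ issuing at most $\mathcal{Q}_k$ queries per round errs with probability at least $7/40$ on this distribution. I would begin with an information-theoretic observation at the endpoint: if $\mathcal{A}$ never queries $\Vec{x}_k$, its transcript is independent of the coin flip, yet the unique local minimum alternates between $\Vec{x}_k$ (when its value is the negative backward distance) and its predecessor on $FS(\Vec{x}_{k-1},\Vec{x}_k)$ (when its value is the positive $L_1$ distance). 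Any single deterministic output is therefore correct with conditional probability at most $1/2$. Moreover, goodness of $s$ already forbids querying any point of $FS(\Vec{x}_{k-1},\Vec{x}_k)$, and in particular $\Vec{x}_k$ itself, in rounds $1,\ldots,k-1$, so on a good staircase the only chance to query $\Vec{x}_k$ is round $k$.

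Next I would bound $\Pr[s \text{ good and } \mathcal{A} \text{ queries } \Vec{x}_k]$ by a prefix-counting argument. By Lemma~\ref{lem:good0}(2), any two good length-$k$ staircases sharing the first $k$ connecting points $(\Vec{x}_0,\ldots,\Vec{x}_{k-1})$ induce identical round-$k$ query sets, so the round-$k$ queries depend only on the good prefix $s^{(k-1)}$; call this set $Q(s^{(k-1)})$, of cardinality at most $\mathcal{Q}_k$. Conditioning on $s^{(k-1)}$, the endpoint $\Vec{x}_k$ is uniform over the sub-cube $W_{k-1}(\Vec{x}_{k-1})$ of size $\ell_{k-1}^{d} = n^{(d^{k+1}-d^k)/(d^k-1)}$, so a union bound gives $\Pr[\Vec{x}_k \in Q(s^{(k-1)}) \mid s^{(k-1)}] \leq \mathcal{Q}_k / \ell_{k-1}^{d} = 1/10$, and averaging over good prefixes preserves this $1/10$ bound.

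Combining these yields $\Pr[s \text{ good and } \mathcal{A} \text{ does not query } \Vec{x}_k] \geq 9/10 - 1/10 = 8/10$, hence $\Pr[\mathcal{A} \text{ fails}] \geq (1/2)(8/10) = 4/10 \geq 7/40$. The main conceptual step is the prefix-determinacy of $Q$, for which Lemma~\ref{lem:good0}(2) is indispensable: without it, different extensions of the same prefix could in principle induce different round-$k$ query sets and the clean union bound above would break down. Everything else is bookkeeping and the calibration $\mathcal{Q}_k = \tfrac{1}{10}\,\ell_{k-1}^{d}$ built into the hypothesis.
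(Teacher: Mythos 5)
Your proposal is correct and takes a genuinely different (and cleaner) route than the paper, yielding a tighter constant ($4/10$ rather than $7/40$). The paper introduces the intermediate notion of a \emph{nice} length-$(k-1)$ prefix (one for which at least $8/10$ of its extensions are good), invokes Markov's inequality on $1-Z$ to show that at least half of all prefixes are nice, and then conditions jointly on \emph{Good} and \emph{Nice}, paying a factor $1/2$ for Markov and a factor $7/8$ for the round-$k$ query budget measured against the good extensions of a nice prefix. You bypass the Nice notion entirely: once Lemma~\ref{lem:good0}(2) makes the round-$k$ query set $Q(s^{(k-1)})$ a function of the good prefix alone (and, by Lemma~\ref{lem:good0}(1), goodness of $s^{(k)}$ implies goodness of $s^{(k-1)}$), a direct union bound against the uniform endpoint $\Vec{x}_k \in W_{k-1}(\Vec{x}_{k-1})$ gives $\Pr[s\ \text{good and}\ \Vec{x}_k\ \text{queried}] \leq \mathcal{Q}_k/\ell_{k-1}^{d} = 1/10$, hence $\Pr[\text{good and endpoint missed}] \geq 9/10 - 1/10 = 8/10$ and $\Pr[\text{Fail}] \geq (1/2)(8/10) = 4/10 \geq 7/40$. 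Both arguments hinge on the same prefix-determinacy of the round-$k$ queries; your version simply skips the Markov step and conditions less, which is both simpler and quantitatively stronger. One small remark: the appeal to Yao's minimax principle at the start is superfluous for this lemma, which is already stated for a deterministic $\mathcal{A}$ against the fixed staircase distribution --- Yao is applied in the paper at the level of Theorem~\ref{thm:1}, not inside this lemma.
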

\begin{proof}
	If $s^{(k)}(\Vec{x}_0, \ldots, \Vec{x}_{k-1}, \Vec{x}_k)$ is {good}, then $\mathcal{A}$ could not distinguish it from another {good staircase} $s'(\Vec{x}_0, \ldots, \Vec{x}_{k-1}, \Vec{x}_k')$ with same first $k$ connecting points before the last round by Lemma~\ref{lem:good0}.
	
	Recall that before the last round, the algorithm is given the value of $\Vec{x}_{k-1}$. Let $z_{s^{(k-1)}}$ be the fraction of length $k$ {good staircases} among all length $k$ staircases growing from the length $(k-1)$ staircase $s^{(k-1)}$. Then define the random variable $Z$ such that $Z = z_{s^{(k-1)}}$ if the length $(k-1)$ staircase $s^{(k-1)}$ is chosen. Since each length $(k-1)$ staircase is selected with the same probability, the expectation of $Z$ is $9/10$.
	
	Thus by Markov inequality on $(1-Z)$, a half of all the length $(k-1)$ staircases $s^{(k-1)}$ satisfy $z_{s^{(k-1)}} \geq 8/10$. We say a length $(k-1)$ staircase $s^{(k-1)}$ is \textit{nice} if $z_{s^{(k-1)}} \geq 8/10$.
	\smallskip
	
	Recall there are $\left(\ell_{k-1}\right)^d = n^{\frac{d^{k+1} - d^k}{d^k - 1}}$  of length $k$ staircases growing from a staircase of length $(k-1)$.
	
	However, the algorithm can only query $1/10 \cdot n^{\frac{d^{k+1} - d^k}{d^k - 1}}$ points in the last round. Thus if $s^{(k)}$ is a {good staircase} growing from a length $k-1$ nice staircase $s^{(k-1)}$, then $\mathcal{A}$ will not query the endpoint $\Vec{x}_k$ of $s^{(k)}$ with probability at least $7/8$.
	We define the following events: 
	\begin{itemize}
		\item  $\; \; \;$ \emph{Fail} = \{$\mathcal{A}$ makes mistake\}
		\item  $\; \; \;$ \emph{Hit} = \{ $\mathcal{A}$ never queries the end point $\Vec{x}_k$ during the execution\}
		\item  $\; \; \;$ \emph{Nice} = \{the length $k-1$ prefix $s^{(k-1)}$ of $s^{(k)}$ is nice\}
		\item  $\; \; \;$ \emph{Good} = \{the whole length $k$ staircase $s^{(k)}$ is good\}.
	\end{itemize} 
	
	Thus $\mathcal{A}$ will make a mistake with probability at least 
	\begin{small} 
		\begin{align}
			\Pr[\mbox{Fail}] &\geq \frac{1}{2} \cdot \Pr[\mbox{Hit}] 
			\geq \frac{1}{2} \cdot \Pr[\mbox{Hit} \,| \, \mbox{Nice} \land  \mbox{Good}] \cdot \Pr[ \mbox{Nice}  \land  \mbox{Good}] \notag  \\
			& \geq \frac{1}{2} \cdot \frac{7}{8} \cdot \Pr[\mbox{Good} \,| \, \mbox{Nice}] \cdot \Pr[\mbox{Nice}] 
			\geq \frac{1}{2} \cdot \frac{7}{8}  \cdot \frac{8}{10} \cdot \frac{1}{2} = \frac{7}{40}\,, \notag
		\end{align}
	\end{small}
	where the probability is taken on the staircase $s^{(k)}$ and the value of $\Vec{x}_k$.
\end{proof}

\paragraph{Counting the number of good staircases}
Counting the number of {good staircases} is the major technical challenge in the proof. The concept of \textit{probability score function} is useful.

\begin{definition}[probability score function]
	For a fixed deterministic algorithm $\mathcal{A}$, let $Q_\mathcal{A}$ be the set of points queried by $\mathcal{A}$ during its execution. Given a point $\Vec{x} \in [m]^d$ , for any $0 \leq i < k$, define the set of points
	$$W_i(\Vec{x}, Q_\mathcal{A}) \coloneqq \{\Vec{y} \in W_i(\Vec{x}): FS(\Vec{x},\Vec{y}) \cap Q_\mathcal{A} = \emptyset \}.$$
	The \emph{probability score function} for the point $\Vec{x}$ is $SC_i(\Vec{x}, Q_\mathcal{A}) \coloneqq \frac{ \left| W_i(\Vec{x}, Q_\mathcal{A}) \right| }{{\ell_i}^d}$.
	
	The probability score function for a {good staircases} $s^{(i)}(\Vec{x}_0, \ldots, \Vec{x}_i)$ of length $i$, for $0 \leq i < k$, is defined as $SC(s^{(i)}) \coloneqq SC_i(\Vec{x}_i, Q_\mathcal{A})$, where $Q_\mathcal{A}$ is the set of points that have been queried by $\mathcal{A}$ after the round $i$, if $\mathcal{A}$ is executed on value function $f^{s^{(i)}}$.\footnote{By the definition of good staircase, $\mathcal{A}$ will not query the end point of staircase $s^{(i)}$ in rounds $1, \ldots, i-1$. Since $\mathcal{A}$ is a deterministic algorithm and $f^{s^{(i)}}$ is deterministic except at the end point of $s^{(i)}$, the set $Q_\mathcal{A}$ here is uniquely defined.}
\end{definition}

Informally, if $Q_\mathcal{A}$ is the set of points queried by $\mathcal{A}$ in the first $i$ rounds,
the probability score function $SC(s^{(i)})$ can be interpreted as the probability that the length $i+1$ staircase $s^{(i+1)}$ being good if it grows from the good prefix $s^{(i)}$.

Let $\Vec{x} \in [m]^d$, $\Vec{y} \in Q_\mathcal{A}$. Let $B_i(\Vec{x},\Vec{y}) \coloneqq \{\Vec{z} \in W_i(\Vec{x}): \Vec{y} \in FS(\Vec{x},\Vec{z}) \}.$ Thus $\left| B_i(\Vec{x},\Vec{y})\right|$ is the number of \textit{folded-segments} $FS(\Vec{x},\Vec{z}), \Vec{z} \in W_i(\Vec{x})$ that are intersected by a point $\Vec{y} \in Q_\mathcal{A}$. We then define the \emph{cost} incurred by a point $\Vec{y} \in Q_\mathcal{A}$ on the probability score function of a point $\Vec{x}$ for any $0 \leq i < k$ as 
$$cost_i(\Vec{x}, \Vec{y}) \coloneqq \frac{\left| B_i(\Vec{x},\Vec{y}) \right|}{{\ell_i}^d}.$$

The merit of our random staircase structure is that any single queried point will not hit too many of staircases. This property is quantitatively characterized by the following lemma via the \emph{cost}.

\begin{lemma}\label{lem:cost1}
	The \emph{total cost} incurred by one point $\Vec{y} \in Q_\mathcal{A}$ for all $\Vec{x} \in [m]^d$, i.e., $\sum_{\Vec{x}} {\left| B_i(\Vec{x},\Vec{y}) \right|}/({\ell_i}^d)$ is at most $d \cdot {\ell_i}$.
\end{lemma}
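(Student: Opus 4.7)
The plan is to fix $\Vec{y}\in Q_\mathcal{A}$ and bound the total count $\sum_{\Vec{x}\in [m]^d}|B_i(\Vec{x},\Vec{y})|$, which counts pairs $(\Vec{x},\Vec{z})$ with $\Vec{z}\in W_i(\Vec{x})$ and $\Vec{y}\in FS(\Vec{x},\Vec{z})$, by at most $d\cdot \ell_i^{d+1}$; dividing by $\ell_i^d$ then gives the stated bound of $d\cdot \ell_i$. The key observation is that the folded-segment decomposes into $d$ axis-aligned ``legs,'' and each leg pins down most coordinates of $(\Vec{x},\Vec{z})$ once $\Vec{y}$ is known.

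First I would apply a union bound over the leg containing $\Vec{y}$: since $\Vec{y}\in FS(\Vec{x},\Vec{z})$ implies $\Vec{y}\in E_j(\Vec{x},\Vec{z})$ for some $j\in[d]$,
$$\sum_{\Vec{x}\in[m]^d}\bigl|B_i(\Vec{x},\Vec{y})\bigr| \;\leq\; \sum_{j=1}^{d} N_j(\Vec{y}), \qquad N_j(\Vec{y})\;:=\;\bigl|\{(\Vec{x},\Vec{z}):\,\Vec{z}\in W_i(\Vec{x}),\ \Vec{y}\in E_j(\Vec{x},\Vec{z})\}\bigr|.$$
Because $\Vec{z}\in W_i(\Vec{x})$ forces $x_k\leq z_k<x_k+\ell_i$ on every coordinate $k$, we have $\min(x_k,z_k)=x_k$ and $\max(x_k,z_k)=z_k$. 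Hence $\Vec{y}\in E_j(\Vec{x},\Vec{z})$ rigidly prescribes $z_k=y_k$ for $k<j$, $x_k=y_k$ for $k>j$, and $x_j\leq y_j\leq z_j$ with $z_j-x_j\leq \ell_i-1$.

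Next I would estimate $N_j(\Vec{y})$ coordinate by coordinate. For each $k<j$, only $x_k$ is free and must satisfy $y_k-\ell_i<x_k\leq y_k$, giving at most $\ell_i$ choices; symmetrically, for each $k>j$, only $z_k$ is free in $\{y_k,\ldots,y_k+\ell_i-1\}$, again at most $\ell_i$ choices. For the $j$-th coordinate, writing $x_j=y_j-a$ and $z_j=y_j+b$ with $a,b\geq 0$ and $a+b\leq \ell_i-1$ yields $\binom{\ell_i+1}{2}\leq \ell_i^2$ valid pairs. Multiplying gives
$$N_j(\Vec{y})\;\leq\;\ell_i^{\,j-1}\cdot \ell_i^{\,2}\cdot \ell_i^{\,d-j}\;=\;\ell_i^{\,d+1},$$
and summing over $j\in[d]$ and dividing by $\ell_i^d$ finishes the proof.

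This argument has no substantial mathematical obstacle; the real work is coordinate bookkeeping --- correctly identifying, for each leg index $j$, the $j-1$ coordinates pinned on the $\Vec{z}$-side and the $d-j$ coordinates pinned on the $\Vec{x}$-side. A minor subtlety is that a single pair $(\Vec{x},\Vec{z})$ can lie on several legs simultaneously (when $\Vec{y}$ coincides with a bend of the folded-segment), but since we only need an upper bound the union bound absorbs this double-counting at no additional cost.
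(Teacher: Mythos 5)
Your proof is correct and uses essentially the same coordinate-bookkeeping idea as the paper: identify which leg $E_j$ of the folded-segment $FS(\Vec{x},\Vec{z})$ contains $\Vec{y}$, observe that this leg pins down $j-1$ coordinates of $\Vec{z}$ and $d-j$ coordinates of $\Vec{x}$, and count the remaining freedom. The paper organizes this by defining $t(\Vec{x},\Vec{y})$ (the unique relevant leg index determined by $\Vec{x}$ and $\Vec{y}$), bounding $|B_i(\Vec{x},\Vec{y})|\le\ell_i^{\,d-t+1}$, and then counting how many $\Vec{x}$ yield each value $t=j$ (at most $\ell_i^{\,j}$), whereas you count pairs $(\Vec{x},\Vec{z})$ directly via a union bound over $j$; both give $d\cdot\ell_i$, so the two are just different ways of summing the same quantity.
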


\begin{proof}
	If $\Vec{y} \notin W_i(\Vec{x})$, we simply have $cost_i(\Vec{x}, \Vec{y})=0$. Therefore we only need to estimate $cost_i(\Vec{x}, \Vec{y})$ or equivalently, $\left| B_i(\Vec{x},\Vec{y}) \right|$ for pairs $\Vec{x}, \Vec{y}$ such that $\Vec{y} \in W_i(\Vec{x})$.
	
	For two points $\Vec{x}=(x_1, \ldots, x_d), \Vec{y}=(y_1, \ldots, y_d)$, let $t(\Vec{x},\Vec{y}) \in [d]$ denote the smallest index such that for any $j$ satisfying $t(\Vec{x},\Vec{y}) < j \leq d$, there is $x_j=y_j$. Then, by the definition of \textit{folded-segments}, we have 
	$$B_i(\Vec{x},\Vec{y}) = \{(y_1, \ldots,y_{t(\Vec{x},\Vec{y})-1}, z_{t(\Vec{x},\Vec{y})}, z_{t(\Vec{x},\Vec{y})+1}, \ldots, z_d)\in W_i(\Vec{x}):  z_{t(\Vec{x},\Vec{y})} \geq y_{t(\Vec{x},\Vec{y})}\}.$$ 
	Therefore, we have $\left| B_i(\Vec{x},\Vec{y}) \right| \leq {\ell_i}^{d-t(\Vec{x},\Vec{y})+1}$. 
	
	Finally, for a point $\Vec{y}$, the number of points $\Vec{x}$ such that $\Vec{y} \in W_i(\Vec{x})$ and $t(x,y)=j$ is at most ${\ell_i}^j$. Therefore, we have 
	\begin{align} 
		\sum_{\Vec{x} \in [m]^d} \frac{\left| B_i(\Vec{x},\Vec{y}) \right|}{{\ell_i}^d} 
		&= \sum_{j\in[d]} \sum_{\Vec{x} \in [m]^d} \frac{\left| B_i(\Vec{x},\Vec{y}) \right| }{{\ell_i}^d} \cdot \mathbf{1}[t(\Vec{x},\Vec{y})=j] \\
		&\leq \sum_{j \in [d]} \frac{{\ell_i}^j \cdot {\ell_i}^{d-j+1}}{{\ell_i}^d} \\
		&= d \cdot {\ell_i}
	\end{align} 
\end{proof}

We can now count the number of length $k$ good staircases with the \emph{two-stage analysis}, which is formally summarized in below.

\begin{lemma}\label{lem:good2}
	If the number of queries issued by algorithm $\mathcal{A}$ is at most $$\mathcal{Q}_k = 1/{ (10d\cdot k}) \cdot n^{\frac{d^{k+1} - d^k}{d^k - 1}},$$ then $9/10$ of all possible length $k$ staircases are {good} with respect to $\mathcal{A}$.
\end{lemma}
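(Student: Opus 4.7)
The plan is to prove by induction on $i$ that the fraction $\widetilde{G}_i / L^{(i)}$ of bad length-$i$ staircases (with $\widetilde{G}_i \coloneqq L^{(i)} - G_i$ and $G_i$ the number of good length-$i$ staircases) grows by at most $1/(10k)$ each time $i$ grows by one. Since all length-$1$ staircases are trivially good (the good condition at $i=1$ is vacuous), we have $\widetilde{G}_1 = 0$, and telescoping from $i=1$ to $i=k$ then yields $\widetilde{G}_k / L^{(k)} \leq (k-1)/(10k) \leq 1/10$, which is the claim.

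The first ingredient is a clean characterization of good extensions. Fix a good length-$(i-1)$ staircase $s^{(i-1)}$ with endpoint $\Vec{x}_{i-1}$, and let $Q^{(i-1)}_{i-1}$ be the set of queries $\mathcal{A}$ issues in rounds $1, \ldots, i-1$ when run on $f^{s^{(i-1)}}$. I would show that a candidate $\Vec{x}_i \in W_{i-1}(\Vec{x}_{i-1})$ extends $s^{(i-1)}$ to a good $s^{(i)}$ if and only if $FS(\Vec{x}_{i-1}, \Vec{x}_i) \cap Q^{(i-1)}_{i-1} = \emptyset$. The ``if'' direction uses that $f^{s^{(i-1)}}$ and $f^{s^{(i)}}$ agree off the new folded segment, so inductively the transcripts of $\mathcal{A}$ in rounds $1, \ldots, i-1$ are identical under both inputs, and the new folded segment is thus avoided. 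The ``only if'' direction is a first-divergence argument: any $\Vec{y} \in Q^{(i-1)}_{i-1} \cap FS(\Vec{x}_{i-1}, \Vec{x}_i)$ either is itself queried on $f^{s^{(i)}}$ in some round $\leq i-1$ (if the two transcripts agree through round $i-1$), or else forces some earlier query to have already landed on $FS(\Vec{x}_{i-1}, \Vec{x}_i)$ in order to trigger a divergence; both cases contradict good-ness of $s^{(i)}$.

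The crucial ``two-stage'' observation is that, for good $s^{(i-1)}$, the set $Q^{(i-1)}_{i-1}$ depends only on the length-$(i-2)$ prefix $s^{(i-2)}$: the good condition at $j = i-2$ forces all queries in rounds $1, \ldots, i-2$ to avoid $FS(\Vec{x}_{i-2}, \Vec{x}_{i-1})$, so responses in those rounds match those produced under $f^{s^{(i-2)}}$, and the round-$(i-1)$ queries then coincide too given the same free information $\Vec{x}_0, \ldots, \Vec{x}_{i-2}$. Denote this common set $Q(s^{(i-2)})$; it has size at most $\mathcal{Q}_k$. Grouping length-$(i-1)$ staircases by their length-$(i-2)$ prefix and combining with the characterization above,
\[
    \widetilde{G}_i \;\leq\; \widetilde{G}_{i-1} \cdot \ell_{i-1}^d \;+\; \sum_{\text{good } s^{(i-2)}}\; \sum_{\Vec{y} \in Q(s^{(i-2)})}\; \sum_{\Vec{x}_{i-1} \in W_{i-2}(\Vec{x}_{i-2})} |B_{i-1}(\Vec{x}_{i-1}, \Vec{y})|.
\]
Applying Lemma~\ref{lem:cost1} to the innermost sum bounds it by $d \cdot \ell_{i-1}^{d+1}$, and then $|Q(s^{(i-2)})| \leq \mathcal{Q}_k$ together with the outer sum $\leq L^{(i-2)}$ yields $\widetilde{G}_i \leq \widetilde{G}_{i-1} \cdot \ell_{i-1}^d + L^{(i-2)} \cdot \mathcal{Q}_k \cdot d \cdot \ell_{i-1}^{d+1}$. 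Dividing by $L^{(i)} = L^{(i-2)} \cdot \ell_{i-2}^d \cdot \ell_{i-1}^d$ and plugging in $\ell_j = n^{(d^k - d^j)/(d^k-1)}$ together with $\mathcal{Q}_k = \frac{1}{10 d k} \cdot n^{(d^{k+1}-d^k)/(d^k-1)}$, the increment $d \mathcal{Q}_k \ell_{i-1}/\ell_{i-2}^d$ simplifies to exactly $1/(10k)$, which closes the induction.

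The main obstacle is the two-stage observation: without recognizing that $Q^{(i-1)}_{i-1}$ depends only on $s^{(i-2)}$, one cannot pull the sum over $\Vec{x}_{i-1}$ inside and apply Lemma~\ref{lem:cost1}'s global cost bound on a per-prefix basis. A naive one-step bound would leave an uncontrolled factor of roughly $\ell_{i-1}^d$ and fail by a polynomial factor; the two-stage grouping is precisely what keeps the telescoping tight within the $1/(10k)$ per-step budget.
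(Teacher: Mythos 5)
Your proposal is correct and follows essentially the same route as the paper's proof: the ``two-stage observation'' is Lemma~\ref{lem:good0} part~2, the per-query cost bound is Lemma~\ref{lem:cost1}, and the arithmetic $d\,\mathcal{Q}_k\,\ell_{i-1}/\ell_{i-2}^d = 1/(10k)$ yields the same per-step telescoping budget that the paper obtains via the probability score function $SC$ and the recursion $R^{(i+2)} \geq R^{(i+1)} - R^{(i)}/(10k)$. You track the bad fraction $\widetilde{G}_i/L^{(i)}$ rather than the good fraction $R^{(i)}$, but these are complements of one another and the underlying counting is identical.
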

\begin{proof}
	For $0 \leq i \leq k$, let $G^{(i)}$ denote the set of all {good staircases} of length $i$. Then
	$| G^{(i)} |$ is the number of all length $i$ {good staircases}, and $R^{(i)} \coloneqq | G^{(i)} | / L^{(i)}$ is the fraction of length $i$ {good staircases}. In particular, $L^{(0)} = | G^{(0)} | = R^{(0)} = 1$.
	
	Let's first fix a length $i \; (0 \leq i<k-1)$ {good staircase} $s^{(i)} \in G^{(i)}$, and denote the set of all {good staircases} growing from $s^{(i)}$ by $G(s^{(i)})$.
	Now consider the sum of the probability score function for every staircase in $G(s^{(i)})$. By Lemma~\ref{lem:good0}, algorithm $\mathcal{A}$ will make the \emph{same} queries in rounds $1, \ldots, i+1$ when running on any instance $f^{s^{(i+1)}}$ generated by staircase $s^{(i+1)}$ in $G(s^{(i)})$. Thus, we can use Lemma~\ref{lem:cost1} and the union bound to upper bound the total cost to the sum of their probability score function:
	\begin{small}
		\begin{align} 
			\sum_{s^{(i+1)} \in G(s^{(i)})} SC(s^{(i+1)})
			& \geq {\ell_i}^d \cdot  SC(s^{(i)}) - \mathcal{Q}_k \cdot  d\ell_{i+1} 
			\geq {\ell_i}^d \cdot  \left(SC(s^{(i)}) - \frac{1}{10k}\right)
			\label{eq:sumpsf} 
		\end{align}
	\end{small}
	
	The second inequality comes from the fact that $\mathcal{Q}_k \cdot \ell_{i+1} = \frac{{\ell_{i}}^d}{d\cdot (10k)}  \,.$ 
	
	Next, let's consider the number of all length $i+2$ {good staircases}.
	\begin{small}
		\begin{align}
			| G^{(i+2)} | &= \sum_{s^{(i+1)} \in G^{(i+1)}} SC(s^{(i+1)}) \cdot {\ell_{i+1}}^d \label{eq:sum_1}\\ 
			& = \sum_{s^{(i)} \in G^{(i)}} \left( \sum_{s^{(i+1)} \in G(s^{(i)})} SC(s^{(i+1)}) \cdot {\ell_{i+1}}^d \right) \\
			& \geq ({\ell_{i+1}}^d \cdot {\ell_i}^d) \cdot \sum_{s^{(i)} \in G^{(i)}} \left(SC(s^{(i)}) - \frac{1}{10k}\right) \label{eq:sum_3}\\ 
			& = {\ell_{i+1}}^d \cdot \left(| G^{(i+1)} | - {\ell_i}^d \cdot \frac{| G^{(i)}|}{10k}\right)
			\label{eq:sum_4}
		\end{align}
	\end{small}
	Equality (\ref{eq:sum_1}) and equality (\ref{eq:sum_4}) are from the definition of probability score function. Inequality (\ref{eq:sum_3}) is from  inequality (\ref{eq:sumpsf}).
	Dividing each side of (\ref{eq:sum_4}) by $L^{(i+2)}$, we get 
	\begin{small}
		\begin{align}
			R^{(i+2)} \geq R^{(i+1)} - \frac{R^{(i)}}{10k} \geq R^{(i+1)} - \frac{1}{10k}\,.
		\end{align}
	\end{small}
	We know that $R^{(1)} = 1$ by definition, so $R^{(k)} \geq 9/10$ as required.
\end{proof}

The correctness of the lower bound statement in Theorem~\ref{thm:1} follows from Lemma~\ref{lem:good1} and Lemma~\ref{lem:good2}.

\section{Randomized Lower Bound for Local Search in Polynomial Rounds}\label{sec:LS.PolyLB}


In this section, we show the lower bound for local search in polynomial number of rounds. 

\subsection{Notation and Definitions}\label{sec:LS.PolyLB.ND}

Recall the number of rounds is $k = n^{\alpha}$, where $\alpha \in (0,d/2)$. Define $\ell = n^ {1-(2/d)\cdot \alpha}$ and $m = n/\ell$, where the definition of $m$ from section~\ref{sec:LS.ConstLB} is overridden here. 

Define $\mathcal{Q}_k$ to be the number of queries allowed for the $k$-round algorithm stated in Theorem~\ref{thm:2}:

\begin{equation}\label{equ:q_k}
	\mathcal{Q}_k = \left\{ 
	\begin{array}{ll} 
		c_d \cdot n^{(d-1) - \frac{d-2}{d}\alpha} & \textrm{if $d>4$}\\
		\frac{c_4}{\log n} \cdot n^{3 - \frac{1}{2}\alpha} & \textrm{if $d=4$}\\ 
		c_3 \cdot n^{(d-1) - \frac{2}{3}\alpha} & \textrm{if $d=3$} 
	\end{array} \right.
\end{equation}
The value $c_d$ above is a constant depending on dimension $d$.

For any point $x \in [n]$ in the 1D grid, define $W^1(x) $ as the set of points within $\ell$ steps of going right from $x$, and assuming $1$ is the next point of $n$ by wrapping around. Formally, $$W^1(x) = \{y \in [n] : x < y \leq x+ \ell \; \mathrm{or} \; y \leq x+ \ell - n \} \,.$$

Similarly, for a point $\Vec{x}=(x_1, \ldots, x_d)$ in the $d$ dimensional grid $[n]^d$, let $W(\Vec{x})$ be the set of grid points that are in the cube region of side length $\ell$ with corner point $\Vec{x}$, and wrapping around the boundary if exceeding. That is, $$W(\Vec{x}) = \{\Vec{y} = (y_1, \ldots, y_d) \in [n]^d : y_j\in W^1(x_j), \forall j \in [d]\}  \,.$$
For any index $i \in [d]$, define
\begin{align}
	& W^{-1}_i(\Vec{x}) = \{\Vec{y}\in [n]^d: \forall j \leq i, x_j\in W^1(y_j) ;\forall j > i, y_j = x_j \} \\
	& W^b_i(\Vec{x}) = \{\Vec{y}\in [n]^d: \forall j < i, x_j\in W^1(y_j); \max\{n- \ell, x_i\} < y_i \leq n;\forall j > i, y_j = x_j \}
\end{align}

Let $$W^{-1}(\Vec{x}) = \bigcup_{i\in [d]} W^{-1}_i(\Vec{x}), W^b(\Vec{x}) = \bigcup_{i\in [d]} W^b_i(\Vec{x}), W^r(\Vec{x}) =  W^{-1}(\Vec{x}) \cup W^b(\Vec{x}).$$

When calculating the coordinate of points, we always keep wrapping around it into the range $[n]^d$ for convenience.

\paragraph{Staircase structure}
Let $\Vec{x}_0 = \mathbf{1}$ be the corner point. The remaining connecting points $\Vec{x}_1, \Vec{x}_2, \ldots, \Vec{x}_i$ are selected in turn, where $\Vec{x}_j$ is chosen uniformly from 
\begin{enumerate}
	\item the set $W(\Vec{x}_{j-1})$, if $(j \; \mathrm{mod} \; m) \neq 0$;
	\item the set $[n]^d$, otherwise.
\end{enumerate}
We still use the folded-segment as in Definition~\ref{def:quasi} to link every two consecutive connecting points. The length of a staircase is the number of folded-segments in it. Note that the folded-segments in a staircase can now intersect; the value function on those intersecting points has to be handled with caution. 

Let $S^{(i)}$ be the set of all possible staircases of length $i$. Let $S^{(i)}(s_j)$ be the subset of $S^{(i)}$ in which every staircase has staircase $s_j$ as prefix.
Every possible staircase of length $i$ occurs with the same probability and their total number is $L^{(i)} = \left| S^{(i)} \right|$. Each staircase $s_i$ of length $i$ has the same number of length $j$ staircases growing from it, namely, $\left| S^{(i)}(s_j) \right| = L^{(i)} / L^{(j)} $. 

Let $p(\Vec{x}, \Vec{y}, i, t)$ to be the probability that point $\Vec{y}$ is on the $(i+t-1)$-th folded-segment, conditioned on point $\Vec{x}$ being the $i$-th connecting point\footnote{Ignore the restriction that the $0$-th connecting point $\Vec{x}_0$ has to be $\Vec{1}$ here, otherwise it may be impossible for an arbitrary point $\Vec{x}\in [n]^d$ being the $i$-th connecting point.}. Similarly, define $q(\Vec{x}, \Vec{y}, i, t)$ as the probability that point $\Vec{y}$ is the $(i+t)$-th connecting point, conditioned on point $\Vec{x}$ being the $i$-th connecting point.

\paragraph{Value function}

The value function $f^{s}$ for staircase $s$ is determined by the same rule as the constant rounds case in section~\ref{sec:LS.ConstLB}. 

A point is called a \emph{self-intersection point} if it lies on multiple folded-segments \footnote{By definition~\ref{def:quasi} of folded-segment, the $i$-th connecting point is on the $i$-th folded-segment, but not the $(i+1)$-th folded-segment} and 
we deem it to belong to the folded-segment closest to the solution. Thus, the distance from it to the start point is defined by tracing through all the previous points on the staircase.

Recall an important property of these value functions that if two staircases $s_1, s_2$ share a common length $i$ prefix, then $f^{s_1}$ and $f^{s_2}$ have same value except for points that are after the $i$-th connecting points\footnote{In particular, for a self-intersection point, it is after the $i$-th connecting points if any one of the folded-segments it intersects with are after the $i$-th connecting points} of $s_1$ or $s_2$. 

\bigskip 

In the rest of this section, we will show that the distribution of value functions generated by all length $K = 2k$ staircases is hard for any $k$-rounds deterministic algorithm $\mathcal{A}$ with at most $\mathcal{Q}_k$ queries.

\subsection{Useful Assumptions}\label{sec:LS.PolyLB.ass}
We make several assumptions to simply the proof in this section.

\begin{assumption}\label{ass:1}
	Once the algorithm $\mathcal{A}$ queried a point on the $i$-th folded-segment $FS(\Vec{x}_{i-1},\Vec{x}_{i})$, the location of point $\Vec{x}_{i}$ and all previous connecting points are provided to the algorithm $\mathcal{A}$.
\end{assumption}

With this assumption, we can quantify the \emph{progress} of algorithm $\mathcal{A}$ at a certain round by the number of connecting points it \emph{knows}, where we say algorithm $\mathcal{A}$ \emph{knows} or \emph{learns} the connecting point $\Vec{x}_i$ if $\mathcal{A}$ is already given the location of $\Vec{x}_i$.

\begin{assumption}\label{ass:2}
	Algorithm $\mathcal{A}$ learns at least one more connecting point in each round, and it \emph{succeeds} on a specific input if it knows the location of the end point $\Vec{x}_{K}$ of the staircase by the end of round $k$.
\end{assumption}

Notice that assumptions~\ref{ass:1} and \ref{ass:2} only make the bound stronger. 

If $\mathcal{A}$ learns $t$ more connecting points in one round, we say $\mathcal{A}$ \emph{saves} $t-1$ rounds, since we expect $\mathcal{A}$ learns exactly one more connecting points in each round by default.

\begin{assumption}
	Algorithm $\mathcal{A}$ issues the same number of queries in each round, i.e.  $\mathcal{Q} \coloneqq \mathcal{Q}_k/k$.
\end{assumption}

Note that every $k$ rounds algorithm $\mathcal{A}$ with $\mathcal{Q}_k$ queries in total can be converted to an algorithm $\mathcal{A}'$ in $2k$ rounds with $\mathcal{Q}_k / k$ queries in each round: if $\mathcal{A}$ makes $t$ queries in a round, $\mathcal{A}'$ can simulate it by submitting those queries in $\lceil t / (\mathcal{Q}_k / k) \rceil$ rounds. This simulation can only increase the total number of rounds by a factor of $2$, which is fine in the polynomial-rounds setting.

\begin{assumption}\label{ass:4}
	Algorithm $\mathcal{A}$ will keep running until it learns the end point of staircase, regardless of the round limit $k$.
\end{assumption}

A fixed round limit is more technically challenging to deal with. With assumption~\ref{ass:4}, we could instead argue that any such algorithm $\mathcal{A}$ will run more than $1.1k$ rounds in expectation. Then, by applying Markov inequality, we will show that $\mathcal{A}$ fails to learn the end point of the staircase with probability at least $0.1$ if the round limit $k$ is imposed. 

Let $SV(s)$ be the total number of rounds $\mathcal{A}$ saved when $\mathcal{A}$ is running on the input generated by a fixed staircase $s$. 
By definition, the total number of rounds needed for learning the end point of $s$ is $K - SV(s)$.

With the assumptions and arguments above, the next two subsections are devoted to proving the following lemma, which establishes $\mathcal{Q}_k$ as the lower bound.

\begin{lemma}\label{lem:sv2thm2}
	The following inequality holds:
	\begin{equation}
		\frac{L^{(K)} \cdot K - \sum_{s \in S^{(K)}} SV(s)}{L^{(K)}} \geq 1.1 k \; ,
	\end{equation}
	where the left hand side is the expected number of rounds needed to learn the end point considering the input distribution generated by length-$K$ staircases.
\end{lemma}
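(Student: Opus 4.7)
The plan is to reinterpret the lemma as a lower bound on the expected number of rounds $r(s) \coloneqq K - SV(s)$ that $\mathcal{A}$ takes to learn the end point $\Vec{x}_K$, where the expectation is taken over the uniform distribution on length-$K$ staircases. Since $K = 2k$, the target inequality is equivalent to $E[r] \geq 1.1k$, and it suffices to establish the per-round progress bound
\begin{equation}\label{eq:progress_proposal}
E[X_j \mid \mathcal{H}_{j-1}] \leq 1 + \gamma
\end{equation}
for every round $j$, where $X_j \geq 1$ is the number of new connecting points learned in round $j$, $\mathcal{H}_{j-1}$ is the history through round $j-1$, and $\gamma$ is a constant strictly less than $9/11$. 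Indeed, (\ref{eq:progress_proposal}) makes $M_j \coloneqq \sum_{i \leq j}(X_i - 1 - \gamma)$ a supermartingale; optional stopping at the bounded stopping time $r \leq K$ gives $K = E\bigl[\sum_{i=1}^r X_i\bigr] \leq (1+\gamma)\,E[r]$, so that $E[r] \geq 2k/(1+\gamma) \geq 1.1k$.

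The proof of (\ref{eq:progress_proposal}) follows a cost-function argument in the spirit of Lemma~\ref{lem:cost1}, adapted to the new random-walk staircase. Fix $\mathcal{H}_{j-1}$ and let $\Vec{x}_i$ be the last connecting point the algorithm knows at the start of round $j$; by Assumption~\ref{ass:1} and the walk's Markov structure, the suffix $\Vec{x}_{i+1},\Vec{x}_{i+2},\ldots$ is a uniformly random continuation from $\Vec{x}_i$. A single query $\Vec{y}$ reveals exactly the connecting points up to the largest $i+t$ whose folded segment it lies on, so a union bound gives
\begin{equation}\label{eq:cost_proposal}
E[X_j \mid \mathcal{H}_{j-1}] \;\leq\; \sum_{\Vec{y} \in Q_j}\sum_{t \geq 1} t\cdot p(\Vec{x}_i,\Vec{y},i,t),
\end{equation}
reducing the task to bounding the right side uniformly over all choices of the query set $Q_j \subseteq [n]^d$ with $|Q_j|\leq \mathcal{Q}$. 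This splits into two regimes reflecting the mixing-versus-local-predictability trade-off from Section~\ref{sec:poly_rounds_main}. For values of $t$ that keep the walk inside the current block of size $m$, each folded segment has length at most $d\ell$ and the walk is confined to a cube of side $O(t\ell)$ around $\Vec{x}_i$; this contributes at most a constant once summed over $t$ and over the $\mathcal{Q}$ queries. For $t$ that cross a wrap-around step, the next connecting point is uniform on $[n]^d$, so $p(\Vec{x}_i,\Vec{y},i,t) = O(d\ell/n^d)$ uniformly in $\Vec{y}$, yielding a total contribution of $\mathcal{Q}\cdot K \cdot O(d\ell/n^d)$. Substituting $\ell = n^{1-2\alpha/d}$, $m = n^{2\alpha/d}$, and the identity $\mathcal{Q} = c_d \ell^{d-1}$ (verified directly from (\ref{equ:q_k})), both regimes collapse to $O(c_d)$, which can be made smaller than $9/11$ by choosing the constants $c_d$ in (\ref{equ:q_k}) small enough. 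The $\log n$ factor in the $d=4$ case of (\ref{equ:q_k}) enters exactly at the boundary between the two regimes, where the within-block sum incurs a logarithmic loss.

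The main obstacle will be making the within-block analysis rigorous. Two subtleties need care. First, the staircase can self-intersect (unlike the monotone walk in the constant-rounds construction), so a query can lie on several folded segments at once; this must be shown to affect the union bound in (\ref{eq:cost_proposal}) by at most a constant factor, likely by bounding the expected number of self-intersections using the structure of the walk. Second, an adaptive algorithm may cluster all its queries inside the cube of side $\ell$ around $\Vec{x}_i$, saturating the within-block contribution, so the constants must be tracked tightly enough to keep $\gamma$ strictly below $9/11$. The specific choice $\ell = n^{1 - 2\alpha/d}$ is forced by this balance: it makes $\mathcal{Q}$ equal the boundary size of a side-$\ell$ cube (up to the constant $c_d$), so that neither regime dominates the other, and the amortized progress quantity $\Gamma$ promised in Section~\ref{sec:poly_rounds_main} is simply a clean name for the resulting bound $1 + \gamma$.
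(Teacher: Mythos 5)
Your supermartingale/optional-stopping framing, which bounds the per-round increment $E[X_j \mid \mathcal{H}_{j-1}]$ and applies optional stopping at the hitting time $r$, is a genuinely different route from the paper's. The paper never controls a conditional expectation given the algorithm's history. Instead it introduces $\overline{SV(s,i)}$ --- the savings attributed to step $i$ when the algorithm is run on the \emph{prefix instance} $f^{s^{(i)}}$ rather than on $f^s$ --- proves $\overline{SV(s)} \geq SV(s)$ via Lemmas~\ref{lem:cri} and~\ref{lem:cri2}, and computes $\sum_s \overline{SV(s)}$ by reorganizing the sum over prefixes (the chain (\ref{equ:sav4})--(\ref{equ:sav7}) inside Lemma~\ref{lem:sav}), finally inserting the estimate $\Gamma \le 9/(20\mathcal{Q})$ from Lemma~\ref{lem:gamma_val}. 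That prefix bookkeeping is exactly what stands in for your per-round conditional-expectation step.

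The reason the paper needs this machinery is also where your proposal has a genuine gap. The claim that, given $\mathcal{H}_{j-1}$, ``the suffix $\Vec{x}_{i+1},\Vec{x}_{i+2},\ldots$ is a uniformly random continuation from $\Vec{x}_i$'' is not correct: $\mathcal{H}_{j-1}$ fixes not only the known prefix up to $\Vec{x}_i$ but also the fact that none of the queries from rounds $1,\ldots,j-1$ landed on the suffix beyond $\Vec{x}_i$. Conditioning on that avoidance event reweights the walk law --- the posterior carries a normalizing denominator equal to the probability the suffix dodged every earlier query --- so replacing the true conditional probability by the unconditioned $p(\Vec{x}_i,\Vec{y},i,t)$ in your cost inequality is not licensed. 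The paper even flags this: $\Gamma(i)$ is motivated by pretending the algorithm ``forgets every query it previously made,'' and the whole point of Lemmas~\ref{lem:cri}, \ref{lem:cri2}, and~\ref{lem:svbar} is to discharge that pretense rigorously. Concretely, Lemma~\ref{lem:cri2} shows that when the $j$-th connecting point is critical, the queries issued in the decisive round are identical whether the algorithm runs on $f^s$ or on the prefix instance $f^{s^{(j)}}$, so the query set can be fixed from the prefix alone and the continuation is then \emph{genuinely} uniform. Until you reproduce that reduction --- or show that the avoidance-conditioning distorts the posterior by at most a constant factor uniformly over histories, which is far from obvious and is not one of the two subtleties you flag --- the bound $E[X_j \mid \mathcal{H}_{j-1}] \le 1 + \gamma$, and hence the optional-stopping conclusion $E[r] \ge 1.1k$, is not established.
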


\subsection{Estimate the Savings}\label{sec:LS.PolyLB.sav}

Let $r^s_i$ be the number of the round in which the $i$-th connecting point on staircase $s$ is first known to $\mathcal{A}$ when running on the input $f^s$.

\begin{definition}[critical point]
	The $i$-th connecting point is a \emph{critical point} of a length $t$ staircase $s$ if $r^s_i \neq r^s_{i+1}$ or $i=t$.
\end{definition}

By assumption~\ref{ass:1}, an equivalent definition is that when the $i$-th connecting point is first learned at round $r^s_i$, $\mathcal{A}$ has not queried any point that is after the $i$-th connecting point; that is, the $i$-th connecting point is learned by querying a point on the $i$-th folded-segment.

Intuitively, each critical point takes one round to learn, while each non-critical points are given for free by assumption~\ref{ass:1}.

By amortizing the rounds saved for each non-critical points to the next closest critical point, we define $SV(s, i)$ as the number of rounds \emph{saved} by the $i$-th connecting point of staircase $s$. More formally,
\begin{enumerate}
	\item $SV(s,i)= i - \max \{j: r^s_j = r^s_i - 1\} - 1$, if $i$-th connecting point is a \emph{critical point};
	\item $SV(s,i)=0$, otherwise.
\end{enumerate}
By definition, $SV(s) = \sum_{i=1}^k SV(s,i)$.

Let $r^s_{i,j}, j \geq i$  be the number of the round in which the $i$-th connecting point on staircase $s$ is first learned by $\mathcal{A}$, when running on the input generated by the length $j$ prefix of $s$.

\begin{lemma}\label{lem:cri}
	The following inequality holds: $r^s_{i,j} \geq r^s_i$.
\end{lemma}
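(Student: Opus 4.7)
The plan is to run $\mathcal{A}$ in parallel on $f^s$ and on the value function generated by the length-$j$ prefix of $s$ (call it $f'$), and track where the two executions begin to diverge. Let $D \subseteq [n]^d$ be the set of grid points at which $f^s$ and $f'$ disagree. From the construction in Section~\ref{sec:LS.PolyLB.ND}, every point of $D$ lies on some folded segment $FS(\Vec{x}_{j'-1},\Vec{x}_{j'})$ of $s$ with $j' > j$: outside any folded segment both functions use the $L_1$ distance from $\Vec{x}_0$, and on common folded segments (those of index at most $j$) both functions assign the same distance-along-the-staircase value with the same self-intersection resolution.

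Since $\mathcal{A}$ is deterministic, the two executions issue identical queries and receive identical responses round by round until the first round $r^*$ in which a query made by $\mathcal{A}$ on $f^s$ lands in $D$. The argument then splits on whether $r^* > r^s_{i,j}$ or $r^* \leq r^s_{i,j}$. In the first case, the two executions remain in lockstep through round $r^s_{i,j}$, so $\Vec{x}_i$, which is learned on $f'$ by round $r^s_{i,j}$, is also learned on $f^s$ by that same round, yielding $r^s_i \leq r^s_{i,j}$. In the second case, the triggering query lies on a folded segment of index $j' > j \geq i$, so Assumption~\ref{ass:1} reveals $\Vec{x}_1, \ldots, \Vec{x}_{j'}$, and in particular $\Vec{x}_i$, already in round $r^*$, again giving $r^s_i \leq r^* \leq r^s_{i,j}$.

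The one technical wrinkle is pinning down $D$ precisely, mainly because of (a) the shared point $\Vec{x}_j$, whose value follows the endpoint randomization rule in $f'$ but the deterministic distance-along-staircase rule in $f^s$, and (b) self-intersection points that lie simultaneously on folded segments of small ($\leq j$) and large ($> j$) index, where the ``closest to the solution'' convention picks different representatives under $f^s$ and $f'$. Case (b) still places such points on a segment of index greater than $j$, matching the characterization of $D$ above; case (a) is harmless because any query at $\Vec{x}_j$ reveals $\Vec{x}_j$ in both executions simultaneously (as the endpoint of $f'$ and via folded segment $j$ under $f^s$), so the case split still goes through. The heart of the proof is just the deterministic coupling described above, and this is where I expect the main care to be needed in the write-up.
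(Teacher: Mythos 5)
Your proof is correct and takes essentially the same approach as the paper: both arguments are deterministic-coupling arguments that track the two executions of $\mathcal{A}$ (on $f^s$ and on $f^s_j$) round by round and exploit Assumption~\ref{ass:1} once a query first touches a point where the two value functions differ. The paper phrases this as a proof by contradiction (assume $r^s_{i,j} < r^s_i$, show by induction the executions stay in lockstep through round $r^s_{i,j}$, and derive a contradiction), while you do a direct case split on the first divergence round $r^*$ versus $r^s_{i,j}$; these are logically equivalent repackagings. Your explicit characterization of the disagreement set $D$ and the care about self-intersection points and the randomized value at the prefix endpoint $\Vec{x}_j$ are more detailed than the paper's write-up, but they address real corner cases and the conclusions you draw from them (that $D$ is contained in folded segments of index $\geq j$, and that querying $\Vec{x}_j$ reveals $\Vec{x}_i$ for all $i \leq j$ in both executions) are sound.
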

\begin{proof}
	Denote the instance generated by the staircase $s$ by $f^s$ and the instance generated by the length $j$ prefix of $s$ by $f^s_j$.
	
	Assume towards a contradiction that $r^s_{i,j} < r^s_i$. Then we prove by induction that the queries made by $\mathcal{A}$ from round $1$ to round $r^s_{i,j}$ are the same on inputs $f^s$ and $f^s_j$.
	
	\emph{Base case:} $\mathcal{A}$ makes same queries in round $1$, as $\mathcal{A}$ is a deterministic algorithm.
	
	\emph{Induction step:} Consider round $t \leq r^s_{i,j}$. By the induction hypothesis, all the queries in the previous $t-1$ rounds are the same. Notice that $f^s$ is equal to $f^s_j$ except for points on $s$ which are after $j$-th connecting point. Moreover, $\mathcal{A}$ never queried any such point in the first $t-1$ rounds; otherwise we would have $r^s_i \leq t-1$ by assumption~\ref{ass:1}, which is impossible. Thus $\mathcal{A}$ also gets the same feedback from queries in the first $t-1$ rounds. This directly implies that $\mathcal{A}$ will make the same queries in round $t$ since the algorithm is deterministic.
	
	When running on the instance $f^s_j$, $\mathcal{A}$ queried a point $\Vec{x}$ on the length $j$ prefix of $s$ at round $r^s_{i,j}$, which reveals the location of $i$-th connecting point by assumption~\ref{ass:1}. Obviously, $\Vec{x}$ is also on the whole staircase $s$, and $r^s_i$ should be equal to $r^s_{i,j}$, which  contradicts the assumption that $r^s_{i,j} < r^s_i$. Thus the assumption was false and the inequality stated in the lemma holds.
\end{proof}

\begin{lemma}\label{lem:cri2}
	If the $j$-th connecting point is a critical point of $s$, the following hold
	\begin{enumerate}
		\item the queries issued by $\mathcal{A}$ from round $1$ to round $r^s_j+1$ are the same when running on either $f^s$ or $f^s_j$;
		\item $r^s_{i,j} = r^s_i$ for all $i \leq j$.
	\end{enumerate}
\end{lemma}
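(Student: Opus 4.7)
The plan is to prove part 1 by induction on the round number $t = 1, 2, \ldots, r^s_j + 1$, and then to derive part 2 by combining part 1 with Lemma~\ref{lem:cri}.

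For part 1, the induction hypothesis is that $\mathcal{A}$ issues identical queries in round $t$ when running on $f^s$ and on $f^s_j$. The base case $t=1$ is immediate from determinism of $\mathcal{A}$. For the inductive step at round $t \leq r^s_j + 1$, by hypothesis all queries in rounds $1, \ldots, t-1$ are identical, so it suffices to argue that the oracle responses in those rounds also match. If $j$ is the end of the staircase then $f^s = f^s_j$ and the claim is trivial; otherwise criticality of $\Vec{x}_j$ means $r^s_{j+1} > r^s_j$, and combined with Assumption~\ref{ass:1} this forces every query issued in rounds $1, \ldots, r^s_j$ to lie on a folded segment of index at most $j$ (a query on a later folded segment would reveal $\Vec{x}_{j+1}$ by assumption). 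Since $f^s$ and $f^s_j$ coincide everywhere except at points strictly after $\Vec{x}_j$ on the staircase, the responses to such queries agree on both inputs, and determinism then yields identical queries at round $t$.

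For part 2, fix $i \leq j$. Lemma~\ref{lem:cri} already gives $r^s_{i,j} \geq r^s_i$, so it remains to show $r^s_{i,j} \leq r^s_i$. Assumption~\ref{ass:1} implies the learning order is monotone in the connecting-point index, so $r^s_i \leq r^s_j$. Consider the specific query at round $r^s_i$ on input $f^s$ that first reveals $\Vec{x}_i$: it must lie on some folded segment $FS(\Vec{x}_{i'-1}, \Vec{x}_{i'})$ with $i \leq i'$, and by the observation used in part 1 we also have $i' \leq j$. By part 1, the same query is issued in round $r^s_i$ on input $f^s_j$; since the first $j$ folded segments coincide in the two instances, the oracle response is identical, and $\Vec{x}_i$ is learned no later than round $r^s_i$ on $f^s_j$ as well.

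The main obstacle is the careful bookkeeping around Assumption~\ref{ass:1}: one must verify that criticality of $\Vec{x}_j$ genuinely prevents any query in rounds $1, \ldots, r^s_j$ from landing on a folded segment with index greater than $j$, and apply this fact consistently both when arguing that the oracle replies coincide through round $r^s_j$ and when locating the query that reveals $\Vec{x}_i$ at round $r^s_i$. Once this is nailed down, both parts follow cleanly from determinism and from the fact that $f^s$ and $f^s_j$ agree outside the suffix of the staircase past $\Vec{x}_j$.
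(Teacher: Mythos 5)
Your proof is correct and follows essentially the same route as the paper's: part 1 by induction on the round number using that criticality of $\Vec{x}_j$ forbids any query in rounds $1,\ldots,r^s_j$ from landing past the $j$-th folded segment, and part 2 by tracking the specific revealing query across the two instances. You are slightly more explicit than the paper in two places — you invoke Lemma~\ref{lem:cri} to get the direction $r^s_{i,j} \geq r^s_i$ and then argue the reverse inequality separately, and you treat the edge case where $\Vec{x}_j$ is the endpoint of the staircase — but the underlying argument is the same.
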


\begin{proof}
	Recall the equivalent definition of critical point: when running on input $f^s$, $\mathcal{A}$ will not query any point on $s$ that is after the $j$-th connecting point in the first $r^s_j$ rounds.
	
	Notice that  $f^s$ and $f^s_j$ have the same value except for points on $s$ which are after $j$-th connecting point. 
	Using an induction argument similar to the one in Lemma~\ref{lem:cri}, we show that the queries made by $\mathcal{A}$ from round $1$ to round $r^s_j+1$ are the same when running on either $f^s$ or $f^s_j$. 
	
	\emph{Base case:} $\mathcal{A}$ makes the same queries in round $1$.
	
	\emph{Induction step:} Consider round $t \leq r^s_j + 1$. By the induction hypothesis, all the queries in the previous $t-1$ rounds are the same. Notice that $f^s$ is equal to $f^s_j$ except for points on $s$ which are after $j$-th connecting point. Moreover, $\mathcal{A}$ never queried any such point in the first $t-1$ rounds; otherwise we would have $r^s_{i+1} \leq t-1 \leq r^s_j$, which contradicts to the fact that the $j$-th connecting point is critical. Thus $\mathcal{A}$ also gets the same feedback from queries in the first $t-1$ rounds. This directly implies that $\mathcal{A}$ will make the same queries in round $t$.
	
	Since the $j$-th connecting point is a critical point, $\mathcal{A}$ must query a point that is on the $j$-th folded-segment in round $r^s_j$, when running on both input $f^s$ or $f^s_j$. We then have $r^s_{j,j} = r^s_j$, as the fact that $j$-th folded-segment is in the length $j$  prefix.
	
	By assumption~\ref{ass:1}, when running on either $f^s$ or $f^s_j$, for any $i < j$, the $i$-th connecting point is learned before round $r^s_{j,j} = r^s_j$, and all the queries before round round $r^s_j$ are the same. Thus, $r^s_{i,j} = r^s_i$ for any $i \leq j$.
\end{proof}

Let $s^{(i)}$ be the length $i$ prefix of $s$. For all $i$, define $$\overline{SV(s,i)} = SV(s^{(i)},i) = i - \max \{j: r^s_{j,i} = r^s_{i,i} - 1\} - 1\,.$$ 
The value of $\overline{SV(s,i)}$ only depends on the length $i$ prefix of $s$. Similarly, let \[ \overline{SV(s)} = \sum_{i=1}^{K} \overline{SV(s,i)} \,.\] 
\bigskip

The next lemma shows that $\overline{SV(s)}$ is an upper bound of $SV(s)$ for any $s$.

\begin{lemma}\label{lem:svbar}
	The inequality $\overline{SV(s)} \geq SV(s)$ holds for any staircase $s$.
\end{lemma}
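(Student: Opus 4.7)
The plan is to prove the inequality term-by-term, showing that $\overline{SV(s,i)} \geq SV(s,i)$ for every index $i \in \{1, \ldots, K\}$, and then summing. The two quantities $\overline{SV(s,i)}$ and $SV(s,i)$ differ only in that one is measured with respect to the length-$i$ prefix $s^{(i)}$ while the other is measured with respect to the full staircase $s$, so the whole proof reduces to comparing the round indices $r^s_{j,i}$ with $r^s_j$ for $j \leq i$. The two previous lemmas supply precisely this comparison.

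First I would dispose of the easy case. If the $i$-th connecting point is \emph{not} critical in $s$, then by definition $SV(s,i) = 0$. On the other hand, $\overline{SV(s,i)} = SV(s^{(i)}, i)$, and since the last connecting point of any staircase is critical by the ``$i = t$'' clause in the definition of critical point, the expression $\overline{SV(s,i)} = i - \max\{j : r^s_{j,i} = r^s_{i,i} - 1\} - 1$ is a well-defined nonnegative integer (the max is bounded above by $i-1$, with the convention that it takes value $0$ when the set is empty, corresponding to the start point $\Vec{x}_0$). Hence $\overline{SV(s,i)} \geq 0 = SV(s,i)$ trivially in this case.

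Next, for $i$ critical in $s$, I would invoke Lemma~\ref{lem:cri2}, part $2$, which gives $r^s_{j,i} = r^s_j$ for every $j \leq i$. Plugging this into the definition of $\overline{SV(s,i)}$ immediately yields
\begin{equation*}
\overline{SV(s,i)} = i - \max\{j : r^s_j = r^s_i - 1\} - 1 = SV(s,i),
\end{equation*}
so the two quantities agree exactly on critical indices. Summing over all $i$ from $1$ to $K$ then gives $\overline{SV(s)} = \sum_i \overline{SV(s,i)} \geq \sum_i SV(s,i) = SV(s)$, as desired.

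I do not expect a real obstacle here; the whole argument is essentially bookkeeping once Lemma~\ref{lem:cri2} is available. The only subtlety worth double-checking is the edge case in which the set $\{j : r^s_{j,i} = r^s_{i,i} - 1\}$ is empty (e.g.\ when $r^s_{i,i} = 1$), where one must fix the convention $\max \emptyset = 0$ consistently across both $SV$ and $\overline{SV}$ so that the identity on critical indices really does hold. Beyond this minor point, the proof is a direct consequence of Lemma~\ref{lem:cri2} applied index by index.
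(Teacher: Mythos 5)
Your proof is correct and follows essentially the same structure as the paper's: argue term-by-term that $\overline{SV(s,i)} \geq SV(s,i)$, handling the non-critical case by nonnegativity of $\overline{SV(s,i)}$ and the critical case via Lemma~\ref{lem:cri2} part 2. The edge case you flag (the set $\{j : r^s_{j,i} = r^s_{i,i} - 1\}$ being empty) does not actually arise, since $r^s_{0,i}=0$ and Assumption~2 guarantees at least one new connecting point is learned each round, so every round index up to $r^s_{i,i}$ is hit by some $j$.
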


\begin{proof}
	It suffices to prove  that $\overline{SV(s,i)} \geq SV(s,i)$ for any $i$. The case of the $i$-th connecting point being non-critical point is easier, since $\overline{SV(s,i)} \geq 0 = SV(s,i)$.
	
	If the $i$-th connecting point is a critical point, we have $r^s_{j,i} = r^s_j$ for any $j \leq i$ by Lemma~\ref{lem:cri2}. In this case, the definitions of $\overline{SV(s,i)}$ and $SV(s,i)$ are the same.
\end{proof}

Assume algorithm $\mathcal{A}$ has now learned the first $i$ connecting points and denote the $i$-th connecting point by $\Vec{x}_i$. Let point $\Vec{y}$ be a queried point in the next round and consider the number of rounds saved by point $\Vec{y}$. 

If point $\Vec{y}$ is on the $(i+t-1)$-th folded-segment, $t$ more connecting points are learned and $t-1$ rounds are saved by assumption~\ref{ass:1}. Suppose that algorithm $\mathcal{A}$ \emph{forgets} every query it previously made and the structure of the length $i$ prefix of the staircase except point $\Vec{x}_i$. Then the probability of $\Vec{y}$ being on the $(i+t-1)$-th folded-segment is $p(\Vec{x}_i, \Vec{y}, i, t)$ by definition.

In this worst-case for the algorithm, the number of rounds saved by $\Vec{y}$ in expectation is bounded by the term:
\begin{equation}\label{equ:rs_i}
	\Gamma(i) = \max_{\Vec{x}, \Vec{y} \in [n]^d} \sum_{t=1}^{K-i} (t-1) \cdot p(\Vec{x}, \Vec{y}, i, t).
\end{equation}
The value of $\Gamma(i)$ only depends on the random walk used for generating staircase, regardless the choice of algorithm $\mathcal{A}$. 
\medskip 

Define
\begin{equation}\label{equ:rs}
	\Gamma = \frac{1}{K} \sum_{i=0}^{K-1} \Gamma(i) \; .
\end{equation}
$\Gamma$ will be a good overall estimate even when the algorithm $\mathcal{A}$ does not forget (i.e. could benefit from the queries and knowledge acquired in previous rounds).

\medskip 

The following key lemma upper bounds the savings of all staircase by the value of $\Gamma$.

\begin{lemma}\label{lem:sav}
	\begin{equation}
		\sum_{s \in {S^{(K)}}} SV(s) \leq L^{(K)} \cdot K \cdot \mathcal{Q} \cdot \Gamma,
	\end{equation}
\end{lemma}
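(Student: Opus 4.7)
My plan is to bound $\sum_s SV(s)$ by routing through the surrogate $\overline{SV}$ from Lemma~\ref{lem:svbar}. Since $\overline{SV(s, i)} = SV(s^{(i)}, i)$ depends only on the length-$i$ prefix of $s$, and each length-$i$ prefix extends to exactly $L^{(K)}/L^{(i)}$ length-$K$ staircases, swapping the order of summation gives
$$\sum_{s \in S^{(K)}} SV(s) \;\leq\; \sum_s \overline{SV(s)} \;=\; L^{(K)} \sum_{i=1}^{K} \frac{1}{L^{(i)}} \sum_{\sigma \in S^{(i)}} SV(\sigma, i).$$
Routing through $\overline{SV}$ is essential here: $SV(\sigma, i)$ is computed from the execution of $\mathcal{A}$ on the truncated input $f^\sigma$, where there are no folded segments of index beyond $i$, which avoids the distributional subtlety of conditioning on the event that past queries happened to miss later folded segments in the full input $f^s$.

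For each $i$, I would then bound $\frac{1}{L^{(i)}}\sum_\sigma SV(\sigma, i)$ by conditioning on the length-$j$ prefix $\sigma^{(j)}$ that is known at the start of the round $R(\sigma)$ in which the endpoint $\Vec{x}_i$ is first learned; on this event, $SV(\sigma, i) = i - j - 1$. By Lemma~\ref{lem:cri2} (applied to $s = \sigma$, noting that the $i$-th connecting point of $\sigma$ is critical since it is the last), the execution of $\mathcal{A}$ on $f^\sigma$ coincides with the phantom execution on $f^{\sigma^{(j)}}$ up through round $R(\sigma)$, so the queries in that critical round are determined solely by $\sigma^{(j)}$. For each such query $\Vec{y}$ and each offset $t$, the probability (over the random-walk extension of $\sigma^{(j)}$ to a length-$i$ staircase) that $\Vec{y}$ lies on the folded segment of $\sigma$ at the index corresponding to $t$ is exactly $p(\Vec{x}_j, \Vec{y}, j, t)$, and the saving $i - j - 1$ is identified with the factor $(t - 1)$ appearing in the definition of $\Gamma(j)$.

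Reorganizing the quadruple sum over $(i, j, \sigma^{(j)}, \Vec{y})$, the inner sum over $t$ (equivalently over $i$, for fixed $j$) of $(t-1)\,p(\Vec{x}_j, \Vec{y}, j, t)$ is at most $\Gamma(j)$ by definition, independently of $\sigma^{(j)}$ and $\Vec{y}$. Summing $\mathcal{Q}$ queries from the critical round, and then summing $\Gamma(j)$ over $j \in \{0, \ldots, K-1\}$, produces $\mathcal{Q} \sum_{j=0}^{K-1} \Gamma(j) = K \mathcal{Q} \Gamma$, yielding $\sum_s SV(s) \leq L^{(K)} \cdot K \cdot \mathcal{Q} \cdot \Gamma$ as required. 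The main obstacle lies precisely in this last accounting step: a crude union bound over all phantom queries across all rounds would introduce an extraneous factor of $K$ and give only $L^{(K)} K^2 \mathcal{Q} \Gamma$. Keeping the charging one-to-one requires using that the critical round $R(\sigma)$ is uniquely determined by $\sigma$, so that each extension is charged to exactly one $(\sigma^{(j)}, R(\sigma), \Vec{y})$ triple and the per-round cost is $\mathcal{Q}$ rather than $K\mathcal{Q}$; this is where the truncation-based setup of $\overline{SV}$ pays off, since on $f^\sigma$ the critical round is the final round of the phantom execution on $f^\sigma$.
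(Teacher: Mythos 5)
Your proposal is correct and takes essentially the same route as the paper's proof: both pass through $\overline{SV}$ via Lemma~\ref{lem:svbar}, invoke Lemma~\ref{lem:cri2} to identify the queries in the critical round with those of the phantom execution on the prefix $\sigma^{(j)}$, union-bound only over the $\mathcal{Q}$ queries of that single round, and then recognize the resulting weighted sum over extensions as the quantity controlled by $\Gamma(j)$. The paper's chain (\ref{equ:sav1})--(\ref{equ:sav15}) merely organizes the same reorganization of sums by fixing $j$ and the prefix $s_1$ first, then summing over $i$ and the extension $s_2$, whereas you fix $i$ first and then locate $j$; the two orderings are interchangeable, and your discussion of why routing through $\overline{SV}$ avoids an extraneous factor of $K$ matches the role of inequality (\ref{equ:sav11}) in the paper.
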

\begin{proof}
	We start by applying lemma~\ref{lem:svbar} and reformulating the summation in a few steps by definition. We obtain the following:
	\allowdisplaybreaks
\begin{small}
	\begin{align}
		\sum_{s \in {S^{(K)}}} SV(s) &\leq \sum_{s \in {S^{(K)}}} \overline{SV(s)}
		\label{equ:sav1} \\
		&= \sum_{s \in {S^{(K)}}} \sum_{i=1}^K \overline{SV(s,i)} 
		\label{equ:sav2} \\
		&= \sum_{s \in {S^{(K)}}} \sum_{i=1}^K i - \max \{j: r^s_{j,i} = r^s_{i,i} - 1\} - 1
		\label{equ:sav3} \\
		&= \sum_{s \in {S^{(K)}}} \sum_{j=0}^K \sum_{i=j+1}^K \mathbf{1}[r^s_{j,i} = r^s_{j+1,i} -1 = r^s_{i,i} -1] \cdot (i-j-1) 
		\label{equ:sav4} \\
		&= \sum_{j=0}^K \sum_{s_1 \in {S^{(j)}}} \sum_{i=j+1}^K \sum_{s_2 \in {S^{(i)}(s_1)}} \sum_{s \in {S^{(K)}(s_2)}}
		\mathbf{1}[r^s_{j,i} = r^s_{j+1,i} -1 = r^s_{i,i} -1] \cdot (i-j-1) 
		\label{equ:sav5} \\
		&= \sum_{j=0}^K \sum_{s_1 \in {S^{(j)}}} \sum_{i=j+1}^K (i-j-1) \sum_{s_2 \in {S^{(i)}(s_1)}}
		\mathbf{1}[r^{s_2}_{j} = r^{s_2}_{j+1} -1 = r^{s_2}_{i} -1] \cdot \left| S^{(K)}(s_2) \right|
		\label{equ:sav6} \\
		&= L^{(K)} \cdot \sum_{j=0}^K \sum_{s_1 \in {S^{(j)}}} \sum_{i=j+1}^K \frac{(i-j-1)}{L^{(i)}} \sum_{s_2 \in {S^{(i)}(s_1)}}
		\mathbf{1}[r^{s_2}_{j} = r^{s_2}_{j+1} -1 = r^{s_2}_{i} -1]
		\label{equ:sav7}
	\end{align}
	\end{small}
 
	Inequality (\ref{equ:sav1}) is implied by Lemma~\ref{lem:svbar}. Identities (\ref{equ:sav2}) and (\ref{equ:sav3}) are the definition of $\overline{SV(s)}$. Identities (\ref{equ:sav4}) and (\ref{equ:sav5}) switch the order of the summation and enumerate a length $K$ staircase $s$ by first enumerating its length $j$ and length $i$ prefix. For equation (\ref{equ:sav6}), the value  $$\mathbf{1}[r^s_{j,i} = r^s_{j+1,i} -1 = r^s_{i,i} -1]$$ only depends on the length $i$ prefix $s_2$. Finally, for equation (\ref{equ:sav7}), recall that for any length $i$ prefix $s_2$, the value of  $\left| S^{(K)}(s_2) \right|$ are same, namely $L^{(K)} / L^{(i)}$, which is guaranteed by the way we generate the staircases.
	An interpretation of equation (\ref{equ:sav7}) is that we first enumerate $j$ and all length $j$ prefix $s_1$, and then enumerate $i$ and all length $i$ prefix $s_2$ growing from $s_1$, if condition $r^{s_2}_{j} = r^{s_2}_{j+1} -1 = r^{s_2}_{i} -1$ is true, then there is a contribution of $(i-j-1)\cdot (L^{(K)} / L^{(i)})$ to the total sum of saved rounds made by $s_2$.
	
	Define a new condition: $H(s_1, s_2)$ is true if the $i$-th folded-segment of $s_2$ is queried in round $r^{s_1}_j+1$ when $\mathcal{A}$ is running on the instance generated by $s_1$. 
	\medskip 
	
	We show that condition $r^{s_2}_{j} = r^{s_2}_{j+1} -1 = r^{s_2}_{i} -1$ implies $H(s_1, s_2)$:
	\begin{itemize}
		\item If $r^{s_2}_i - 1 = r^{s_2}_j$ holds, then at least one point on $i$-th folded-segment of $s_2$ is queried in round $r^{s_2}_j+1$ when running on the instance generated by $s_2$.
		\item If $r^{s_2}_{j} = r^{s_2}_{j+1} -1$ holds, then by definition the $j$-th connecting point is a critical point of $s_2$. Further applying lemma~\ref{lem:cri2}, we know $r^{s_1}_j = r^{s_2}_{j,j} = r^{s_2}_j$ and the queries in the first $r^{s_2}_j + 1$ rounds are same when $\mathcal{A}$ is running on instance generated by either $s_1$ or $s_2$. 
	\end{itemize}
	Thus, $H(s_1, s_2)$ holds by combining the two facts above.
	
	\medskip 
	
	Let $Q_\mathcal{A}(s, t)$ be the set of queries made by $\mathcal{A}$ in round $t$ when running on instance generated by staircase $s$. For any point $\Vec{x} \in Q_\mathcal{A}(s_1, r^{s_1}_j+1)$, define $H(s_1, s_2, \Vec{x})$ to be the condition that point $\Vec{x}$ is on the $i$-th folded-segment of $s_2$.
	
	We continue the previous calculation by replacing the condition $r^{s_2}_{j} = r^{s_2}_{j+1} -1 = r^{s_2}_{i} -1$ with $H(s_1, s_2)$ in equation (\ref{equ:sav7}):
	\allowdisplaybreaks
	\begin{align}
		& \sum_{s \in {S^{(K)}}} SV(s) \notag \\
		&\leq L^{(K)} \cdot \sum_{j=0}^K \sum_{s_1 \in {S^{(j)}}} \sum_{i=j+1}^K \frac{(i-j-1)}{L^{(i)}} \sum_{s_2 \in {S^{(i)}(s_1)}}
		\mathbf{1}[H(s_1, s_2)] \label{equ:sav10}\\
		&\leq L^{(K)} \cdot \sum_{j=0}^K \sum_{s_1 \in {S^{(j)}}} \sum_{\Vec{x} \in Q_\mathcal{A}(s_1, r^{s_1}_j+1)}\sum_{i=j+1}^K \frac{(i-j-1)}{L^{(i)}} \sum_{s_2 \in {S^{(i)}(s_1)}}
		\mathbf{1}[H(s_1, s_2, \Vec{x})] \label{equ:sav11}\\
		&= L^{(K)} \cdot \sum_{j=0}^K \sum_{s_1 \in {S^{(j)}}} \sum_{\Vec{x} \in Q_\mathcal{A}(s_1, r^{s_1}_j+1)}\sum_{i=j+1}^K \frac{(i-j-1)}{L^{(i)}} \cdot \left| S^{(i)}(s_1) \right| \cdot
		\Pr_{s_2 \in {S^{(i)}(s_1)}}[H(s_1, s_2, \Vec{x})] \label{equ:sav12}\\
		&= L^{(K)} \cdot \sum_{j=0}^K \frac{1}{L^{(j)}} \sum_{s_1 \in {S^{(j)}}} \sum_{\Vec{x} \in Q_\mathcal{A}(s_1, r^{s_1}_j+1)}\sum_{i=j+1}^K (i-j-1) \cdot \Pr_{s_2 \in {S^{(i)}(s_1)}}[H(s_1, s_2, \Vec{x})] \label{equ:sav13}\\
		&\leq L^{(K)} \cdot \sum_{j=0}^K \frac{1}{L^{(j)}} \sum_{s_1 \in {S^{(j)}}} \sum_{\Vec{x} \in Q_\mathcal{A}(s_1, r^{s_1}_j+1)} \Gamma(j) \label{equ:sav14}\\
		&= L^{(K)} \cdot K \cdot \mathcal{Q} \cdot \Gamma \label{equ:sav15}
	\end{align}
	
	In the above, inequality (\ref{equ:sav11}) is the union bound and inequality (\ref{equ:sav14}) is implied by the definition of $\Gamma$ in expression (\ref{equ:rs}).
\end{proof}

\subsection{Estimate $\Gamma$}\label{sec:LS.PolyLB.est}

\medskip    We are left now with proving Lemma~\ref{lem:sv2thm2} by showing $\Gamma \leq 9 / 20\mathcal{Q}$.

We first consider several properties of the random walk to simplify expression (\ref{equ:rs_i}) for $\Gamma(i)$.

\begin{observation}
	For any $\Vec{x}, \Vec{y}\in [n]^d, i, t \in [K]$, we have 
	\begin{align*}
		& p(\Vec{x}, \Vec{y}, i, t) = p(\Vec{1}, \Vec{y}-\Vec{x}, i, t), \\
		& q(\Vec{x}, \Vec{y}, i, t) = q(\Vec{1}, \Vec{y}-\Vec{x}, i, t) \;.
	\end{align*}
	
\end{observation}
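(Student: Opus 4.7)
The plan is to show that the random walk used to generate staircases is translation invariant on the torus $[n]^d$. Since the statement is an observation, the hope is that this invariance falls out directly from the wrap-around convention fixed in Section~\ref{sec:LS.PolyLB.ND}, but two ingredients have to be checked: that each one-step transition distribution is translation invariant, and that the folded-segment connecting two consecutive connecting points is translation invariant.

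First I would check translation invariance of the one-step transition. From the definition, $W^1(x) = \{y \in [n] : x < y \leq x+\ell$ or $y \leq x+\ell - n\}$, which modulo $n$ is simply $\{x+1, x+2, \ldots, x+\ell\} \pmod n$. Hence $W(\Vec{x}) = \Vec{x} + W(\Vec{0}) \pmod{n}$ coordinatewise, so the uniform distribution on $W(\Vec{x})$ is the translate by $\Vec{x}-\Vec{1}$ of the uniform distribution on $W(\Vec{1})$. The alternative transition, uniform on $[n]^d$, is translation invariant as well since $[n]^d$ itself is a group under $\pmod n$ addition. Crucially, the rule for which transition is used at step $j$ depends only on $(j \bmod m)$, not on the current location, so the transition kernel at each step commutes with torus translations.

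Second, I would verify that $FS(\Vec{x}+\Vec{c}, \Vec{y}+\Vec{c}) = FS(\Vec{x}, \Vec{y}) + \Vec{c} \pmod{n}$. This is immediate from Definition~\ref{def:quasi}: the set $E_i(\Vec{x},\Vec{y})$ is determined by the coordinate intervals $[\min(x_i,y_i),\max(x_i,y_i)]$ together with the partial coordinate strings $(y_1,\dots,y_{i-1})$ and $(x_{i+1},\dots,x_d)$, and all of these shift uniformly when both endpoints are translated by the same vector. Combining this with the previous step and inducting on $t$, the joint distribution of $(\Vec{x}_{i+1},\dots,\Vec{x}_{i+t})$ and of the intervening folded-segments, conditioned on $\Vec{x}_i = \Vec{x}$, equals the translate by $\Vec{x} - \Vec{1}$ of the same joint distribution conditioned on $\Vec{x}_i = \Vec{1}$.

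Applying this to the events defining $p$ and $q$ finishes the proof: the event ``$\Vec{y}$ lies on the $(i+t-1)$-th folded-segment'' conditioned on $\Vec{x}_i = \Vec{x}$ has the same probability as the event ``$\Vec{y} - \Vec{x} + \Vec{1}$ lies on the $(i+t-1)$-th folded-segment'' conditioned on $\Vec{x}_i = \Vec{1}$, which gives $p(\Vec{x},\Vec{y},i,t) = p(\Vec{1},\Vec{y}-\Vec{x},i,t)$ (under the convention that $\Vec{y}-\Vec{x}$ is reduced $\pmod n$), and analogously for $q$. The only real subtlety is bookkeeping with the wrap-around arithmetic, in particular being consistent about treating $[n]^d$ as $(\mathbb{Z}/n\mathbb{Z})^d$; this is already the convention adopted immediately after the definition of $W^r$ in Section~\ref{sec:LS.PolyLB.ND}.
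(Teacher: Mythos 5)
Your proof is correct, and it fills in exactly the reasoning the paper leaves implicit by calling this an Observation: translation invariance of the one-step kernels (both the $W$-step and the uniform restart) together with translation equivariance of the folded-segment under the torus arithmetic declared in Section~\ref{sec:LS.PolyLB.ND}, with an induction on $t$. The only wrinkle, which you flag parenthetically, is that a literal translation by $\Vec{x}-\Vec{1}$ sends $\Vec{y}$ to $\Vec{y}-\Vec{x}+\Vec{1}$ rather than $\Vec{y}-\Vec{x}$, so the paper's $\Vec{y}-\Vec{x}$ must be read with the same additive offset/mod-$n$ convention; this is harmless since the observation is only used after setting $p(\Vec{x},i,t)\coloneqq p(\Vec{1},\Vec{x},i,t)$.
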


To simplify notation, let  $p(\Vec{x}, i,t) \coloneqq p(\Vec{1}, \Vec{x}, i, t)$ and $q(\Vec{x}, i,t) \coloneqq q(\Vec{1}, \Vec{x}, i, t)$. Now we have 
$$\Gamma(i) \coloneqq \max_{\Vec{x} \in [n]^d} \sum_{t=1}^{K-i} (t-1) \cdot p(\Vec{x}, i, t) \;.$$

As the value of the function $q$ is easier to estimate, the following lemma upper bounds the value of function $p$ by the function $q$.

\begin{lemma}\label{lem:p_q}
	For any $\Vec{x} \in [n]^d$, the following hold:
	\begin{itemize}
		\item $\; \; \; $ If $i+t \; \mathrm{mod} \; m \neq 0$, then: $$p(\Vec{x}, i, t) \leq 2\ell \cdot d \cdot \max_{\Vec{y} \in W^{-1}(\Vec{x})} q(\Vec{y}, i, t-1)  \;. $$
		\item  $\; \; \; $ Else:
		$$p(\Vec{x}, i, t) \leq n \cdot d \cdot \max_{\Vec{y} \in [n]^d} q(\Vec{y}, i, t-1) $$
	\end{itemize}
\end{lemma}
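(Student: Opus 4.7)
The approach is to decompose the event $\{\Vec{x} \text{ lies on the relevant folded segment}\}$ according to which elementary piece $E_j$ from Definition~\ref{def:quasi} contains $\Vec{x}$, and then to condition on the value of the earlier endpoint of the segment, whose marginal law is exactly $q(\cdot,i,t-1)$. Writing the segment as $FS(\Vec{x}',\Vec{x}'')$ with $\Vec{x}'$ the earlier and $\Vec{x}''$ the later endpoint, the union bound gives $p(\Vec{x},i,t)\leq\sum_{j=1}^{d}\Pr[\Vec{x}\in E_j(\Vec{x}',\Vec{x}'')]$. Membership of $\Vec{x}$ in $E_j$ forces $x''_k=x_k$ for $k<j$, $x'_k=x_k$ for $k>j$, and $x_j$ to be bracketed between $x'_j$ and $x''_j$.

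For the first case (local transition), the staircase law makes $\Vec{x}''$ uniform on $W(\Vec{x}')$, so each coordinate of $\Vec{x}''$ lies strictly above the corresponding coordinate of $\Vec{x}'$ by at most $\ell$. Combined with the $E_j$ constraints this forces $\Vec{x}'\in W^{-1}_j(\Vec{x})$ whenever the event $\{\Vec{x}\in E_j\}$ occurs. Conditioning on $\Vec{x}'=\Vec{y}\in W^{-1}_j(\Vec{x})$ and computing the probability over the uniform choice of $\Vec{x}''\in W(\Vec{y})$: the $j-1$ coordinates pinned to $\Vec{x}$ each contribute $1/\ell$, the $j$-th coordinate must land in $[x_j,y_j+\ell]\subseteq(y_j,y_j+\ell]$ (which has $y_j+\ell-x_j+1$ of the $\ell$ admissible values), and the remaining $d-j$ coordinates are free, yielding a conditional factor of $(y_j+\ell-x_j+1)/\ell^{j}$. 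Pulling $\max_{\Vec{y}\in W^{-1}_j(\Vec{x})}q(\Vec{y},i,t-1)$ out of the sum and using
\[
\sum_{\Vec{y}\in W^{-1}_j(\Vec{x})}\frac{y_j+\ell-x_j+1}{\ell^{j}}\;=\;\frac{\ell^{j-1}\cdot\ell(\ell+1)/2}{\ell^{j}}\;=\;\frac{\ell+1}{2}\;\leq\;\ell
\]
bounds each $\Pr[\Vec{x}\in E_j]$ by $\ell\cdot\max_{\Vec{y}\in W^{-1}(\Vec{x})}q(\Vec{y},i,t-1)$, where I have enlarged the maximizing set via $W^{-1}_j(\Vec{x})\subseteq W^{-1}(\Vec{x})$. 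Summing over $j\in[d]$ and absorbing any coordinate wrap-around at the grid boundary into a constant factor produces the advertised $2\ell d$.

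For the second case (uniform reset) the same decomposition applies, but $\Vec{x}''$ is uniform on $[n]^d$ instead of $W(\Vec{x}')$, so the $W$ constraint on $\Vec{x}'$ disappears: $\Vec{x}'$ only needs to agree with $\Vec{x}$ on coordinates $k>j$, giving a set of size $n^j$. The conditional probability over $\Vec{x}''$ is $1/n^{j-1}$ for the $j-1$ pinned coordinates times at most $1$ for the $j$-th coordinate (at most $n$ admissible values on the correct side of $x_j$), yielding $\Pr[\Vec{x}\in E_j]\leq n\cdot\max_{\Vec{y}\in[n]^d}q(\Vec{y},i,t-1)$; summing over $j\in[d]$ gives the factor $n\cdot d$.

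The main technical care lies in the local case: I have to verify that the ``forward only'' direction of the $W$-transition aligns exactly with the definition of $W^{-1}_j(\Vec{x})$ so that no spurious factor of $\ell$ creeps in, and to evaluate the arithmetic sum $\sum(y_j+\ell-x_j+1)$ so that it collapses to $\ell(\ell+1)/2$ rather than something proportional to $\ell^{2}$. The slack between $(\ell+1)/2$ and $\ell$, together with any wrap-around at the boundary of $[n]^d$, is absorbed into the factor $2$ in $2\ell d$. Case B is routine once the coordinate-wise constraints from the $E_j$ decomposition are in place.
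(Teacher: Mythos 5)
Your decomposition over the elementary pieces $E_j$, the conditioning on the earlier endpoint $\Vec{x}'$ with marginal $q(\cdot,i,t-1)$, and the coordinate-by-coordinate counting all match the structure of the paper's proof, and your refinement of the $j$-th-coordinate count to $(y_j+\ell-x_j+1)/\ell$ (rather than the paper's cruder upper bound of $1$ on that factor) is correct and slightly sharper. The uniform-reset case is handled the same way in both arguments.

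The gap is in the handling of wrap-around, which you acknowledge but do not actually close. Your claim that membership of $\Vec{x}$ in $E_j(\Vec{x}',\Vec{x}'')$ forces $\Vec{x}' \in W^{-1}_j(\Vec{x})$ is only valid when the $j$-th coordinate step does not wrap around $[n]$. If $x'_j$ is within $\ell$ of $n$ and $x''_j$ wraps to a small value, the folded-segment piece $E_j$ covers the interval $[x''_j, x'_j]$ ``the long way,'' and the constraint becomes $x'_j > x_j$ with $x'_j$ near the boundary. In that regime $\Vec{x}'$ lies in the set the paper calls $W^b_j(\Vec{x})$ (defined precisely to capture this reverse-direction hit), and $W^b_j(\Vec{x})$ is \emph{not} a subset of $W^{-1}(\Vec{x})$. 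Consequently you cannot absorb the wrap-around contribution into the constant $2$: the expression $\max_{\Vec{y}\in W^{-1}(\Vec{x})} q(\Vec{y},i,t-1)$ simply does not dominate the $q$-values of the points $\Vec{y}\in W^b_j(\Vec{x})$ that you need to sum over, so the ``slack from $(\ell+1)/2$ versus $\ell$'' does not help. The paper's proof resolves this by decomposing the relevant set of earlier endpoints as $W^r(\Vec{x}) = W^{-1}(\Vec{x}) \cup W^b(\Vec{x})$, bounding each of $W^{-1}_j$ and $W^b_j$ by $\ell^j$, and taking the maximum over the full $W^r(\Vec{x})$; the factor of $2$ in $2\ell d$ comes precisely from these being two disjoint families of size $\ell^j$ each, not from arithmetic slack. (Indeed, the paper's own Corollary~\ref{cor:p1p2_q} and the derivation at inequality~(\ref{equ:p_q_4}) use $\max_{\Vec{y}\in W^r(\Vec{x})}$; the $W^{-1}$ appearing in the lemma statement looks like a typo.) To fix your argument you should introduce $W^b_j(\Vec{x})$ explicitly, show that wrap-around hits force $\Vec{x}'\in W^b_j(\Vec{x})$, bound that contribution analogously, and enlarge the maximizing set to $W^r(\Vec{x})$ (or simply $[n]^d$, as the downstream estimate $\Gamma \leq 9/(20\mathcal{Q})$ ultimately does).
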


\begin{proof}	
	We will first prove the case of $i+t \; \mathrm{mod} \; m \neq 0$. Let $\Vec{y}, \Vec{z}$ be the $(i+t-1)$-th $(i+t)$-th connecting points. Recall definition~\ref{def:quasi}, folded-segment $FS(\Vec{y}, \Vec{z})$ consists of $d$ segments traversing in $d$ directions, namely $E_1(\Vec{y}, \Vec{z}), \ldots, E_d(\Vec{y}, \Vec{z})$. 
	
	If $\Vec{y} \in W^{-1}_j(\Vec{x}) \backslash W^{-1}_{j-1}(\Vec{x})$, the segment $E_j(\Vec{y}, \Vec{z})$ will visit $\Vec{x}$ only if the first $j-1$ coordinate of point $\Vec{z}$ is same to that of $\Vec{x}$. 
	
	Consider a fixed choice of $\Vec{x}$ and $\Vec{y}$. As point $\Vec{z}$ is uniformly drawn from $W(\Vec{y})$, the probability that $\Vec{x}$ is on $FS(\Vec{y}, \Vec{z})$ is at most $1 / \ell^{j-1}$. 
	Similarly, if $\Vec{y} \in W^b_j(\Vec{x})$, the segment $E_j(\Vec{y}, \Vec{z})$ will visit point $\Vec{x}$ in reverse direction only if the first $j-1$ coordinate of $\Vec{z}$ is same to that of $\Vec{x}$. The probability that $\Vec{x}$ is on $FS(\Vec{y}, \Vec{z})$ is also at most $1 / \ell^{j-1}$. 
	
	\medskip
	Therefore, we obtain
	\allowdisplaybreaks
	\begin{align}
		p(\Vec{x}, i, t) &= \sum_{\Vec{y} \in W^r(\Vec{x})} q(\Vec{y}, i, t-1) \cdot \Pr_{\Vec{z}\in W(\Vec{y})}[\Vec{x} \in FS(\Vec{y}, \Vec{z})] \label{equ:p_q_1}\\
		&\leq \sum_{j \in [d]}  \cdot \left( \sum_{\Vec{y} \in W^{-1}_j(\Vec{x})} q(\Vec{y}, i, t-1) \cdot \frac{1}{\ell^{j-1}} +  \sum_{\Vec{y} \in W^b_j(\Vec{x})} q(\Vec{y}, i, t-1) \cdot \frac{1}{\ell^{j-1}}  \right)\\
		&\leq \ell \cdot \sum_{j \in [d]} \frac{1}{\ell^{j}} \cdot \sum_{\Vec{y} \in W^{-1}_j(\Vec{x}) \cup W^b_j(\Vec{x})} q(\Vec{y}, i, t-1) \label{equ:p_q_3}\\
		&\leq 2\ell \cdot d \cdot \max_{\Vec{y} \in W^r(\Vec{x})} q(\Vec{y}, i, t-1)\;. \label{equ:p_q_4}
	\end{align}
	Identity (\ref{equ:p_q_1}) follows from $W^{-1}(\Vec{x}) \cap W^b(\Vec{x}) = \emptyset$;
	inequality (\ref{equ:p_q_4}) holds by applying the average principle on $q(\Vec{y}, i, t-1)$, noticing that $\left| W^{-1}_j(\Vec{x}) \right|, \left| W^b_j(\Vec{x}) \right| \leq \ell^j$.
	
	The case of $i+t \; \mathrm{mod} \; m = 0$ is simpler and follows from same argument.
\end{proof}

\begin{observation}\label{obs:rw3}
	$p(\Vec{x}, i,t) = p(\Vec{x}, i + j,t), q(\Vec{x}, i,t) = q(\Vec{x}, i + j,t)$
	for any $\Vec{x} \in [n]^d$, where $i+j+t < m$.
\end{observation}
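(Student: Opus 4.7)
The plan is to unwind the definitions of $p$ and $q$ and exploit the fact that, as long as we stay strictly below index $m$, the random walk on connecting points is Markov with translation-invariant transitions, so its law depends on the \emph{number} of steps but not on the starting index.

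First I would recall the staircase generation rule: conditioned on the $j'$-th connecting point, the $(j'{+}1)$-th connecting point is drawn uniformly from $W(\Vec{x}_{j'})$ when $(j'{+}1) \bmod m \neq 0$, and uniformly from $[n]^d$ otherwise. The hypothesis $i+j+t < m$ forces all indices appearing in both windows $i{+}1,\ldots,i{+}t$ and $(i{+}j){+}1,\ldots,(i{+}j){+}t$ to be strictly less than $m$; in particular none of them is a multiple of $m$. Hence every transition in both windows uses the $W(\cdot)$ rule, and the refresh case never applies.

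Next I would observe that $W(\Vec{x}_{j'})$ is literally a translate of $W(\Vec{1})$, so the distribution of the relative displacement $\Vec{x}_{j'+1} - \Vec{x}_{j'}$ is the same for every $j'$ in this regime, and successive displacements are independent by the Markov property. Consequently the joint law of the displacements
\begin{equation*}
(\Vec{x}_{i+1}-\Vec{x}_i,\, \Vec{x}_{i+2}-\Vec{x}_{i+1},\, \ldots,\, \Vec{x}_{i+t}-\Vec{x}_{i+t-1})
\end{equation*}
coincides with that of
\begin{equation*}
(\Vec{x}_{i+j+1}-\Vec{x}_{i+j},\, \ldots,\, \Vec{x}_{i+j+t}-\Vec{x}_{i+j+t-1}).
\end{equation*}
Since each folded-segment $FS(\Vec{x}_{r},\Vec{x}_{r+1})$ is a deterministic function of the pair that is equivariant under simultaneous translation of both endpoints, both the event ``$\Vec{x}$ lies on the $(i+t-1)$-th folded-segment'' (conditioned on $\Vec{x}_i=\Vec{1}$) and the event ``$\Vec{x}$ equals the $(i+t)$-th connecting point'' (conditioned on $\Vec{x}_i=\Vec{1}$) depend only on the joint law of these displacements together with the starting offset $\Vec{1}$. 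Rewriting both sides of the desired identities through this common joint law gives $p(\Vec{x},i,t)=p(\Vec{x},i+j,t)$ and $q(\Vec{x},i,t)=q(\Vec{x},i+j,t)$.

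There is essentially no obstacle here beyond careful bookkeeping: the only subtle point is checking that the conditioning on the $i$-th or $(i+j)$-th connecting point being $\Vec{1}$ is legitimate even when the true starting point of the whole staircase is $\Vec{x}_0=\Vec{1}$; this is exactly the purpose of the footnote in the definition of $p$ and $q$, which ignores that restriction. Once that is noted, the argument reduces to stating translation-invariance and the Markov property in one or two lines.
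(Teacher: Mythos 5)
Your argument is correct and is, as far as one can tell, exactly the proof the authors had in mind (the paper leaves Observation~\ref{obs:rw3} unproved as a routine consequence of the construction). You correctly isolate the two things that actually need checking: first, that the hypothesis $i+j+t < m$ guarantees every transition index appearing in either window lies strictly between $0$ and $m$, so the uniform ``refresh'' rule (the $j' \bmod m = 0$ case) never fires; second, that once only the $W(\cdot)$ rule is used, the increments $\Vec{x}_{j'+1}-\Vec{x}_{j'}$ are i.i.d.\ and the walk is translation-invariant on the torus, so the law of the path started at $\Vec{1}$ depends only on the number of steps, not on the starting index. Combined with the translation-equivariance of the folded-segment map, this gives both identities for $p$ and $q$. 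This is the canonical argument and there is no meaningfully different route.
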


For any $i < m$, define the following quantities:
\begin{itemize} 
\item $\; \; \;$ $p(\Vec{x}, i) \coloneqq p(\Vec{x}, 0, i)$;
\item $\; \; \;$ $q(\Vec{x}, i) \coloneqq q(\Vec{x}, 0, i)$; 
\item $\; \; \;$ $p_m(\Vec{x}, i) \coloneqq p(\Vec{x}, m-i, i)$.
\end{itemize}

\medskip

By definition, we have the following corollary of Lemma~\ref{lem:p_q}.

\begin{corollary}\label{cor:p1p2_q}
	For any $\Vec{x} \in [n]^d, i < m$,
	$$p(\Vec{x}, i) \leq 2\ell \cdot d \cdot \max_{\Vec{y} \in W^r(x)} q(\Vec{y}, i-1) \;,
	p_m(\Vec{x}, i) \leq n \cdot d \cdot \max_{\Vec{y} \in [n]^d} q(\Vec{y}, i-1) \; .$$
\end{corollary}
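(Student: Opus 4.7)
The plan is to unpack the definitions and directly apply Lemma~\ref{lem:p_q} together with Observation~\ref{obs:rw3}; no new ideas are needed, so this is essentially a bookkeeping corollary.

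For the first inequality, I would substitute the definition $p(\Vec{x}, i) = p(\Vec{x}, 0, i)$ and invoke Lemma~\ref{lem:p_q} with parameters $(0, i)$. Since $1 \leq i < m$, we have $0 + i \bmod m = i \neq 0$, so the first branch of the lemma applies and yields $p(\Vec{x}, 0, i) \leq 2\ell d \max_{\Vec{y} \in W^{-1}(\Vec{x})} q(\Vec{y}, 0, i-1)$. Using $q(\Vec{y}, i-1) = q(\Vec{y}, 0, i-1)$ from the definition and the inclusion $W^{-1}(\Vec{x}) \subseteq W^r(\Vec{x})$ (which can only enlarge the maximum), the desired bound $p(\Vec{x}, i) \leq 2\ell d \max_{\Vec{y} \in W^r(\Vec{x})} q(\Vec{y}, i-1)$ follows.

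For the second inequality, I would substitute $p_m(\Vec{x}, i) = p(\Vec{x}, m-i, i)$ and again invoke Lemma~\ref{lem:p_q}, this time with parameters $(m-i, i)$. Now $(m-i) + i \bmod m = 0$, so the second branch applies and yields $p(\Vec{x}, m-i, i) \leq n d \max_{\Vec{y} \in [n]^d} q(\Vec{y}, m-i, i-1)$. The only remaining step is to rewrite $q(\Vec{y}, m-i, i-1)$ in the normalized form $q(\Vec{y}, i-1) = q(\Vec{y}, 0, i-1)$; this is exactly Observation~\ref{obs:rw3} applied with index shift $j = m-i$ and length $t = i-1$, where the required condition $0 + (m-i) + (i-1) = m-1 < m$ is satisfied.

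The only possible obstacle is verifying that the index-shift invariance in Observation~\ref{obs:rw3} really licenses the rewrite $q(\Vec{y}, m-i, i-1) = q(\Vec{y}, i-1)$ in the boundary case $i = m-1$ (where $0 + (m-i) + (i-1) = m-1$ is tight against the strict inequality $< m$), but this is fine. Everything else is a direct substitution, so the corollary falls out in a couple of lines.
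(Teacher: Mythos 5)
Your proof is correct and matches the paper's intent, which dismisses the corollary as following ``by definition'' from Lemma~\ref{lem:p_q}; you correctly unpack the definitions of $p$, $p_m$, and $q$, verify the two mod-$m$ branches, and handle the index-shift via Observation~\ref{obs:rw3}. One small remark: the statement of Lemma~\ref{lem:p_q} writes $W^{-1}(\Vec{x})$ while its own proof (inequality~(\ref{equ:p_q_4})) and this corollary use $W^r(\Vec{x})$; your inclusion $W^{-1}(\Vec{x}) \subseteq W^r(\Vec{x})$ correctly bridges that minor inconsistency, so the argument is safe either way.
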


The value of $\max_{\Vec{y} \in [n]^d} q(\Vec{y}, i)$ could be estimated by the Gaussian distribution with mean value $\Vec{i} \cdot \ell / 2$ and co-variance matrix $\ell^2/12 \cdot \mathbf{I}$. 
The local central limit theorem (e.g., see \cite{lawler2010random}) guarantees the accuracy of such approximation.

\begin{lemma}[by Theorem 2.1.1 in \cite{lawler2010random}]\label{lem:lclt}
	There is a constant $c_L$ such that for any $i > 0$, 
	$$ \max_{\Vec{y} \in [n]^d} q(\Vec{y}, i) \leq \frac{c_L}{\ell^d \cdot i^{d/2}} \; .$$
\end{lemma}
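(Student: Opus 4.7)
The plan is to apply the local central limit theorem (LCLT) to the random walk underlying $q$.

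First I would reduce to a standard lattice walk on $\mathbb{Z}^d$. In the regime in which the lemma is actually used (via Corollary~\ref{cor:p1p2_q} we need $0 < i < m$), the walk is $\Vec{S}_i = \Vec{1} + \sum_{j=1}^{i}\Vec{U}_j$, where $\Vec{U}_1,\ldots,\Vec{U}_i$ are i.i.d.\ with coordinates independently uniform on $\{1,\ldots,\ell\}$. Since $i\ell < n$, the wrap-around on the torus $[n]^d$ never activates: starting from $\Vec{1}$, every coordinate of $\Vec{S}_i$ stays in $[1,n]$. Hence $q(\Vec{y}, i) = \Pr[\Vec{S}_i = \Vec{y}]$ coincides with the $i$-step transition probability of a genuine lattice walk on $\mathbb{Z}^d$.

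Next I would collect moments and check the hypotheses of the LCLT. Each increment has mean $\mu = ((\ell+1)/2)\Vec{1}$ and covariance $\sigma^2\Vec{I}$ with $\sigma^2 = (\ell^2-1)/12$, and coordinates are independent. For $\ell \geq 2$ the walk is irreducible and aperiodic on $\mathbb{Z}^d$, since the differences $\Vec{U}_j - \Vec{U}_{j+1}$ take values in $\{-\ell+1,\ldots,\ell-1\}^d$ and generate $\mathbb{Z}^d$. Theorem~2.1.1 of Lawler-Limic then yields, uniformly in $\Vec{y}\in\mathbb{Z}^d$,
$$\Pr[\Vec{S}_i = \Vec{y}] \leq \frac{1}{(2\pi i\sigma^2)^{d/2}} + O\!\left(\frac{1}{(i\sigma^2)^{(d+2)/2}}\right),$$
so taking the maximum over $\Vec{y}$ and substituting $\sigma^2 = \Theta(\ell^2)$ gives $\max_{\Vec{y}} q(\Vec{y}, i) \leq c_L/(\ell^d \cdot i^{d/2})$ as claimed.

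The main obstacle is verifying that the implicit constants in the LCLT can be taken independent of $\ell$. This reduces to the observation that the rescaled increments $\Vec{U}_j/\ell$ are supported on the lattice of spacing $1/\ell$ in $[0,1]^d$ and converge to the uniform distribution on $[0,1]^d$ with all standardized moments bounded uniformly in $\ell$; since the LCLT error terms depend only on such standardized moments together with aperiodicity (which holds for every $\ell\geq 2$), the resulting $c_L$ depends only on $d$. No finer argument is required beyond this uniformity check.
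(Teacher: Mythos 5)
Your proposal follows the same route as the paper: the paper's proof of Lemma~\ref{lem:lclt} consists solely of the citation to Theorem~2.1.1 of Lawler--Limic, and you are expanding exactly that citation. Your expansion is correct and, notably, surfaces two things the paper leaves implicit: (i) the reduction to a genuine $\mathbb{Z}^d$ lattice walk is only valid because the lemma is invoked exclusively for $0<i<m$, so wrap-around never occurs (indeed for $i\ge m$ the chain is exactly uniform, $q(\Vec{y},i)=1/n^d$, and the bound $c_L/(\ell^d i^{d/2})$ can actually fail for $d\ge 5$ and $i$ near $K$, so the lemma's phrase ``for any $i>0$'' should really be read as ``for $0<i<m$''); and (ii) the constant $c_L$ must be independent of $\ell$ (hence of $n$), since otherwise the final choice of $c_d$ in Lemma~\ref{lem:gamma_val} would not close.

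One caveat worth flagging: your resolution of the uniformity issue is heuristic. Theorem~2.1.1 of Lawler--Limic concerns a \emph{fixed} increment law with a $o(\cdot)$ error whose hidden constants depend on that law, and the claim that ``the LCLT error terms depend only on standardized moments together with aperiodicity'' is not quite right as stated --- the error also depends quantitatively on how far $|\phi(\theta)|$ stays below $1$ away from the origin, which aperiodicity alone gives only qualitatively. For this particular family the uniformity is genuine, but the cleanest way to see it is a direct Fourier inversion: with $\phi_1(\theta)=\frac{\sin(\ell\theta/2)}{\ell\sin(\theta/2)}$ the one-coordinate characteristic function, one has $|\phi_1(\theta)|\le e^{-c\,\ell^2\theta^2}$ for $|\theta|\le \pi/\ell$ and $|\phi_1(\theta)|\le \pi/(\ell|\theta|)\le 1/2$ for $\pi/\ell<|\theta|\le\pi$, with $c$ an absolute constant, whence
\begin{equation*}
\max_{\Vec{y}} q(\Vec{y},i)\le \frac{1}{(2\pi)^d}\int_{[-\pi,\pi]^d}|\phi_1(\theta_1)\cdots\phi_1(\theta_d)|^i\,d\theta \le \Bigl(\frac{C}{\ell\sqrt{i}}\Bigr)^{d}
\end{equation*}
for $i\ge 1$ with $C$ depending only on $d$. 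This short computation would make your ``uniformity check'' rigorous; otherwise your argument is sound and matches the paper's intent.
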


\begin{observation}\label{obs:rw2}
	For any $\Vec{x}\in[n]^d, i, t \in [K]$,
	$$p(\Vec{x}, i,t) = p(\Vec{x}, i + m,t), q(\Vec{x}, i,t) = q(\Vec{x}, i + m,t).$$
\end{observation}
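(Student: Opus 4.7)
The plan is to unroll the definitions of $p$ and $q$ and exploit the fact that the transition rule for generating the $j$-th connecting point depends only on the residue $j \bmod m$, not on $j$ itself.

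First, I would recall the random walk: conditioned on $\Vec{x}_{j-1}$, the next connecting point $\Vec{x}_j$ is uniform in $W(\Vec{x}_{j-1})$ when $j \not\equiv 0 \pmod{m}$, and uniform in $[n]^d$ when $j \equiv 0 \pmod{m}$. In either case the transition kernel $T_j(\Vec{y}\mid\Vec{x})$ depends on $j$ only through $j \bmod m$. Denote this by $T_j = T_{j+m}$.

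By definition, $q(\Vec{x},i,t) = q(\Vec{1},\Vec{x},i,t)$ is the probability, under the walk, that the $(i+t)$-th connecting point equals $\Vec{x}$ conditional on the $i$-th being $\Vec{1}$. Chaining,
\begin{equation*}
q(\Vec{x},i,t) \;=\; \sum_{\Vec{z}_1,\ldots,\Vec{z}_{t-1}\in[n]^d} T_{i+1}(\Vec{z}_1\mid\Vec{1})\, T_{i+2}(\Vec{z}_2\mid\Vec{z}_1)\cdots T_{i+t}(\Vec{x}\mid\Vec{z}_{t-1}).
\end{equation*}
Replacing $i$ by $i+m$ shifts every subscript by $m$, and since $T_{j+m}=T_j$ every factor is unchanged. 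Hence $q(\Vec{x},i,t) = q(\Vec{x},i+m,t)$.

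For $p$, by its definition $p(\Vec{x},i,t)$ is the probability that $\Vec{x}$ lies on the $(i+t-1)$-th folded segment $FS(\Vec{x}_{i+t-2},\Vec{x}_{i+t-1})$ conditional on $\Vec{x}_i=\Vec{1}$. Conditioning on $\Vec{x}_{i+t-2}=\Vec{y}$ and unrolling (as in equation~(\ref{equ:p_q_1})) gives
\begin{equation*}
p(\Vec{x},i,t) \;=\; \sum_{\Vec{y}\in[n]^d} q(\Vec{y},i,t-2)\cdot \Pr\bigl[\Vec{x}\in FS(\Vec{y},\Vec{x}_{i+t-1}) \,\bigm|\, \Vec{x}_{i+t-2}=\Vec{y}\bigr].
\end{equation*}
The first factor is $(i,t-2)$-periodic by the $q$-part just proved, and the conditional probability in the second factor is determined entirely by whether $(i+t-1)\equiv 0 \pmod m$ (since that dictates whether $\Vec{x}_{i+t-1}$ is drawn from $W(\Vec{y})$ or from $[n]^d$). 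Shifting $i\mapsto i+m$ preserves the residue of $i+t-1$ modulo $m$, so both factors are invariant, yielding $p(\Vec{x},i,t)=p(\Vec{x},i+m,t)$.

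I do not anticipate a genuine obstacle: the statement is purely a consequence of the period-$m$ structure of the transition kernel. The only care needed is bookkeeping the two-case definition of $T_j$ (generic step vs.\ re-seeding step) when unrolling the product, and verifying that the shift $i\mapsto i+m$ preserves residues of every index appearing in the expansion.
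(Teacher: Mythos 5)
The paper states Observation~\ref{obs:rw2} without a proof, treating the $m$-periodicity of $p$ and $q$ as an immediate consequence of the staircase construction; your argument is the natural formalization of that fact, and it is correct. The key observation that the transition kernel $T_j(\cdot\mid\cdot)$ depends on $j$ only through $j \bmod m$ (uniform on $W(\cdot)$ unless $j \equiv 0 \pmod m$, in which case uniform on $[n]^d$), combined with the Chapman--Kolmogorov expansion of $q$ and the one-step-conditioning expansion of $p$ in terms of $q$, gives exactly the desired invariance under $i \mapsto i+m$. One small bookkeeping note: you write the $p$-expansion with $q(\Vec{y}, i, t-2)$ (probing the $(i+t-1)$-th folded segment $FS(\Vec{x}_{i+t-2}, \Vec{x}_{i+t-1})$), which matches the verbal definition of $p$ in Section~\ref{sec:LS.PolyLB.ND} but differs by one from the paper's own expansion~(\ref{equ:p_q_1}), which uses $q(\Vec{y}, i, t-1)$ and in effect treats the $(i+t)$-th segment. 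This off-by-one appears to be an inconsistency in the paper itself rather than an error on your part, and it does not affect the periodicity conclusion: either indexing convention shifts uniformly under $i \mapsto i+m$, and the residue $i+t-1 \pmod m$ (respectively $i+t \pmod m$) that governs the final transition is preserved in either case.
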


By observation~\ref{obs:rw2}, there is $\Gamma(i) \geq \Gamma(i+m)$ for any $0\leq i < K$. Therefore we have $$\Gamma \leq \frac{1}{m} \cdot \sum_{i=0}^{m-1} \Gamma(i) \; .$$

\begin{observation}\label{obs:rw4}
	For any $\Vec{x}\in [n]^d, i, t \in [K], i + t \leq K$, if $\lfloor \frac{i}{m} \rfloor < \lfloor \frac{i+t}{m} \rfloor$,
	$q(\Vec{x}, i,t) = {1}/{n^d}$.
\end{observation}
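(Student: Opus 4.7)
The plan is to exploit the ``reset'' mechanism built into the random walk: whenever the step index is a multiple of $m$, the next connecting point is drawn uniformly from $[n]^d$ independent of everything before. The hypothesis $\lfloor i/m \rfloor < \lfloor (i+t)/m \rfloor$ guarantees that the interval $(i,i+t]$ contains at least one multiple of $m$. Let $j^\star$ be the smallest such integer. By construction, $\Vec{x}_{j^\star}$ is drawn uniformly from $[n]^d$, independently of $\Vec{x}_i$. So the conditional distribution of $\Vec{x}_{j^\star}$ given $\Vec{x}_i = \Vec{1}$ is uniform on $[n]^d$.

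Next I would argue that the uniform distribution on $[n]^d$ is a stationary distribution for the one-step transition of the walk. There are two types of steps to consider. If step $j+1$ is a multiple of $m$, then $\Vec{x}_{j+1}$ is resampled uniformly from $[n]^d$, which trivially preserves the uniform distribution. Otherwise, $\Vec{x}_{j+1}$ is chosen uniformly from $W(\Vec{x}_j)$; since $W(\Vec{x}_j)$ is obtained by shifting the fixed set $W(\Vec{1})$ by $\Vec{x}_j - \Vec{1}$ on the torus $[n]^d$ (using the wrap-around convention in the definition of $W^1$), the step amounts to adding a fixed-law offset modulo $n$ in each coordinate. Such a shift is a bijection of $[n]^d$, so it maps the uniform distribution to itself.

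Iterating this from $j^\star$ up to $i+t$ preserves uniformity at every step, so the conditional distribution of $\Vec{x}_{i+t}$ given $\Vec{x}_i = \Vec{1}$ is uniform on $[n]^d$. Hence $q(\Vec{x},i,t) = \Pr[\Vec{x}_{i+t} = \Vec{x} \mid \Vec{x}_i = \Vec{1}] = 1/n^d$ for every $\Vec{x} \in [n]^d$, which is exactly the claim.

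The only point that requires care, and which I expect to be the main (minor) obstacle, is making the torus/shift argument precise: the definitions of $W^1$ and $W$ implicitly use wrap-around, so one must verify that the map $\Vec{z} \mapsto \Vec{z} + \Vec{u} \pmod n$ really does carry $W(\Vec{1})$ onto $W(\Vec{x}_j)$ when $\Vec{u} = \Vec{x}_j - \Vec{1}$. Once that bijectivity is stated, stationarity of the uniform distribution is immediate and the rest of the argument is a one-line induction on the number of steps from $j^\star$ to $i+t$.
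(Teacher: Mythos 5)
Your proof is correct, and since the paper states Observation~\ref{obs:rw4} without proof, your argument supplies exactly the intended justification: the interval $(i,i+t]$ must contain a multiple of $m$, at which step the connecting point is resampled uniformly from $[n]^d$ independently of the past, and every subsequent step (either another uniform reset, or a uniform shift on the torus via $W(\cdot)$, which by the wrap-around in the definition of $W^1$ is a bijection of $[n]^d$) preserves the uniform distribution, giving $q(\Vec{x},i,t)=1/n^d$. A minor streamlining you may prefer is to condition on the \emph{last} multiple of $m$ in $(i,i+t]$ rather than the first, which removes the need to handle further resets in the induction, but the content is the same.
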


This observation suggests that the tail part, i.e., the summation term with index $i+t > m$ in $\Gamma(i)$, is easier to estimate.
Formally, define the prefix part $$\Gamma'(i) = \max_{\Vec{x} \in [n]^d} \sum_{t=1}^{m - i - 1} (t-1) \cdot p(\Vec{x}, t) + (m - i - 1) \cdot p_m(\Vec{x}, m - i) \;.$$ The following lemma shows that the tail part is small enough.

\begin{lemma}\label{lem:gamma_p_i}
	If $c_d \leq 1/40d$, for any $0 \leq i < m$, we have 
	$$\Gamma(i) \leq \Gamma'(i) + \frac{1}{10\mathcal{Q}} \;.$$
\end{lemma}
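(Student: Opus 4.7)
The plan is to split the sum defining $\Gamma(i) = \max_{\Vec{x}} \sum_{t=1}^{K-i}(t-1)\, p(\Vec{x},i,t)$ at the first ``reset'' index $t = m - i$, so that the prefix piece (for $t \leq m - i$) reproduces $\Gamma'(i)$ exactly, and the tail piece (for $t \geq m - i + 1$) is small enough to absorb into $\frac{1}{10\mathcal{Q}}$. For the prefix, observation~\ref{obs:rw3} gives $p(\Vec{x},i,t) = p(\Vec{x},t)$ whenever $i + t < m$, and the very definition of $p_m$ gives $p(\Vec{x},i,m-i) = p_m(\Vec{x},m-i)$. Summing the contributions for $1 \leq t \leq m - i$ therefore matches the expression for $\Gamma'(i)$ term by term.

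For the tail, the key observation is that as soon as $t \geq m - i + 1$, the intermediate index $i + t - 1$ crosses a multiple of $m$ (since $i < m$), so by observation~\ref{obs:rw4} we have the uniform bound $q(\Vec{y},i,t-1) = 1/n^d$ for every $\Vec{y} \in [n]^d$. Plugging this into Lemma~\ref{lem:p_q} yields $p(\Vec{x},i,t) \leq 2\ell d / n^d$ at the ``generic'' tail indices and $p(\Vec{x},i,t) \leq d/n^{d-1}$ at the at most $K/m$ indices satisfying $i + t \equiv 0 \pmod{m}$. Using the crude bound $(t-1) \leq K$, the tail becomes
\begin{equation*}
    \sum_{t = m-i+1}^{K-i} (t-1)\, p(\Vec{x}, i, t) \;\leq\; 2dK^{2} \cdot \frac{\ell}{n^{d}} \;+\; dK^{2} \cdot \frac{1}{m\, n^{d-1}}.
\end{equation*}

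Substituting $\ell = n^{1 - 2\alpha/d}$, $m = n^{2\alpha/d}$, $K = 2n^{\alpha}$ collapses both summands to order $n^{-(d-1)(1 - 2\alpha/d)}$ via the identity $2\alpha - 2\alpha/d - (d-1) = -(d-1)(1 - 2\alpha/d)$. This is precisely the decay of $1/\mathcal{Q}$ dictated by~(\ref{equ:q_k}), with an extra $\log n$ factor of slack in the $d=4$ regime and an extra $n^{-\alpha/3}$ slack in the $d=3$ regime, so a small enough constant $c_d$ (as stipulated by the hypothesis $c_d \leq 1/(40d)$) pushes the whole tail under $1/(10\mathcal{Q})$.

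The main obstacle is simply the bookkeeping on constants and exponents: I must check that the exponent equality above goes through in all three cases of~(\ref{equ:q_k}), and that the constants $2d$, $d$ and the rounding up of $K/m$ boundary indices remain absorbable into the $1/10$ factor after dividing by $\mathcal{Q}$. Once this routine arithmetic is verified, the lemma follows directly from the prefix/tail split and the uniform $q$-bound supplied by observation~\ref{obs:rw4}.
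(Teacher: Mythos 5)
Your proof is correct and follows essentially the same approach as the paper: split the sum at $t = m - i$, identify the prefix with $\Gamma'(i)$ via Observation~\ref{obs:rw3} and the definition of $p_m$, then bound the tail using Observation~\ref{obs:rw4} (which gives $q(\cdot, i, t-1) = 1/n^d$ once $i+t-1$ crosses a multiple of $m$) together with Lemma~\ref{lem:p_q}, and verify that $K^2 \ell d / n^d$ is bounded by $O(c_d / \mathcal{Q})$ in each of the three regimes of~(\ref{equ:q_k}). The only cosmetic difference is that you split the tail into generic and special ($i+t \equiv 0 \pmod m$) indices explicitly, while the paper folds both into the function $o_{n,\ell}$, but both arrive at the same exponent identity and the same conclusion.
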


\begin{proof}
	Let function $o_{n, \ell}(i) = n$ if $i \; \mathrm{mod} \; m = 0$ and $o_{n, \ell}(i)=\ell$ otherwise. Combining Lemma~\ref{lem:p_q} and Observation~\ref{obs:rw4}, for any $0 \leq i < m$, we have
	\allowdisplaybreaks
	\begin{align}
		\Gamma(i) &= \max_{\Vec{x} \in [n]^d} \sum_{t=1}^{K-i} (t-1) \cdot p(\Vec{x}, i, t) \\
		&= \max_{\Vec{x} \in [n]^d} \sum_{t=1}^{m - i} (t-1) \cdot p(\Vec{x}, i, t) + \sum_{t = m - i+1}^{K-i} (t-1) \cdot p(\Vec{x}, i, t) \\
		&\leq \max_{\Vec{x} \in [n]^d} \sum_{t=1}^{m - i} (t-1) \cdot p(\Vec{x}, i, t) + 
		\sum_{t = m - i+1}^{K-i} (t-1) \cdot o_{n, \ell}(i+t) \cdot \frac{d}{n^d} \\
		&\leq \max_{\Vec{x} \in [n]^d} \sum_{t=1}^{m - i} (t-1) \cdot p(\Vec{x}, i, t) + \frac{K^2 \cdot \ell \cdot d}{n^d} \\
		&\leq \max_{\Vec{x} \in [n]^d} \sum_{t=1}^{m - i - 1} (t-1) \cdot p(\Vec{x}, t) + (m - i - 1) \cdot p_m(\Vec{x}, m - i) + \frac{4d \cdot c_d}{\mathcal{Q}} \\
		&\leq \Gamma'(i) + \frac{1}{10\mathcal{Q}}
	\end{align}
\end{proof}

\begin{lemma}\label{lem:gamma_val}
	The following inequality holds:
	$$\Gamma \leq \frac{9}{20\mathcal{Q}} \; .$$
\end{lemma}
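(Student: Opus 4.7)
The plan is to reduce $\Gamma$ to an average of the prefix terms $\Gamma'(i)$, bound each via the local central limit theorem, and then match the resulting sums against the per-round budget $\mathcal{Q}$. First, by Observation~\ref{obs:rw2}, $\Gamma(i)$ is $m$-periodic in $i$, so
\[ \Gamma = \frac{1}{K}\sum_{i=0}^{K-1}\Gamma(i) \leq \frac{1}{m}\sum_{i=0}^{m-1}\Gamma(i). \]
Applying Lemma~\ref{lem:gamma_p_i} term-by-term gives $\Gamma \leq \frac{1}{m}\sum_{i=0}^{m-1}\Gamma'(i) + \frac{1}{10\mathcal{Q}}$, so it suffices to show that $\frac{1}{m}\sum_{i=0}^{m-1}\Gamma'(i) \leq \frac{7}{20\mathcal{Q}}$.

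Next, I would bound each $\Gamma'(i)$ by combining Corollary~\ref{cor:p1p2_q} with Lemma~\ref{lem:lclt} to obtain
\[ p(\Vec{x},t) \leq \frac{2 c_L d}{\ell^{d-1}(t-1)^{d/2}}, \qquad p_m(\Vec{x}, m-i) \leq \frac{c_L d \cdot n}{\ell^{d}(m-i-1)^{d/2}}. \]
Substituting these into the definition of $\Gamma'(i)$, the main (prefix) term becomes
\[ \frac{2 c_L d}{\ell^{d-1}} \sum_{t=2}^{m-i-1} (t-1)^{1-d/2}, \]
and the truncated series $S_d := \sum_{t=2}^{m}(t-1)^{1-d/2}$ splits into three regimes: $S_d = O(1)$ for $d \geq 5$, $S_4 = O(\log m) = O(\log n)$, and $S_3 = O(\sqrt{m})$. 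The tail piece contributes $(m-i-1)\cdot p_m(\Vec{x},m-i) = O(n/(\ell^{d}(m-i-1)^{d/2-2}))$, which after averaging over $i \in [0,m)$ is dominated by the main term in every regime.

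Finally, I would match each of the three bounds against $\mathcal{Q} = \mathcal{Q}_k/k$. Plugging $\ell = n^{1-2\alpha/d}$, $k = n^{\alpha}$, $m = n/\ell$, a direct calculation shows: for $d \geq 5$ one has $\mathcal{Q} = c_d\,\ell^{d-1}$, so $\Gamma'(i) = O(1/\ell^{d-1}) = O(c_d/\mathcal{Q})$; for $d=4$ the extra $\log n$ factor in $S_4$ is absorbed by the $1/\log n$ inside the definition of $\mathcal{Q}_k$; for $d=3$ one computes $\sqrt{m}/\ell^{2} = n^{5\alpha/3-2} = c_3/\mathcal{Q}$, so again $\Gamma'(i) = O(c_3/\mathcal{Q})$. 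In each case the constant in the $O(\cdot)$ depends only on $c_L$, $d$, and the implicit constant in $S_d$, so choosing $c_d$ small enough makes $\frac{1}{m}\sum_i \Gamma'(i) \leq \frac{7}{20\mathcal{Q}}$, and adding the $\frac{1}{10\mathcal{Q}} = \frac{2}{20\mathcal{Q}}$ slack from Lemma~\ref{lem:gamma_p_i} yields $\Gamma \leq \frac{9}{20\mathcal{Q}}$.

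The main obstacle will be the careful bookkeeping in this final constant-matching step: the exponents must balance in each of the three regimes, which is precisely why $\mathcal{Q}_k$ in equation~(\ref{equ:q_k}) carries a $1/\log n$ factor for $d=4$ and why the lower bound for $d=3$ in Theorem~\ref{thm:2} is stated with the auxiliary $n^{3/2}$ term. A secondary technical point is justifying the interchange of $\max_{\Vec{x}}$ with the summation when bounding $\frac{1}{m}\sum_i \Gamma'(i)$: this is harmless because our upper bounds on $p(\Vec{x},t)$ and $p_m(\Vec{x},m-i)$ from Corollary~\ref{cor:p1p2_q} and Lemma~\ref{lem:lclt} are uniform in $\Vec{x}$, so the maximum can be bounded by the common upper bound before summing.
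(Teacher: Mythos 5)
Your proposal follows essentially the same route as the paper's proof: reduce $\Gamma$ to $\frac{1}{m}\sum_{i<m}\Gamma'(i)$ via Lemma~\ref{lem:gamma_p_i}, bound $\Gamma'(i)$ using Corollary~\ref{cor:p1p2_q} and the local CLT estimate of Lemma~\ref{lem:lclt}, identify the three regimes for the resulting sum $S_d$ (which is the paper's $\Phi_d(m)$), and then observe that $1/\mathcal{Q} = \Theta(\Phi_d(m)/\ell^{d-1})$ in each regime so that a small enough $c_d$ closes the argument. Two minor slips worth noting but not affecting the argument: Observation~\ref{obs:rw2} gives $\Gamma(i)\geq\Gamma(i+m)$ (not exact periodicity, since the range of $t$ in $\Gamma(i)$ depends on $i$), and your tail-piece exponent should be $d/2-1$ rather than $d/2-2$, since $(m-i-1)\cdot(m-i-1)^{-d/2}=(m-i-1)^{1-d/2}$; after averaging over $i$ and using $n=m\ell$ this again gives $\Phi_d(m)/\ell^{d-1}$, matching the main term.
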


\begin{proof}
	Define $$\Gamma' = \frac{1}{m} \cdot \sum_{i=0}^{m-1} \Gamma'(i) \;.$$
	By Lemma~\ref{lem:gamma_p_i}, it is equivalent to prove that $\Gamma' \leq {7}/{20\mathcal{Q}}$.
	
	\medskip 
	
	Let $\Phi_d(t) = \sum_{i=1}^{t} \frac{1}{t^{d/2-1}} \; $. We now estimate $\Gamma'$ as follows:
	\allowdisplaybreaks
 \begin{small}
	\begin{align}
		 \Gamma' &\leq \frac{1}{m} \cdot \sum_{i=0}^{m-1} \Gamma'(i) \\
		&= \frac{1}{m} \cdot \sum_{i=0}^{m-1} \left( \max_{\Vec{x} \in [n]^d} \sum_{t=1}^{m - i - 1} (t-1) \cdot p(\Vec{x}, t) + (m - i - 1) \cdot p_m(\Vec{x}, m - i) \right)  \\
		&\leq \frac{1}{m} \cdot \sum_{i=0}^{m-1} \left( \sum_{t=1}^{m - i - 1} (t-1) \cdot \max_{\Vec{x} \in [n]^d} p(\Vec{x}, t) + (m - i - 1) \cdot \max_{\Vec{x} \in [n]^d} p_m(\Vec{x}, m - i)\right)   \label{equ:gamma_p_3}\\
		&\leq \frac{1}{m} \cdot \sum_{i=0}^{m-1} \left( \sum_{t=1}^{m - i - 1} (t-1) \cdot 2\ell \cdot d \cdot \max_{\Vec{y} \in [n]^d} q(\Vec{y}, t-1)
		+ (m - i - 1) \cdot n \cdot d \cdot \max_{\Vec{y} \in [n]^d} q(\Vec{y}, m-i-1) \right) \label{equ:gamma_p_4} \\
		&\leq \frac{1}{m} \cdot \sum_{i=0}^{m-1} \left( \sum_{t=1}^{m - i - 1} (t-1) \cdot 2\ell \cdot d \cdot \frac{c_L}{\ell^d \cdot (t-1)^{d/2}}
		+ (m - i - 1) \cdot (m \cdot \ell) \cdot d \cdot \frac{c_L}{\ell^d \cdot (m-i-1)^{d/2}} \right) \label{equ:gamma_p_5} \\
		&\leq \frac{1}{m} \cdot \sum_{i=0}^{m-1} \sum_{t=1}^{m} (t-1) \cdot 2\ell \cdot d \cdot \frac{c_L}{\ell^d \cdot (t-1)^{d/2}}
		+ \sum_{i=0}^{m-1} (m - i - 1) \cdot \ell \cdot d \cdot \frac{c_L}{\ell^d \cdot (m-i-1)^{d/2}}  \\
		&\leq \frac{2\Phi_d(m)\cdot d \cdot c_L}{\ell^{d-1}} + \frac{\Phi_d(m)\cdot d \cdot c_L}{\ell^{d-1}} \label{equ:gamma_p_6} \\
		&= \frac{3\cdot \Phi_d(m)\cdot d \cdot c_L}{\ell^{d-1}}
	\end{align}
\end{small}

	Inequality (\ref{equ:gamma_p_4}) is implied by Corollary~\ref{cor:p1p2_q} and inequality (\ref{equ:gamma_p_5}) follows from Lemma~\ref{lem:lclt}. 
	
	By equality (\ref{equ:q_k}), there is $1/\mathcal{Q} = k/\mathcal{Q}_k = O(\Phi_d(m) / \ell^{d-1})$. Therefore, we conclude the proof by picking the constant $c_d$ small enough.
\end{proof}

Now we wrap everything up and prove the lower bound of Theorem~\ref{thm:2}.

\medskip 

\begin{proof}[Lower Bound of Theorem~\ref{thm:2}]
	Lemma~\ref{lem:sv2thm2} directly follows from Lemma~\ref{lem:sav} and Lemma~\ref{lem:gamma_val}. 
	Since the query complexity with round limits must be larger or equal to that of fully adaptive setting, we further improve our bound for $d=3$ by taking max with $\Omega(n^{3/2})$, which is the bound for the fully adaptive algorithm from \cite{zhang2009tight}. 
\end{proof}

\section{Brouwer in Rounds} \label{sec:BR}

In this section we consider the query complexity of finding fixed-points in rounds. For Brouwer we focus on constant rounds, since more than $\log{1/\epsilon}$ rounds do not improve the query complexity for Brouwer (\cite{chen2005algorithms,chen2007paths}). If the number of rounds is a non-constant function smaller than $\log{1/\epsilon}$, this only changes the bound by a sub-polynomial term.

\medskip

\noindent \textbf{Theorem} \ref{thm:brouwer_main} (Brouwer, constant rounds, restated): \emph{Let $k \in \mathbb{N}$ be a constant. For any $\epsilon > 0$, the query complexity of the $\epsilon$-approximate Brouwer fixed-point problem in the $d$-dimensional unit cube $[0,1]^d$ with $k$ rounds is $\Theta\left((1/\epsilon) ^{\frac{d^{k+1} - d^k}{d^k - 1}}\right)$, for both deterministic and randomized algorithms.}

Similarly to local search in constant rounds, when $k \rightarrow \infty$, this bound converges to $\Theta\left((1/\epsilon)^{d-1} \right)$ and the gap is smaller than any polynomial. Thus our result fills the gap between one round algorithm and fully adaptive algorithm (logarithmic rounds) except for a small margin. 

We first give some background on the Brouwer problem, and then explain how the upper and lower bounds are obtained.

\subsection{Background on Brouwer}

The existence of fixed points is guaranteed by the Brouwer fixed-point theorem~\cite{Brouwer11}.
\begin{theorem}[Brouwer fixed-point theorem]
	Let $K \subset \mathbb{R}^n$ be a compact and convex subset of $\mathbb{R}^n$ and $f : K \to K$ a continuous function. Then there exists a point $\Vec{x}^* \in K$ such that $f(\Vec{x}^*) = \Vec{x}^*$.
\end{theorem}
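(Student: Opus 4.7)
The plan is to prove Brouwer's theorem via Sperner's lemma, the combinatorial fixed-point result that is closely tied to the discrete/parity viewpoint taken elsewhere in this paper. First I would reduce to the case where $K$ is the standard $n$-simplex $\Delta^n = \{\Vec{x} \in \mathbb{R}^{n+1}_{\geq 0} : \sum_i x_i = 1\}$. Any nonempty compact convex subset of $\mathbb{R}^n$ with nonempty interior is homeomorphic to a closed ball (via a radial map from any interior point), and the closed ball is homeomorphic to a simplex of the appropriate dimension; a fixed point of the conjugated map pulls back to a fixed point of $f$. The degenerate case where $K$ lies in a proper affine subspace is handled by working in that subspace. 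Thus it suffices to prove the theorem for continuous $f:\Delta^n \to \Delta^n$.

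Next I would set up the Sperner labeling. For each $m\geq 1$ fix a triangulation $T_m$ of $\Delta^n$ whose mesh tends to zero as $m \to \infty$ (for instance, iterated barycentric subdivisions). Given a vertex $\Vec{v}$ of $T_m$ with barycentric coordinates $(v_0,\ldots,v_n)$, assign it a label $\ell(\Vec{v}) \in \{0,1,\ldots,n\}$ chosen from the set $\{i : v_i > 0 \text{ and } f_i(\Vec{v}) \leq v_i\}$. This set is nonempty because $\sum_i f_i(\Vec{v}) = 1 = \sum_i v_i$ forces some coordinate with $v_i>0$ not to increase. By construction, if $\Vec{v}$ lies on the face where $v_j = 0$ for all $j \notin S$, then $\ell(\Vec{v}) \in S$, which is precisely the Sperner boundary condition.

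Sperner's lemma then guarantees that $T_m$ contains at least one fully-labeled simplex $\sigma_m$, whose $n+1$ vertices carry all labels $0,1,\ldots,n$. Let $\Vec{y}_m$ be the barycenter of $\sigma_m$. By compactness of $\Delta^n$, passing to a subsequence we may assume $\Vec{y}_m \to \Vec{x}^* \in \Delta^n$. For each $i$, the vertex $\Vec{v}_m^{(i)} \in \sigma_m$ with label $i$ also converges to $\Vec{x}^*$ since the mesh of $T_m$ tends to zero. By construction $f_i(\Vec{v}_m^{(i)}) \leq (\Vec{v}_m^{(i)})_i$, and continuity of $f$ yields $f_i(\Vec{x}^*) \leq x^*_i$ for every $i$. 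Since $\sum_i f_i(\Vec{x}^*) = \sum_i x^*_i = 1$, all of these inequalities are forced to be equalities, so $f(\Vec{x}^*) = \Vec{x}^*$.

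The main obstacle is Sperner's lemma itself, which I would prove by induction on $n$. The base case $n=0$ is trivial; for the inductive step, consider the dual graph on $n$-simplices of $T_m$ where two are adjacent if they share an $(n-1)$-face labeled $\{0,\ldots,n-1\}$. The degree of an internal $n$-simplex is $0$ or $2$, while a fully-labeled simplex has degree $1$; by the inductive hypothesis applied to the boundary face $v_n=0$, the number of boundary-simplex vertices of this graph has odd parity, so the count of fully-labeled simplices is odd (hence nonzero). A secondary technical point is the reduction from general compact convex $K$ to the simplex, which relies on the standard observation that a convex body with nonempty interior is star-shaped from each interior point, yielding the radial homeomorphism to a ball.
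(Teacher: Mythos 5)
Your proof is correct, but note that the paper does not prove this statement at all: the Brouwer fixed-point theorem is quoted as classical background (cited to Brouwer's original papers) and used only as an existence guarantee, while all of the paper's actual work concerns the query complexity of \emph{finding} an approximate or discrete fixed point. Your Sperner-lemma route is the standard elementary proof and is sound in all essentials: the reduction of a general compact convex $K$ to the simplex via the radial homeomorphism (with the degenerate case handled in the affine hull), the labeling $\ell(\Vec{v}) \in \{i : v_i > 0,\ f_i(\Vec{v}) \leq v_i\}$ whose nonemptiness follows from $\sum_i f_i(\Vec{v}) = \sum_i v_i = 1$, the Sperner boundary condition, and the limiting argument as the mesh tends to zero. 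It is worth observing that your approach is the continuous ancestor of the machinery the paper actually deploys for its discrete Brouwer results: the ``bad cube'' boundary parity condition of Chen and Deng (Theorem~\ref{lem:badc}) and the existence of a zero of a bounded direction-preserving function are direct combinatorial analogues of Sperner's lemma, and the handshake/parity argument you sketch for the inductive step is the same parity argument the paper's divide-and-conquer algorithm exploits when selecting the sub-cube with an odd number of bad boundary cubes. One small imprecision in your sketch of Sperner's lemma: the degree dichotomy should be between fully-labeled simplices (degree $1$) and all other $n$-simplices (degree $0$ or $2$), rather than ``internal'' versus fully-labeled, and the dual graph needs an outer node collecting the $\{0,\ldots,n-1\}$-labeled faces on the facet $v_n = 0$ so that the inductive hypothesis supplies its odd degree; with that bookkeeping the handshake lemma closes the argument.
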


We study the computational problem of finding a fixed-point given the function $f$ and the domain $K$. We focus on the setting where $K = [0,1]^d$ and sometimes write fixed-point to mean Brouwer fixed-point for brevity. 

Since computers cannot solve this problem to arbitrary precision, we consider the approximate version with an error parameter $\epsilon$  and the goal is to find an $\epsilon$-fixed-point. In this case, the original function $f$ can be approximated by a Lipschitz continuous function $\widetilde{f}$, where the Lipschitz constant of $\widetilde{f}$ can be arbitrarily large and depends on the quality of approximation. 

We also consider a discrete analogue that will be useful for understanding the complexity in the continuous setting. The discrete problem was shown to be equivalent to the  continuous (approximate) setting in~\cite{chen2005algorithms}.

\paragraph{Discrete Brouwer fixed-point.}

Given a vector $\Vec{x} \in [n]^d$, let $x_i$ denote the value at the $i$-th coordinate of $\Vec{x}$. 
Consider a function $f: [n]^d \rightarrow \{\mathbf{0}, \pm \Vec{e}^1, \pm \Vec{e}^2, \ldots, \pm \Vec{e}^d\}$, where $\Vec{e}^i$ is the $i$-th unit vector and $f$ satisfies the properties:
\begin{itemize}
	\item \textbf{direction-preserving}: for any $\|\Vec{x}-\Vec{y}\|_{\infty} \leq 1$, we have $\|f(\Vec{x})-f(\Vec{y})\|_{\infty} \leq 1$;
	\item \textbf{bounded}: for any $\Vec{x} \in [n]^d$, we have $\Vec{x}+f(\Vec{x}) \in [n]^d$. 
\end{itemize}
Then there exists a point $\Vec{x}^*$ such that $f(\Vec{x}^*) = \Vec{0}$ (\cite{iimura2003discrete}).\footnote{The approximate version and the discrete version problems are called $AFP$ and $ZP$ respectively in \cite{chen2005algorithms}}   

\medskip 

The following figure illustrates a bounded and direction preserving function in 2D.
\vspace{2mm}

\begin{figure}[h!]
	\centering
	\includegraphics[scale=0.5]{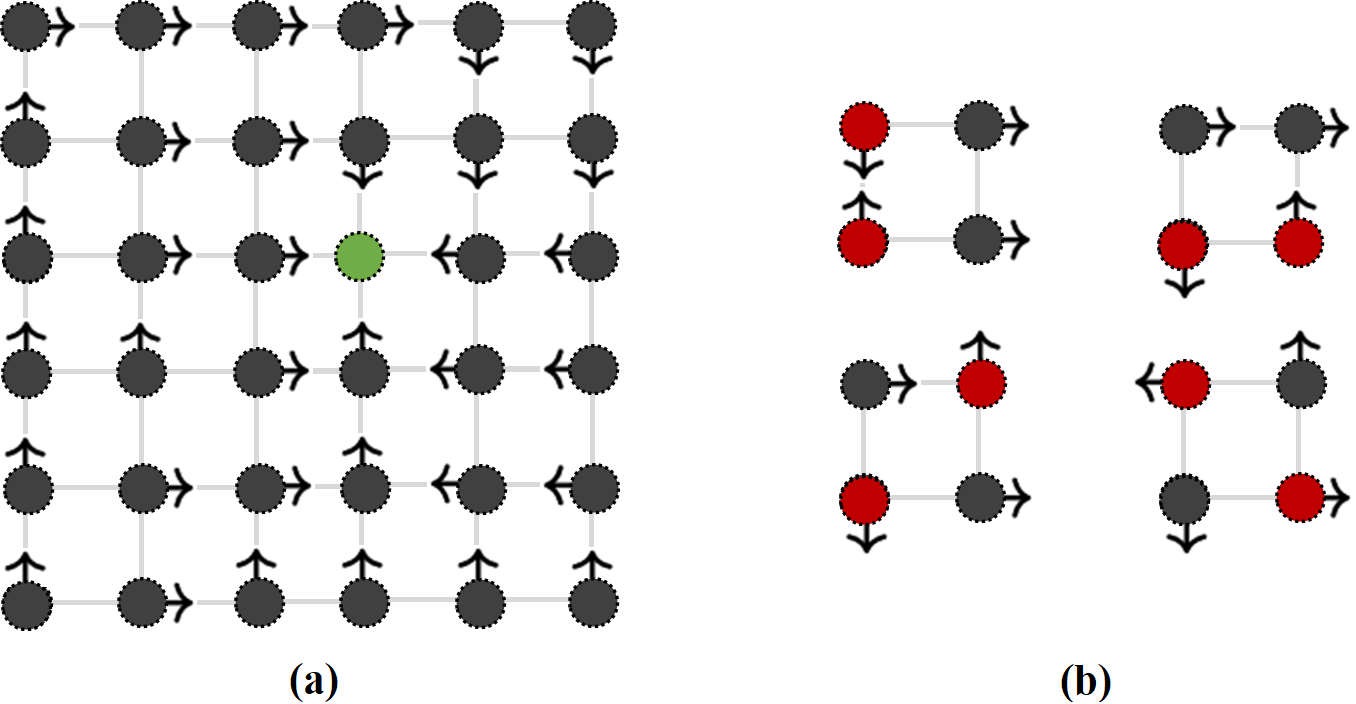}
	\caption{Subfigure (a) shows a bounded and direction preserving function in 2D, on a grid of size $6 \times 6$. The fixed-point is shown in green. Several examples of patterns forbidden by the direction preserving property can be seen in subfigure (b).}
\end{figure}

To compare the difficulty of problems under rounds limit on oracle evaluation, we use the \textit{round-preserving} reduction defined as follows.
\begin{definition}[\textit{round-preserving} reduction]\label{def:RP_reduction}
	A reduction from oracle-based problem \emph{P1} to oracle-based problem \emph{P2} is \textit{round-preserving} if for any instance of problem \emph{P1} with oracle $\mathcal{O}_1$, the instance of problem \emph{P2} with oracle $\mathcal{O}_2$ given by the reduction satisfies that
	\begin{enumerate}
		\item A solution of the \emph{P1} instance can be obtained from any solution of the \emph{P2} instance without any more queries on $\mathcal{O}_1$.
		\item Each query to $\mathcal{O}_2$ can be answered by a constant number of queries to $\mathcal{O}_1$ in one round.
	\end{enumerate}
\end{definition}

The following lemma established the equivalence between the approximate and the discrete version of fixed-point problem.

\begin{lemma}[see section 5~\cite{chen2005algorithms}]
	\bigskip 
	\begin{enumerate}
		\item There is a \textit{round-preserving} reduction such that any instance $(\epsilon, L, d, F)$ of approximate fixed-point is reduced to an instance $(C_1(d) \cdot (L+1) / {\epsilon}, d, f)$ of the discrete fixed-point problem, where $C_1(d)$ is a constant that only depends on the dimension $d$.
		
		\item There is a \textit{round-preserving} reduction with parameter $L > 1$ such that any instance $(n, d, f)$ of discrete fixed-point problem is reduced to an instance $(\epsilon, L, d, F)$ of approximate fixed-point problem, satisfy that $((L-1) / \epsilon) \geq (C_2(d) \cdot n)$ where $C_2(d)$ is a constant that only depends on $d$.
	\end{enumerate}
\end{lemma}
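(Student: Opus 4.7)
The plan is to establish each direction of the equivalence separately, each time exhibiting an explicit oracle transformation and checking both criteria of Definition~\ref{def:RP_reduction}: (i) solutions transfer without extra queries to the original oracle, and (ii) each query to the derived oracle is answered by $O(1)$ queries to the original oracle, submitted in a single round.

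For part (1), given the continuous instance $(\epsilon, L, d, F)$, set $n := \lceil C_1(d)\cdot (L+1)/\epsilon\rceil$ and identify $\Vec{x}\in [n]^d$ with the continuous point $\Vec{x}/n\in[0,1]^d$. Let $\Vec{\delta}(\Vec{x}) := F(\Vec{x}/n)-\Vec{x}/n$ be the displacement. Define $f(\Vec{x}) = \mathbf{0}$ whenever $\|\Vec{\delta}(\Vec{x})\|_\infty \leq \epsilon$ (so $\Vec{x}/n$ is already an $\epsilon$-fixed point), and otherwise set $f(\Vec{x})$ to the signed unit vector $\pm \Vec{e}^{i^*}$ where $i^*$ is the first coordinate attaining $\|\Vec{\delta}(\Vec{x})\|_\infty$ and the sign is that of $\delta_{i^*}(\Vec{x})$. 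I would then verify the two required properties of $f$: boundedness is enforced by a standard boundary tweak (flip the sign of $f$ near faces of the grid to avoid leaving $[n]^d$, which is compatible with the definition), and direction-preservation follows because, for $\|\Vec{x}-\Vec{y}\|_\infty\leq 1$, Lipschitz continuity gives $\|\Vec{\delta}(\Vec{x})-\Vec{\delta}(\Vec{y})\|_\infty \leq (L+1)/n \leq \epsilon/C_1(d)$; picking $C_1(d)$ large enough (depending only on $d$) ensures that the dominant coordinates of $\Vec{\delta}(\Vec{x})$ and $\Vec{\delta}(\Vec{y})$ cannot disagree by more than one unit vector. A zero $\Vec{x}^*$ of $f$ gives directly $\|F(\Vec{x}^*/n)-\Vec{x}^*/n\|_\infty \leq \epsilon$, so solutions transfer. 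Finally, one query to $f(\Vec{x})$ is answered by a single query $F(\Vec{x}/n)$, so the reduction is round-preserving.

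For part (2), given the discrete instance $(n, d, f)$, rescale the grid so that $\Vec{x}\in [n]^d$ maps to $\Vec{x}/n \in [0,1]^d$, and extend the point map $\Vec{x}\mapsto (\Vec{x}+f(\Vec{x}))/n$ to a continuous function $F:[0,1]^d\to[0,1]^d$ by multilinear interpolation inside each grid cell (the standard multilinear extension on the $2^d$ cube corners). Boundedness of $f$ keeps $F$ inside $[0,1]^d$. The Lipschitz constant of $F$ is at most $L = 1 + O_d(1)\cdot (\text{max cellwise slope}) = 1 + \Theta_d(1)$ when we choose $\epsilon = 1/(C_2(d)\cdot n)$ with an appropriate $C_2(d)$, giving $(L-1)/\epsilon \geq C_2(d)\cdot n$. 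To recover a zero of $f$ from an $\epsilon$-fixed point $\Vec{x}^*$ of $F$: locate the grid cell containing $\Vec{x}^*$, evaluate $f$ at its $2^d$ corners (in a single round, $O(1)$ queries), and use the direction-preserving property to argue via case analysis that at least one corner must satisfy $f = \mathbf{0}$. Specifically, if no corner had $f=\mathbf{0}$, direction-preservation would force all $2^d$ displacement vectors to share the same dominant signed direction $\pm\Vec{e}^i$; but then multilinear interpolation keeps $F(\Vec{x}^*)-\Vec{x}^*$ bounded away from zero in coordinate $i$ by $\Omega(1/n) > \epsilon$, a contradiction. Each query to $F(\Vec{x})$ is simulated by the $2^d$ parallel queries to $f$ at the corners of the containing cell, which is $O(1)$ queries in one round.

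The main obstacle I anticipate is the analysis in part (2): pinning down $C_2(d)$ so that the Lipschitz bound, the multilinear interpolation error, and the case analysis recovering a zero of $f$ from an approximate fixed point all line up. In particular, the direction-preserving case analysis at the corners of a cell must be carried out carefully to cover all sign patterns of $f$ on $2^d$ corners and show that no consistent nonzero labelling can correspond to an $\epsilon$-fixed point in the interior; a clean way is to group corner labels by the dominant axis and apply a discrete intermediate-value argument in each axis separately. Both parts rely only on the geometric interpolation or discretization, so the existing classical construction of Chen and Deng~\cite{chen2005algorithms} carries over once we check the two-stage oracle simulation has $O(1)$ queries per round, which is the new ingredient required by Definition~\ref{def:RP_reduction}.
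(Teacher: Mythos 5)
The paper does not reprove this lemma; it imports both reductions from Chen and Deng~\cite{chen2005algorithms} and adds only one new observation (the Remark following the lemma), which is precisely the point where your proposal breaks. Your part (2) recovers a zero of $f$ from the $\epsilon$-fixed point $\Vec{x}^*$ by querying $f$ at the $2^d$ corners of the cell containing $\Vec{x}^*$. But condition 1 of Definition~\ref{def:RP_reduction} demands that the solution of the discrete instance be obtained from $\Vec{x}^*$ \emph{without any further queries to} $\mathcal{O}_1$; your corner queries are exactly the ``one more round'' that the paper's Remark identifies in the original Chen--Deng reduction and explicitly removes. This is not cosmetic: the lemma is used to push the $k$-round lower bound for the discrete problem to the $k$-round approximate problem, and an extra round would only yield the $(k+1)$-round discrete lower bound, which for constant $k$ has a strictly smaller exponent in $\Theta\bigl(n^{(d^{k+1}-d^k)/(d^k-1)}\bigr)$. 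The fix is the one in the Remark: take $\epsilon$ small enough (e.g.\ $\epsilon = (L-1)/(n\cdot d\cdot 2^d)$) so that the grid point nearest to $\Vec{x}^*$ in $L_\infty$ is \emph{guaranteed} to be a zero of $f$, with zero additional queries.

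A secondary error in the same step: you claim that if no corner of the cell is a zero then direction-preservation forces all $2^d$ corner values to share one dominant signed direction. That is false --- direction-preservation only forbids \emph{opposite} unit vectors $\pm\Vec{e}^i$ at points within $L_\infty$-distance $1$, so a cell may mix, say, $\Vec{e}^1$ and $\Vec{e}^2$. The conclusion is still salvageable: since no coordinate receives contributions of both signs, the multilinearly interpolated displacement has $\ell_1$-norm exactly $1/n$, hence $\ell_\infty$-norm at least $1/(nd)$, which exceeds $\epsilon$ for the right choice of $C_2(d)$. Your part (1) is essentially the standard Chen--Deng discretization and is fine as a sketch.
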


\begin{remark}
	The original reduction from the discrete fixed-point problem to the approximate the version in \cite{chen2005algorithms} will take one more round of queries of the function $f$ of discrete fixed-point problem after getting the solution point $\Vec{x}^*$ of approximate fixed-point problem. However, these extra queries can be avoided if $\epsilon$ is small enough. E.g., by taking $\epsilon = (L-1) / {(n \cdot d \cdot 2^d)}$, the closest grid point (in $L_{\infty}$-norm) to the point $\Vec{x}^*$ will be a zero point of $f$ under the construction in \cite{chen2005algorithms}.
\end{remark}

Since $d$ and $L$ are constants independent of $\epsilon$, we can study both the upper bound and lower bound of the discrete fixed-point problem first, then replace ``$n$'' with ``$1 / \epsilon$'' to get the bound for the approximate fixed-point computing problem.

\subsection{Bounds for Brouwer}

We prove the bounds for the discrete version of Brouwer first, and then obtain the bounds for the continuous problem.

\begin{theorem}(discrete Brouwer fixed-point)\label{thm:5}
	Let $k$ be a constant. The query complexity of computing a discrete Brouwer fixed-point on the $d$-dimensional grid $[n]^d$ in $k$ rounds is $\Theta\left(n^{\frac{d^{k+1} - d^k}{d^k - 1}}\right)$, for both deterministic and randomized algorithms.
\end{theorem}

Theorem~\ref{thm:brouwer_main} will follow by taking $\epsilon = 1/n$.

\subsubsection{Upper Bound for Brouwer}

Our constant rounds algorithm generalizes the divide-and-conquer algorithm in \cite{chen2005algorithms} in the same way as we generalizing the local search algorithm in \cite{llewellyn1989local} in Section~\ref{sec:LS.ConstAlg}. 
We first present several necessary definitions and lemmas in \cite{chen2005algorithms}.

\begin{definition}[bad cube; see Definition 6 in~\cite{chen2005algorithms}]
	A zero dimensional unit cube $C^{0}=\{\Vec{x}\}$ is bad if $f(\Vec{x}) = \Vec{e}^1$.
	
	For each $i \geq 1$, an $i$-dimensional unit cube $C^i \subset [n]^d$ is bad with respect to function $f: [n]^d \rightarrow \{\mathbf{0}, \pm \Vec{e}^1, \pm \Vec{e}^2, \ldots, \pm \Vec{e}^d\}$ if 
	\begin{enumerate}
		\item $\{ f(\Vec{x}): \Vec{x}\in C^i \} = \{\Vec{e}^1,  \Vec{e}^2, \ldots, \Vec{e}^{i+1}\}$  
		\item the number of bad $(i-1)$-dimensional unit cubes in $C^i$ is odd.
	\end{enumerate}
\end{definition}

The following theorem on the boundary condition of the existence of the solution is essential for the design of the divide-and-conquer based algorithm.

\begin{theorem}[see Theorem 3 in~\cite{chen2005algorithms}]\label{lem:badc}
	A $(d-1)$-dimensional unit cube $C$ is on the boundary of a $d$-dimensional cube $C'$ if every point in $C$ is on the boundary of $C'$.
	
	If the number of bad $(d-1)$-dimensional unit cubes on the boundary of the $d$-dimensional cube $C$ is odd, then the bounded direction-preserving function $f$ has a zero point within $C$. 
\end{theorem}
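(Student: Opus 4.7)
The plan is to prove Lemma~\ref{lem:badc} by a global path-following (parity) argument on $C$, reduced to a local parity lemma that I would establish by induction on dimension. The overall idea is the standard ``bad-cube door'' argument used throughout the discrete Brouwer literature, but carried out honestly for this nested definition of badness.

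First I would prove a \emph{local parity lemma}: for every $d$-dimensional unit cube $C^d\subseteq[n]^d$ that contains no zero of $f$, the number of bad $(d-1)$-dimensional unit sub-cubes of $C^d$ is even. The proof is by induction on $i\ge 1$: in every $(i+1)$-dim unit cube $C^{i+1}\subseteq[n]^d$ that contains no zero, the number of bad $i$-dim unit sub-cubes is even. The direction-preserving property implies that adjacent vertices of a unit cube never carry antipodal labels $e^j$ and $-e^j$, which makes the admissible labelings extremely constrained. For the inductive step I would exhibit a canonical fixed-point-free involution on the bad $i$-dim sub-cubes of $C^{i+1}$: given a bad $i$-dim sub-cube $F$ whose label set is exactly $\{e^1,\dots,e^{i+1}\}$, translate $F$ along the coordinate direction orthogonal to $F$ inside $C^{i+1}$; using direction preservation and the absence of a zero in $C^{i+1}$, one shows that the translated face $F'$ still has label set $\{e^1,\dots,e^{i+1}\}$ and that its count of bad $(i-1)$-sub-cubes has the same parity as that of $F$ (here the induction hypothesis, applied to the $(i+1)$-dim ``slab'' between $F$ and $F'$, supplies the parity). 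Hence $F\mapsto F'$ pairs bad $i$-sub-cubes into two-element orbits. The base case $i=0$ is immediate from the definition (a $0$-dim cube is bad iff its single label is $e^1$) combined with direction preservation on an edge.

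Second I would apply this local parity lemma globally. Define an auxiliary multigraph $G=(V,E)$ whose vertices are the $d$-dimensional unit sub-cubes of $C$ together with an extra ``outside'' vertex $\star$, and place one edge for each bad $(d-1)$-dim unit cube: an internal bad $(d-1)$-cube connects its two adjacent $d$-cubes, while a bad $(d-1)$-cube lying on $\partial C$ connects its unique $d$-cube to $\star$. By construction $\deg_G(\star)$ equals the number of bad $(d-1)$-cubes on $\partial C$, which is odd by hypothesis. The handshake lemma then forces some $d$-dim unit sub-cube $C^d\subseteq C$ to have odd degree, i.e., to possess an odd number of bad $(d-1)$-dim sub-cubes. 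By the contrapositive of the local parity lemma, $C^d$ must contain a zero of $f$, and that zero lies in $C$, as required.

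The main obstacle is the local parity lemma. The delicate point is that the definition of bad is itself a nested parity condition combined with a Sperner-like label constraint, so the involution used to pair bad $i$-sub-cubes must preserve both. Verifying that the translated face $F'$ is again bad requires (a) checking that $F'$ still has label set $\{e^1,\dots,e^{i+1}\}$, which forces a careful argument using direction preservation to rule out appearances of $-e^j$ or $e^{i+2}$, and (b) transferring the parity of bad $(i-1)$-sub-cubes from $F$ to $F'$ via the induction hypothesis applied to the ``slab'' cube connecting them. The only obstruction to this transfer is the presence of a zero in the slab, which is ruled out by the assumption on $C^{i+1}$. Once this combinatorial bookkeeping is set up cleanly, the rest of the proof is the one-paragraph handshake argument above.
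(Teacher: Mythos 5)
This statement is not proved in the paper at all: it is imported verbatim from Chen and Deng~\cite{chen2005algorithms} (their Theorem~3), so there is no in-paper argument to compare against and your attempt must stand on its own. Your global step is sound and is indeed the standard route: the multigraph on unit $d$-cubes plus an outside vertex $\star$, with one edge per bad $(d-1)$-dimensional unit cube, together with the handshake lemma, correctly reduces the theorem to the local claim that a unit $d$-cube containing no zero of $f$ has an even number of bad $(d-1)$-dimensional faces. That is the right key lemma.

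The gap is in your proof of that lemma: two of its ingredients are false. First, the inductive statement fails below the top dimension, so the base case and the ``slab'' applications of the induction hypothesis are unavailable. Take the edge $\{\Vec{x},\Vec{x}+\Vec{e}^2\}$ in $[n]^2$ with $f(\Vec{x})=\Vec{e}^1$ and $f(\Vec{x}+\Vec{e}^2)=\Vec{e}^2$; this is direction-preserving, contains no zero, yet has exactly one bad $0$-dimensional sub-cube. Second, the opposite-face involution does not preserve badness even at top dimension. In the unit square with $f(\Vec{x})=f(\Vec{x}+\Vec{e}^2)=f(\Vec{x}+\Vec{e}^1+\Vec{e}^2)=\Vec{e}^1$ and $f(\Vec{x}+\Vec{e}^1)=\Vec{e}^2$ (direction-preserving, no zero), the two bad edges are exactly the two edges incident to the vertex labeled $\Vec{e}^2$; the face opposite to either of them has label set $\{\Vec{e}^1\}$ and is not bad. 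So your claim that the translated face $F'$ inherits the label set $\{\Vec{e}^1,\dots,\Vec{e}^{i+1}\}$ of $F$ is simply not forced by direction preservation plus the absence of a zero, and the parity transfer through the slab collapses with it. The count in this example is still even, but the pairing certifying it matches \emph{adjacent} faces sharing the $\Vec{e}^2$-labeled vertex, not parallel ones; the actual proof in \cite{chen2005algorithms} uses a different pairing/path-following structure on almost-bad sub-cubes. As written, the core of your argument does not go through.
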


The final piece that enables us to use a divide-and-conquer approach is that we can pad the original problem instance on the grid $[n]^d$ to a larger grid: $\{0, 1, \ldots, n+1 \}^d$, and make sure that the new instance contains exactly one bad $(d-1)$-dimensional unit cube on its boundary, and no new solution is introduced. Let $f$ and $f'$ be the function for the original and the new instance, respectively. Then $f'(\Vec{x}) = f(\Vec{x})$ for any $\Vec{x} \in [n]^d$; for any other point $\Vec{x}=(x_1, \ldots, x_d)$, let $i$ be the largest number that $x_i \in \{0, n+1\}$, we have $f'(\Vec{x}) = \Vec{e}_i$ if $\Vec{x}_i = 0$ and $f'(\Vec{x}) = -\Vec{e}_i$ otherwise. The correctness of this reduction is showed in Lemma 5 of \cite{chen2005algorithms}.

\newpage 

\paragraph{Algorithm for Brouwer.}
\begin{enumerate} 
	\item  Initialize the cube $C_0$ as $\{0, 1, \ldots, n+1 \}^d$; for each $0 \leq i < k$, set $\ell_i = n^{\frac{d^k-d^i}{d^k-1}}$
	\item In each round $i \in \{1, \ldots, k-1\}$:
	\begin{itemize}
		\item $\; \; \;$  Divide the current cube $C_{i-1}$ into sub-cubes $C_i^1, \ldots, C_{i}^{n_i}$ of side length $\ell_i$ that cover $C_{i-1}$. These sub-cubes have mutually exclusive interior, but each $(d-1)$-dimensional unit cube that is not on the boundary of cube $C_{i-1}$ is either $(i) $ on the boundary of two sub-cubes $C_i^{j_1}, C_i^{j_2}$ or $(ii)$ not on the boundary of any sub-cubes
		\item $\; \; \;$ Query $f'$ with all the points on the boundary of sub-cubes $C_i^1, \ldots, C_{i}^{n_i}$
		\item $\; \; \;$  Set $C_i = C_i^j$, where $C_i^j$ is the sub-cube that has odd number of bad $(d-1)$-dimensional unit cubes on its boundary. Choose arbitrary one if there are many
	\end{itemize}
	\item In round $k$, query all the points in the current cube $C_{k-1}$ and get the solution point.
\end{enumerate} 

\medskip 
\begin{proof}[Upper Bound of Theorem~\ref{thm:5}]
	Notice that for any $i \in \{1, \ldots, k-1\}$, the sum of the numbers of bad $(d-1)$-dimensional cubes on the boundary of $C_i^1, \ldots, C_{i}^{n_i}$ is still odd, since each bad cubes that is not on the boundary of $C_{i-1}$ is counted twice. By the parity argument, $C_i$ always exists. Finally, Lemma~\ref{lem:badc} guarantees that there exists a solution in cube $C_{k-1}$, which concludes the correctness of this algorithm.
	The calculations in Section~\ref{sec:LS.ConstAlg} give that this algorithm makes $O\bigl(n^{\frac{d^{k+1} - d^k}{d^k - 1}}\bigr)$ queries in total. 
\end{proof}

\subsubsection{Randomized Lower Bound for Brouwer}\label{sec:BR.LB}

We reduce the local search instances \emph{generated by staircases}
to discrete Brouwer fixed-point instances in this section, and thus prove the lower bound part of Theorem~\ref{thm:5}.

We use the problem $GP$ defined in \cite{chen2007paths} as an intermediate problem in the reduction. 
\begin{definition}[$GP$; see Section 2.2 in~\cite{chen2007paths}]
	A graph directed $G=([n]^d,E)$ is grid PPAD graph if
	\begin{enumerate}
		\item the underlying undirected graph of $G$ is a subgraph of grid graph defined on $[n]^d$;
		\item there is one directed path $\mathbf{1} = \Vec{x}_0 \rightarrow \Vec{x}_1 \rightarrow \ldots \rightarrow \Vec{x}_{T-1} \rightarrow \Vec{x}_T$ with no self-intersection in the graph.
		Any other point outside of the path is an isolated point.
	\end{enumerate}
	
	
	The structure of $G$ is accessed by the mapping function $N_G(\Vec{x})$ from $[n]^d$ to $([n]^d \cup \{``no" \}) \times ([n]^d \cup \{``no'' \})$, such that
	\begin{enumerate}
		\item $N_G(\Vec{x}_0) = (``no", \Vec{x}_1)$;
		\item $N_G(\Vec{x}_T) = (\Vec{x}_{T-1}, ``no")$;
		\item $N_G(\Vec{x}_i) = (\Vec{x}_{i-1}, \Vec{x}_{i+1})$, for all $i \in \{1, \ldots, T-1\}$; 
		\item $N_G(\Vec{x}) = (``no", ``no")$, otherwise;
	\end{enumerate}
	
	Then $GP$ is the following search problem: Given $n, d$ and the function $N_G(\Vec{x})$, find the end of the path $\Vec{x}_T$.
\end{definition}

The following lemma in \cite{chen2007paths} shows that we can get the lower bound of discrete fixed-point problem from the lower bound of problem $GP$ directly.
\begin{lemma}[see Theorem 3.2 in~\cite{chen2007paths}]\label{lem:GP2BR}
	There is a \textit{round-preserving} reduction such that any instance $(n, d, N_G)$ of problem $GP$ is reduced to an instance $(24n+7, d, f)$ of discrete fixed-point problem.
\end{lemma}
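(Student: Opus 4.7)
The plan is to verify that the classical Chen--Teng reduction from $GP$ to discrete Brouwer fixed-point (Theorem~3.2 of \cite{chen2007paths}) is actually round-preserving in the sense of Definition~\ref{def:RP_reduction}, i.e., that it satisfies the two additional properties (queries to $f$ can be simulated with $O(1)$ parallel queries to $N_G$, and the $GP$ solution can be extracted from the Brouwer solution without further $N_G$ queries). Since the reduction and its correctness are already established in \cite{chen2007paths}, the only task is to bound the query locality.

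First I would recall the Chen--Teng construction. Given a grid PPAD graph $G$ on $[n]^d$, one embeds $[n]^d$ into the larger grid $[24n+7]^d$ by a constant blow-up: each original vertex $\Vec{x}\in[n]^d$ is mapped to a block of side length $24$ (plus some padding on the boundary to force a single bad boundary cube), and inside each block one prescribes a direction-preserving, bounded function $f$ whose local values depend only on the local structure of the path through that block. Specifically, $f$ in a block centered at $\Vec{x}$ is entirely determined by $N_G(\Vec{x})$ together with $N_G$ at the $O(1)$ grid neighbours of $\Vec{x}$ that the block touches --- this is precisely the feature that makes the construction ``local'' and hence direction-preserving across block boundaries.

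With this picture in hand, verifying property~2 of a round-preserving reduction is straightforward: to answer a single query $f(\Vec{y})$ for $\Vec{y}\in[24n+7]^d$, one identifies the unique block containing $\Vec{y}$ (a purely arithmetic operation on the coordinates of $\Vec{y}$), reads off the $O(1)$ nearby grid points $\Vec{x}_1,\ldots,\Vec{x}_c\in[n]^d$ whose $N_G$-values the block's definition depends on, and submits those $c=O(1)$ queries to $N_G$ in a single round. Combining the results yields $f(\Vec{y})$ without any further interaction. Property~1 is handled by the Chen--Teng correctness proof: every zero of $f$ lies in the block corresponding to the endpoint $\Vec{x}_T$ of the PPAD path, so one recovers $\Vec{x}_T$ from the coordinates of the Brouwer solution by integer division (i.e.\ reading the block index), again with zero additional $N_G$ queries.

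The main obstacle, if any, is purely bookkeeping: one must check that the Chen--Teng construction really is defined block-wise with constant-radius dependence on $N_G$, rather than (say) requiring knowledge of the full path to construct the function. This follows from the fact that $N_G(\Vec{x})$ already returns the two neighbours of $\Vec{x}$ on the path, which is exactly the information needed to route the ``flow'' through the block at $\Vec{x}$ in Chen--Teng's gadget. Once this locality is spelled out, the two round-preservation conditions follow immediately, completing the proof of the lemma.
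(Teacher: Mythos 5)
The paper does not actually give a self-contained proof of this lemma; it is stated as a citation to Theorem~3.2 of Chen and Teng~\cite{chen2007paths}, with the round-preserving property left implicit as an observation about the known reduction. Your proposal fills in exactly the missing verification, and it does so correctly: the crux is indeed the block-locality of the Chen--Teng gadget, namely that the value of $f$ at any point of the blown-up grid $[24n+7]^d$ is determined by $N_G$ at a constant-radius neighbourhood of the corresponding original vertex, which is what property~2 of Definition~\ref{def:RP_reduction} requires. Your observation that $N_G(\Vec{x})$ already supplies the two path-neighbours of $\Vec{x}$, which is precisely the information the gadget needs to route the flow through the block, is the right way to justify this locality without re-deriving the whole construction. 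Property~1 (recovering $\Vec{x}_T$ by reading off the block index of the zero of $f$, with no further oracle calls) is likewise correct and matches the Chen--Teng correctness statement. In short, your proposal is a sound expansion of the argument the paper delegates to the citation, and it proceeds along the only natural route.
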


The remaining work is establishing the lower bound of problem $GP$ by reducing the local search instances generated by staircases to $GP$. 

\begin{lemma}\label{lem:LS2GP}
	There is a \textit{round-preserving} reduction such that any local search instance $(n, d, f)$ \emph{generated by possible staircase} defined in Section~\ref{sec:LS.ConstLB}
	is reduced to an instance $(n, d, N_G)$ of problem $GP$.
\end{lemma}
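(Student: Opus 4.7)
The plan is to let the directed path in the GP instance be literally the staircase path underlying $f$. Since the staircases of Section~\ref{sec:LS.ConstLB} are monotone in every coordinate (each folded segment only advances one coordinate at a time in a fixed direction), the staircase forms a simple sequence of grid-adjacent vertices $\Vec{p}_0 = \mathbf{1}, \Vec{p}_1, \ldots, \Vec{p}_T$, with $f(\Vec{p}_t) = -t$ for every $t < T$, while every off-staircase vertex has $f$-value equal to its non-negative $L_1$-distance to $\mathbf{1}$. To answer an oracle call $N_G(\Vec{x})$ I would fire, in a single round, one query to $f(\Vec{x})$ together with one query at each of the at most $2d$ grid neighbours of $\Vec{x}$; this matches the $O(1)$-per-round budget of Definition~\ref{def:RP_reduction}.

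The response would then be computed by a purely local rule. If $\Vec{x} = \mathbf{1}$, return $(``no", \Vec{y})$ where $\Vec{y}$ is the unique neighbour with $f(\Vec{y}) = -1$. If $f(\Vec{x}) = -j$ with $j \geq 1$, return $(\Vec{y}_{-}, \Vec{y}_{+})$ where $\Vec{y}_{-}$ and $\Vec{y}_{+}$ are the neighbours of $\Vec{x}$ with $f$-values $-(j-1)$ and $-(j+1)$, replacing either component by $``no"$ whenever no such neighbour exists. Otherwise (i.e.\ $f(\Vec{x}) \geq 0$ and $\Vec{x} \neq \mathbf{1}$), return $(``no", ``no")$. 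Coordinate-wise monotonicity of the staircase makes the sought neighbours $\Vec{y}_{\pm}$ unique whenever they exist, and the non-negativity of $f$ off the staircase prevents off-path vertices from ever being claimed as predecessors or successors of an on-path vertex.

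With this rule in hand I would verify the GP requirements: every declared edge connects $L_1$-adjacent points, so the underlying undirected graph is a subgraph of the grid graph on $[n]^d$; the induced directed walk out of $\mathbf{1}$ follows $\Vec{p}_0 \to \Vec{p}_1 \to \cdots$ without self-intersection by monotonicity; and every vertex off the staircase is isolated via the third branch of the rule. The end of the GP path coincides with the local minimum of $f$ in both branches of the randomised staircase endpoint: if $\Vec{x}_T$ receives value $-T$ then it is both the unique local minimum of $f$ and the GP end-of-path; if instead $\Vec{x}_T$ takes its $L_1$-distance value, then $\Vec{x}_{T-1}$ has no neighbour of value $-T$, so the rule places the GP endpoint at $\Vec{x}_{T-1}$, and a direct check shows that in this branch $\Vec{x}_{T-1}$ is the unique local minimum of $f$.

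The main subtlety to settle carefully is precisely this uniform treatment of the randomised endpoint, as it is what guarantees clause $1$ of Definition~\ref{def:RP_reduction}: the GP solution can be emitted as the local-search solution with no further queries to $f$ in either branch. Clause $2$ is then immediate from the $2d + 1 = O(1)$ parallel $f$-queries simulated per call to $N_G$, so the reduction is round-preserving as required.
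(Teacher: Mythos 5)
Your proof takes essentially the same route as the paper's: let the staircase path itself serve as the GP path, answer each $N_G(\Vec{x})$ call with the $2d+1$ queries $f(\Vec{x})$ and $f$ at $\Vec{x}$'s grid neighbours in a single round, and use the sign and magnitude of $f$ to identify which points are on the path and in which direction the path runs. You merely spell out the local decision rule and the two randomised-endpoint branches more explicitly than the paper does; the underlying argument is identical.
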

\begin{proof}
	Naturally, we can let the staircases\footnote{Recall that in Section~\ref{sec:LS.ConstLB}, the value of the end point $\Vec{x}$ of the staircases may be a positive value with probability $\frac{1}{2}$. In this case, $\Vec{x}$ is not considered to be on the path, and the previous point is the end of the path in problem $GP$.}
	in the local search instance to be the path in the $GP$ instance.
	Given a local search problem instance on grid $[n]^d$ with value function $f$ that is constructed by staircase as in Section~\ref{sec:LS.ConstLB}, we reduce it to a $GP$ instance on grid $[n]^d$ with function $N_G(\Vec{x})$.
	
	Each query of $N_G(\Vec{x})$ is answered by at most $2d+1$ of queries of value function $f$ for $\Vec{x}$ and all its neighbors. From these queries of $f$, 
	\begin{itemize}
		\item $\; \; \;$ if $f(\Vec{x}) > 0$, let $\Vec{x}$ be an isolated point, i.e., $N_G(\Vec{x}) = (``no", ``no")$;
		\item $\; \; \;$ otherwise, $\Vec{x}$ should be on the path. If $f(\Vec{x})=0$, $\Vec{x}$ is the start of the path; if $\Vec{x}$ is a local optima, $\Vec{x}$ is the end of the path.
		The direction within the paths is from the point with higher value to the point with lower value, and there are at most one neighbor with higher value and at most one neighbor with lower value for any value function $f$ constructed in Section~\ref{sec:LS.ConstLB}.
		Thus we can answer $N_G(\Vec{x})$ correctly by the values of $\Vec{x}$ and its neighbor.
	\end{itemize}
	
	Since the start point and the end point of a staircase are unique under the construction in Section~\ref{sec:LS.ConstLB}, $\Vec{x}_T$ is the solution of the original local search problem, and the start of the path $\Vec{x}_0$ satisfies $\Vec{x}_0 = \mathbf{1}$.
	This reduction does not change the number of rounds needed, and only increases the number of queries by a constant factor $2d+1$. Thus the reduction above is a \textit{round-preserving} reduction.
\end{proof}

\begin{proof}[Lower Bound of Theorem~\ref{thm:5}]
    Combining Lemma~\ref{lem:GP2BR}, Lemma~\ref{lem:LS2GP}, and Theorem~\ref{thm:1} concludes the proof of the lower bound in Theorem~\ref{thm:5}.
\end{proof}

\section{Local Search and Brouwer Fixed-Point in 1D}\label{sec:1D}

In this section we study the local search and discrete Brouwer fixed-point on one dimensional grid. 

Recall that the white-box version (where the function is given by a polynomial size circuit) of local search is PLS-Complete for any $d \geq 2$.  On the other hand, when $d=1$, a binary search procedure can solve the problem with $O(\log n)$ queries. Therefore, the query complexity in the 1-dimensional case exhibits different properties than that in dimension $d \geq 2$.

\begin{theorem}(1-dimensional case)\label{thm:1D}
	Let $k$ be a constant. The query complexity of computing a local minimum on the $1$-dimensional grid $[n]$ in $k$ rounds is $\Theta\left(n^{1/k}\right)$, for both deterministic and randomized algorithms. 
\end{theorem}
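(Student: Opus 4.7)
The plan for the upper bound is to run a $k$-round $k$-ary search, which is the natural 1D specialization of the divide-and-conquer algorithm of \cite{llewellyn1993local} used in Theorem~\ref{thm:1}. In round $i$ (for $i = 1, \ldots, k-1$) the algorithm queries $\lceil n^{1/k} \rceil$ evenly spaced points inside the current candidate interval, identifies the minimum-value point $p^*$ (breaking ties lexicographically), and sets the new candidate interval to the union of the two sub-intervals adjacent to $p^*$. The same steepest-descent argument as in Section~\ref{sec:local_search_main} shows that some local minimum must lie in this new interval, whose length shrinks by a factor $\Theta(n^{1/k})$ per round. After $k-1$ rounds the candidate interval has length $O(n^{1/k})$, and the final round queries it exhaustively. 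Each round spends $O(n^{1/k})$ queries, giving $O(n^{1/k})$ in total since $k$ is constant.

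For the deterministic lower bound I would use a direct adversary argument. The adversary maintains after round $i$ a \emph{candidate interval} $I_i \subseteq [n]$ containing no queried interior points, with the invariant that for every point $x$ in the interior of $I_i$ there exists a function consistent with all answers so far whose unique local minimum is at $x$. The adversary maintains this invariant by declaring every queried point to have a value strictly above a floating ``anchor'' placed inside $I_i$, consistent with a unimodal function centered on $I_i$. After the $q_i$ queries of round $i$, the adversary sets $I_{i+1}$ to be the largest sub-interval of $I_i$ whose interior is query-free, so that $|I_{i+1}| \geq (|I_i|-2)/(q_i+1)$. Any algorithm that outputs a verified local minimum must have queried that point together with both its neighbors; as long as $|I_k| \geq 3$ the adversary can relocate the local minimum to an unqueried interior point of $I_k$, so the algorithm fails. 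Thus $\prod_{i=1}^{k} (q_i+1) = \Omega(n)$, and AM-GM gives $\sum_i q_i \geq k\bigl(\prod_i q_i\bigr)^{1/k} = \Omega(n^{1/k})$.

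For the randomized lower bound I would apply Yao's minimax theorem to an input distribution obtained by converting the adversary's strategy into an oblivious random construction. Concretely, a length-$k$ 1D staircase is generated by setting $x_0 = 1$ and choosing each $x_i$ uniformly from a window of length $\Theta(n^{1/k})$ to the right of $x_{i-1}$; the value function decreases step-by-step along the induced path and increases strictly away from it, so $x_k$ is the unique local minimum. I would then adapt the \emph{good staircase} counting machinery of Appendix~\ref{sec:LS.ConstLB} to 1D: call a staircase good for a deterministic algorithm $\mathcal{A}$ if, in each of the first $k-1$ rounds, $\mathcal{A}$ learns the location of at most one additional connecting point, and show by a probability-score / union-bound argument analogous to Lemmas~\ref{lem:cost1}--\ref{lem:good2} that a $9/10$ fraction of staircases are good when $\mathcal{A}$ makes $o(n^{1/k})$ queries per round; then by Lemma~\ref{lem:good1}'s argument, $\mathcal{A}$ fails to query $x_k$ in the last round with constant probability.

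The main obstacle is the randomized half. In 1D a plain V-shape $f(y) = |y - x^*|$ lets an algorithm solve for $x^*$ from any two value queries, so the hard distribution must obscure such information. The staircase-with-monotone-values construction above is one way, but the analysis has to rule out shortcuts where the algorithm uses a single value response from inside a staircase segment to jump ahead several connecting points at once; quantifying this requires a careful amortized argument bounding how much ``progress'' one query inside the staircase can buy, analogous in spirit to the parameter $\Gamma$ of Section~\ref{sec:LS.PolyLB.est} but much simpler because the steps are straight and the side length $\ell_i = n^{1/k}$ is fixed across rounds.
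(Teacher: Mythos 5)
Your upper bound is correct and is essentially the paper's (a $k$-ary divide-and-conquer). Your deterministic lower bound via an adversary argument is also sound and yields the right bound, though the paper in fact handles deterministic and randomized in one shot, so there is nothing wrong here, just extra work.

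The genuine gap is in your randomized lower bound. You propose a 1D staircase with $x_0 = 1$ and each step drawn from a window of length $\Theta(n^{1/k})$. After $k$ such steps the endpoint $x_k$ is confined to an interval of length $O(k\,n^{1/k}) = O(n^{1/k})$. Moreover, in 1D the intermediate connecting points are invisible: the staircase covers all of $[1, x_k]$ as a single monotone segment, so the induced value function depends only on $x_k$, and the number of distinct value functions is only $O(n^{1/k})$. A $k$-round decision tree making $q$ queries per round has branching factor $O(q)$ per round, so it distinguishes all $O(n^{1/k})$ instances once $q^k \gtrsim n^{1/k}$, i.e.\ $q \gtrsim n^{1/k^2}$. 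Your distribution therefore proves only $\Omega(n^{1/k^2})$, not $\Omega(n^{1/k})$. The amortized-$\Gamma$ machinery cannot rescue this: the set of target functions is just too small. The fix (and what the paper actually does) is to make the hidden minimum $x^*$ uniform over all of $[n]$, not over a short prefix. Your worry that a plain V-shape $|y - x^*|$ leaks $x^*$ from a single pair of value queries is correct, but the paper sidesteps it without staircases: it sets $f_i(j) = j$ for $j > i$, $f_i(j) = n - j + 1$ for $j < i$, and $f_i(i) = 0$. The value at a queried point $j$ depends only on the queried coordinate and on the sign of $j - i$, never on the magnitude $|j - i|$, so a single query reveals nothing beyond a one-sided comparison. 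With $q$ queries per round the answers partition $\{1,\ldots,n\}$ into at most $2q+1$ cells, so a $k$-round tree has at most $(2q+1)^k$ leaves; requiring this to be $\Omega(n)$ gives the $\Omega(n^{1/k})$ bound directly, for both deterministic and randomized algorithms via Yao. This is dramatically simpler than the good-staircase counting you propose, which in any case degenerates in 1D because there are no ``folded segments'' to hide structure in.
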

Note that in the fully adaptive case, the query complexity of this problem is $\Theta\left(\log n\right)$.

We have similar results for Brouwer.
Here we are given a function $f : [0,1] \to [0,1]$ with Lipschitz constant $L > 1$ and a parameter $\epsilon > 0$ and the goal is to find an $\epsilon$-approximate fixed point of $f$. We first show a bound for the discrete Brouwer problem, which will imply a tight bound for the continuous setting.

\begin{theorem}(Discrete Brouwer in 1D)\label{thm:1D_BR}
	Let $k$ be a constant. Given a bounded and direction-preserving function $f : [n] \to [n]$, the query complexity of computing a discrete Brouwer fixed point in $k$ rounds is $\Theta\left(n^{1/k}\right)$, for both deterministic and randomized algorithms. 
\end{theorem}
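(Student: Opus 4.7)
The plan for the upper bound is a $k$-round generalization of binary search. One first observes that any bounded direction-preserving $f:[n]\to\{-1,0,+1\}$ satisfies $f(1)\in\{0,+1\}$, $f(n)\in\{0,-1\}$, and that the direction-preserving condition forbids adjacent points from taking the values $+1$ and $-1$; hence along $[n]$ the function transitions monotonically from a $+1$-region on the left to a $-1$-region on the right, and each sign flip sandwiches a zero of $f$. The algorithm then maintains an interval $I_i\subseteq[n]$ of length $n^{(k-i)/k}$ known to contain a zero: in round $i+1$ it queries $n^{1/k}$ equally spaced points of $I_i$, and either reads a zero directly (and halts) or narrows $I_{i+1}$ to the sub-interval between two consecutive queried points whose values have opposite sign. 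This shrinks the length by a factor of $n^{1/k}$ per round, so after $k$ rounds $|I_k|=1$ and that single point is output; the total is $k\cdot n^{1/k}=O(n^{1/k})$ queries.

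\textbf{Lower bound.} For the lower bound I would apply Yao's minimax to the hard distribution that samples $x^*\in[n]$ uniformly and sets $f_{x^*}(x)=+1$ for $x<x^*$, $f_{x^*}(x^*)=0$, and $f_{x^*}(x)=-1$ for $x>x^*$; each $f_{x^*}$ is bounded and direction-preserving with unique zero $x^*$, so any correct algorithm must identify $x^*$ exactly. Consider any deterministic $k$-round algorithm with worst-case query budget $T$, and let $L(T,k)$ denote the maximum number of leaves of its decision tree. The key combinatorial observation is that $q$ parallel queries at any single node partition its uncertainty interval into exactly $2q+1$ equivalence classes (one per singleton queried point and one per open gap), giving the recursion $L(T,k)\leq\max_{0\leq q\leq T}(2q+1)\cdot L(T-q,k-1)$ with base $L(T,1)=2T+1$; unrolling by AM--GM yields $L(T,k)\leq(2T/k+1)^k$. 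Since each leaf outputs a single candidate and can be correct for at most one $x^*$, the success probability under uniform $x^*$ is at most $L(T,k)/n$, and requiring this to be at least $2/3$ forces $T\geq(k/2)\bigl((2n/3)^{1/k}-1\bigr)=\Omega(n^{1/k})$ for constant $k$.

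\textbf{Main obstacle.} The subtle step I anticipate is the conversion from the worst-case deterministic lower bound above into the expected-query randomized lower bound demanded by the theorem statement. The plan is the standard two-step argument: Yao's minimax gives, for any randomized algorithm that uses $T$ queries in expectation and succeeds with probability $\geq 2/3$ on every input, a deterministic algorithm of expected query complexity $\leq T$ and success probability $\geq 2/3$ under the uniform-$x^*$ distribution; a Markov-plus-truncation step then transforms it into a deterministic algorithm of worst-case query complexity $O(T)$ and success probability $\Omega(1)$ under uniform $x^*$, to which the decision-tree leaf bound above applies (with the constant $2/3$ replaced by a smaller one). All other steps are short combinatorial or calculus computations and should pose no real difficulty.
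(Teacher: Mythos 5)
Your proof is correct and takes essentially the same approach as the paper: the same uniform hard-instance family $f_{x^*}$ over $[n]$, the same observation that $q$ queried points split the uncertainty into at most $2q+1$ classes, the same $k$-round decision-tree leaf bound (you merely spell out the recursion and AM--GM step that the paper leaves implicit), and the same interval-shrinking algorithm for the upper bound. One small slip in your preamble: a bounded direction-preserving $f$ on $[n]$ need not ``transition monotonically'' from a $+1$-region to a $-1$-region (e.g.\ $+1,0,-1,0,+1,0,-1$ is valid and has three zeros), but this does not affect your algorithm, which only uses the correct fact that the endpoints of each maintained interval have non-escaping directions so some queried gap in it must exhibit a sign change sandwiching a zero.
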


\begin{corollary}(Brouwer in 1D)\label{thm:1D_BR_epsilon}
	Let $k$ be a constant. Given a $L$-Lipschitz continuous function $f : [0,1] \to [0,1]$ with $L > 1$ and $\epsilon > 0$, the query complexity of computing an $\epsilon$-approximate fixed point of $f$ in $k$ rounds is $\Theta\left((1/\epsilon)^{1/k}\right)$, for both deterministic and randomized algorithms. 
\end{corollary}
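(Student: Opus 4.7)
The plan is to derive Corollary~\ref{thm:1D_BR_epsilon} directly from Theorem~\ref{thm:1D_BR} by invoking the two round-preserving reductions between continuous approximate Brouwer and discrete Brouwer stated earlier in Section~\ref{sec:BR} (originally from \cite{chen2005algorithms}). Since the dimension $d=1$ and the Lipschitz constant $L>1$ are both constants, the dimension-dependent factors $C_1(d), C_2(d)$ and $L$ itself collapse into the $\Theta(\cdot)$ notation, so each reduction translates the parameter $n$ of the discrete problem to $\Theta(1/\epsilon)$ in the continuous problem, and vice versa. Because both reductions are round-preserving and require only $O(1)$ queries to the original oracle per simulated query, $k$-round query complexity bounds transfer verbatim up to constant factors.

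For the upper bound, I would take any instance $(\epsilon, L, 1, F)$ and apply the first reduction to obtain a discrete 1D Brouwer instance on the grid $[n]$ with $n = C_1(1)(L+1)/\epsilon = \Theta(1/\epsilon)$. Running the $k$-round algorithm of Theorem~\ref{thm:1D_BR} on this discrete instance uses $O(n^{1/k})$ queries to the discrete oracle $f$, each of which is simulated in one round by $O(1)$ queries to $F$. The resulting algorithm solves the continuous problem in $k$ rounds with $O((1/\epsilon)^{1/k})$ queries.

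For the lower bound, I would fix any constant $L>1$ and, given $n$, choose $\epsilon = \Theta(1/n)$ small enough that $(L-1)/\epsilon \geq C_2(1)\cdot n$, so the second reduction turns a discrete 1D Brouwer instance on $[n]$ into an $\epsilon$-approximate continuous instance with the same $L$. Any $k$-round (randomized) algorithm for the continuous problem with $T(\epsilon)$ queries and success probability at least $2/3$ then yields, by simulation, a $k$-round algorithm for discrete 1D Brouwer with $O(T(\epsilon))$ queries and the same success probability; Theorem~\ref{thm:1D_BR} forces $T(\epsilon) = \Omega(n^{1/k}) = \Omega((1/\epsilon)^{1/k})$. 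The only point requiring a bit of care, and the main (minor) obstacle, is confirming that the second reduction is truly round-preserving in the sense of Definition~\ref{def:RP_reduction} after the remark in Section~\ref{sec:BR} about avoiding the extra round of queries (which is done by taking $\epsilon$ small enough that the nearest grid point to the continuous solution is already a zero of $f$); this is guaranteed by choosing the hidden constant in $\epsilon = \Theta(1/n)$ sufficiently small, after which both the deterministic and randomized bounds follow simultaneously.
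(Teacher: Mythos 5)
Your proposal is correct and matches the paper's (implicit) reasoning: the paper states Corollary~\ref{thm:1D_BR_epsilon} as an immediate consequence of Theorem~\ref{thm:1D_BR} via the round-preserving reductions of~\cite{chen2005algorithms} recalled in Section~\ref{sec:BR}, using exactly the observation that for constant $d$ and $L$ one may ``replace $n$ with $1/\epsilon$''. You also correctly flag the one subtlety --- choosing $\epsilon$ small enough so that the reverse reduction avoids an extra round --- which is precisely the point addressed in the paper's remark following the reduction lemma.
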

Note that in the fully adaptive case, the query complexity of this problem is $\Theta\left(\log 1/\epsilon\right)$.

We conclude the proof of Theorem~\ref{thm:1D} and Theorem~\ref{thm:1D_BR} by a deterministic algorithm in Section~\ref{sec:1D:alg} and a matching lower bound for randomized algorithm in Section~\ref{sec:1D:LB}.  

\subsection{Algorithm in 1D}\label{sec:1D:alg}

The algorithm is a simple adaptation of previous constant rounds algorithms into 1D. Each sub-cube is now a sub-interval, and we take $\ell_i = n^{\frac{k-i}{k}}$ as the length of the sub-interval at the $i$-th round to balance the queries in each round.

\paragraph{Algorithm in constant rounds in 1D}
\begin{enumerate} 
	\item Initialize the current interval as $[n]$.
	\item In each round $i \in \{1, \ldots, k-1\}$:
	\begin{itemize}
		\item $\; \; \;$ Divide the current interval into $n^{1/k}$ of mutually exclusive sub-intervals, and query all the points on the boundary of these sub-intervals.
		\item $\; \; \;$ For local search, set the current interval to be the sub-interval with minimum value in it; for Brouwer, set the current interval to be the one has a fixed-point or both boundary points pointing inwards. Break tie arbitrarily.
	\end{itemize}
	\item In round $k$, query all the points in the current interval and find the solution point.
\end{enumerate} 

\begin{proof}[Upper Bound of Theorem~\ref{thm:1D} and Theorem~\ref{thm:1D_BR}]
	The correctness of the algorithm follows from similar argument as the $d \geq 2$ case: for local search, the steepest descent from the minimal point will not escape from the current sub-interval; for Brouwer, the direction-preserving function in 1D can not change its direction without \emph{stopping} at a fixed-point. 
	
	The total number of queries is at most $(k-1) \cdot 2n^{1/k} + n^{1/k}= O(n^{1/k})$.
\end{proof}

\subsection{Lower Bound in 1D}\label{sec:1D:LB}

\begin{proof}[Lower Bound of Theorem~\ref{thm:1D}  and Theorem~\ref{thm:1D_BR}]
	By Yao's minimax theorem, we first construct a distribution of hard instances and then use standard decision tree method to show that $O(n^{1/k})$ queries are necessary for any deterministic algorithm.
	
	\paragraph{Hard Local Search Instances} A uniform distribution over $n$ possible local search instances $f_1,\ldots, f_n$, where $f_i$ is defined as below:
	\begin{enumerate}
		\item $f_i(i) = 0$;
		\item $f_i(j) = j$, if $j>i$;
		\item $f_i(j) = n-j+1$, if $j<i$
	\end{enumerate}
	
	The only solution of $f_i$ is the point $i$. A second observation is that for any set of $q$ queries points, there are at most $2q+1$ possible results considering all $f_i$, because there are at most $2q+1$ of possible different relative location between the $q$ queried points.
	
	Now consider the decision tree of any $k$-rounds deterministic algorithm $A$. If the number of queries on each node is $o(n^{1/k})$, then every node has at most $o(n^{1/k})$ branches. Since $A$ only has $k$ rounds,
    the number of leaf nodes at the bottom of the decision tree is only $o(n)$, i.e., $A$ only has at most $o(n)$ different possible outputs.
    Recall that there are $n$ possible local search instances with different solutions, thus, $A$ must fail with high probability.
	
	\paragraph{Hard Discrete Brouwer Fixed-Point Instances}
	
	A uniform distribution over $n$ possible instances $f_1,\ldots, f_n$, where $f_i$ is defined as below:
	\begin{enumerate}
		\item $f_i(i) = 0$;
		\item $f_i(j) = -1$, if $j>i$;
		\item $f_i(j) = 1$, if $j<i$
	\end{enumerate}
	
	Any function $f_i$ is bounded and direction-preserving, and the only solution is point $i$. The remaining proof follows from the same argument for local search: There are at most $2q+1$ different outcomes from any set of $q$ queries. Thus, for any $k$-rounds deterministic algorithm $A$, if the number of queries is $o(n^{1/k})$, the number of leaf nodes in the decision tree of $A$ is $o(n)$, which is far from enough to cover all $n$ possible instances.
    
    Therefore, we conclude that $\Omega(n^{1/k})$ queries are necessary for any $k$ rounds algorithm for local search or Brouwer in 1D.
\end{proof}

\end{document}